\begin{document}

\title{Exponential Error Bounds for Information Bottleneck Source Coding Problems}


\author{Han Wu and Hamdi Joudeh
        \thanks{Han Wu was with Eindhoven University of Technology, The Netherlands. He is currently with Tokyo University of Agriculture and Technology, Japan. Email: hanwu@go.tuat.ac.jp.
        Hamdi Joudeh is with Eindhoven University of Technology, The Netherlands. Email: h.joudeh@tue.nl.
        This work was supported in part by the European Research Council (ERC) under Grant 101116550.}
}



\maketitle

\begin{abstract}
We study the information bottleneck (IB) source coding problem, also known as remote lossy source coding under logarithmic loss.
Based on a rate-limited description of noisy observations, the receiver produces a soft estimate for the remote source, i.e., a probability distribution, evaluated under the logarithmic loss.
We focus on the excess distortion probability of IB source coding and investigate how fast it converges to \(0\) or \(1\), depending on whether the rate is above or below the rate-distortion function.
The latter case is also known as the exponential strong converse.
We establish both the exact error exponent and the exact strong converse exponent for IB source coding by deriving matching upper and lower exponential bounds.
The obtained exponents involve optimizations over auxiliary random variables.
The matching converse bounds are derived through non-trivial extensions of existing sphere packing and single-letterization techniques, which we adapt to incorporate auxiliary random variables.

In the second part of this paper, we establish a code-level connection between IB source coding and source coding with a helper, also known as the Wyner-Ahlswede-Körner (WAK) problem.
We show that every code for the WAK problem is a code for IB source coding.
This requires noticing that IB source coding, under the excess distortion criterion, is equivalent to source coding with a helper available at \emph{both} the transmitter \emph{and} the receiver; the latter in turn relates to the WAK problem.
Through this connection, we re-derive the best known sphere packing exponent of the WAK problem, and provide it with an operational interpretation.
\end{abstract}
\section{Introduction}
We study the information bottleneck (IB) source coding problem, also known as remote lossy source coding under logarithmic loss, a two-hop network comprising a source, a helper, and a receiver (see Fig. \ref{fig:source_model_IB_source} further on).
The helper observes the source sequence \(X^n\) through a DMC \(P_{Y|X}\) and forwards a description of its observation \(Y^n\) to the receiver via a noiseless rate-limited link of capacity \(R\).
For every forwarded index from the helper, the receiver produces a ``soft'' estimate \(\hat{P}\) for the source sequence, i.e., a probability distribution, rather than reconstruct a ``hard'' estimate given by a single sequence.
The distortion between a source sequence \(x^n\) and its soft estimate \(\hat{P}\) is measured by the logarithmic loss
\begin{equation}
    \bar{d}(x^n, \hat{P}) \triangleq \frac{1}{n} \log \frac{1}{\hat{P}(x^n)}. \label{eq:intro:log-loss-definition}
\end{equation}
We allow the soft estimate to be any distribution supported on \(\mathcal{X}^n\), not necessarily a product distribution; thus the distortion measure \(\bar{d}(x^n, \hat{P})\) is not additive (or separable) in general.
Soft estimates often arise in inference tasks, and also deserve attention in compression, given their connections \cite{rissanenUniversalCodingInformation1984,merhavUniversalPrediction1998}.

The rate-distortion function of this setting under distortion level \(\Delta\), i.e., the minimum achievable rate under which the average logarithmic loss distortion does not exceed \(\Delta\), is given by
\begin{equation}
    R(\Delta) = \min_{P_{U|Y}} I(Y;U) \qquad \text{s.t.} \quad H(X|U) \leq \Delta, \ X \to Y \to U.
\end{equation}
See, e.g., \cite[Section III. F]{courtadeMultiterminalSourceCoding2014}. This rate-distortion function closely relates to the IB method \cite{tishbyInformationBottleneckMethod1999}, which is why the setting is also known as the IB source coding problem.
We henceforth use the two terms interchangeably and refer to this setting simply as the IB source.
Besides the average distortion, another key criterion for lossy source coding is the excess distortion probability---the probability of source sequences that are reconstructed with distortion exceeding \(\Delta\) (i.e., all \(x^n\) such that \(\bar{d}(x^n, \hat{P}) > \Delta\)).
Both distortion criteria share the same rate-distortion function in the IB source setting, but the latter is better suited for the study of exponential error bounds, and hence is the main focus of this paper.

The connection between the IB method and lossy source coding can be traced back to Gilad-Bachrach \emph{et al.} \cite{gilad-bachrachInformationTheoreticTradeoff2003} and Harremoës and Tishby \cite{harremoesInformationBottleneckRevisited2007}.
Later on, Courtate and Weismann crystallized this connection by formalizing it in terms of soft reconstructions and the logarithmic loss, and further extended it to multiterminal settings \cite{courtadeMultiterminalSourceCoding2014}.
More recently, Shkel and Verdú \cite{shkelSingleshotApproachLossy2018} derived single-shot bounds under both the excess and average logarithmic loss distortion criteria in the non-remote setting, where the encoder directly observes the source.
In doing so, they established a fundamental connection between logarithmic loss under the excess-distortion criterion and list decoding.
This connection is particularly useful to our current work, as we shall see further on.
Additional related literature is reviewed later in this section.
\subsection{Exponential Error Bounds}
Despite extensive studies of lossy source coding under logarithmic loss, exponential error bounds for its excess-distortion probability have received little attention, apart from our preliminary work \cite{joudehErrorExponentsSource2024} where we study the non-remote setting.
To address this gap, we investigate two regimes for the excess distortion probability of the IB source, depending on whether the rate is above or below \(R(\Delta)\), i.e., how fast the excess distortion probability converges to \(0\) or \(1\) for a given rate.
The latter case is also known as the exponential strong converse.
We will derive matching upper and lower exponential bounds for both regimes; thus establish the exact error exponent and strong converse exponent for the IB source.

We note that the error and strong converse exponents for standard remote lossy source coding, i.e., under “hard” reconstructions and additive distortion measures, have been established in \cite{weissmanTradeoffsExcesscodelengthExponent2002} and \cite{wuStrongConverseExponentRemote2025}, respectively.
Nevertheless, the distortion measure in the IB source problem is not additive; therefore, these previous results are not applicable here.
Moreover, unlike the standard case, the exponents in the logarithmic loss setting involve optimizations over auxiliary random variables, which we denote by $U$, making the problem distinct and more challenging, particularly in the converse parts of the proofs.
\subsection{Connections to Source Coding with a Helper}
For reasons that will become clear shortly, we now turn our attention to source coding with a helper, also known as the Wyner-Ahlswede-Körner (WAK) problem.
In this setting, a helper observes a source sequence \(X^n\) through a DMC \(P_{Y|X}\) and forwards a description of its observation \(Y^n\) to a receiver via a noiseless link of capacity \(B\).
Meanwhile, a transmitter describes the source sequence \(X^n\) independently to the receiver via another noiseless link.
From the two descriptions, the receiver attempts to reconstruct the source sequence, whose output will be written as \(\hat{X}^n\).

Let \(R_{\text{h}}(B)\) denote the minimum rate for the transmitter's description such that the reconstruction error probability \(\P\{X^n \neq \hat{X}^n\}\) vanishes.
Wyner \cite{wynerSourceCodingSide1975} and Ahlswede and Körner \cite{ahlswedeSourceCodingSide1975} showed that
\begin{equation}
    R_{\text{h}}(B) = \min_{P_{U|Y}} H(X|U) \qquad \text{s.t.}  \quad  I(Y; U) \leq B, \ X \to Y \to U. \label{eq:characterization_minimum_rate_coded_side_information}
\end{equation}
This minimum rate closely relates to the IB method \cite{tishbyInformationBottleneckMethod1999}, which is why the WAK problem has also been recognized as an IB problem \cite{gilad-bachrachInformationTheoreticTradeoff2003,zaidiInformationBottleneckProblems2020}.
Following the above observation, it is intriguing to ask whether there is a deeper level of connection between the IB source and the WAK problem, beyond their common rate limits.
For example, can coding schemes developed for one problem be applied to the other?

In this paper, we uncover such a code-level connection by showing that every code for the WAK problem is a code for the IB source.
We accomplish this by establishing an equivalence between the IB source (specifically under the excess-distortion criterion) and a more general version of the WAK problem where the helper is connected to \emph{both} the transmitter \emph{and} the receiver.
Through this connection, we re-derive the best known sphere packing exponent of the WAK problem, previously established by Kang and Liu in \cite{kangUpperBoundError2018}, and provide the exponent with an operational interpretation.
\subsection{Contributions and Organization}
We now summarize the main technical contributions of this paper. First, we establish the exact error exponent of the IB source in Theorem \ref{thm:exponent_remote_log_loss} by deriving matching achievability and converse bounds.
For the achievability, we use the type covering lemma \cite{bergerRateDistortionTheory1971,csiszarInformationTheoryCoding2011} and the close connection between the logarithmic loss and list decoding \cite{shkelSingleshotApproachLossy2018} to design the coding scheme.
The error exponent is then derived by investigating the intersection between conditional type classes.
The converse, on the other hand, requires a non-trivial extension of the conventional sphere packing argument, as we elaborate below.
We also show that the error exponent is strictly positive if and only if the rate is above the rate-distortion function; and furthermore, we recover a previous result from \cite{joudehErrorExponentsSource2024} for lossy source coding under logarithmic loss.

A key feature of the error exponent revealed by the achievability proof is that the excess-distortion probability is dominated by a joint type \( Q_{XYU} \) that may not satisfy the Markov chain \( X \to Y \to U \). This is in sharp contrast to the rate-distortion function \( R(\Delta) \), where the Markov chain is essential.
As a result, a conditional mutual information term \( I_Q(X;U|Y) \) appears in the exponent, which has been dubbed the \emph{soft} Markov chain, or Markov penalty, in previous works studying related settings \cite{oohamaExponentialStrongConverse2018,tyagiStrongConverseUsing2020,takeuchiTightExponentialStrong2025}.
Hence, a key challenge in the converse proof is deriving this term.
A direct extension of the change of measure argument---used in the conventional sphere packing approach \cite{haroutunianBoundsExponentProbability1968,martonErrorExponentSource1974}---to multiterminal problems leads to bounds involving optimizations over Markov chains; see, e.g., \cite[Problem 13.7]{csiszarInformationTheoryCoding2011}.
To overcome this limitation,
we carefully select a dummy (or test) source that is not a DMS such that it does not adhere to the Markov chain constraint \( X \to Y \to U \).
Combining this with the technique of marginalizing, we introduce the soft Markov chain term \( I_Q(X;U|Y) \) into the final bound and obtain a tight converse.

Second, we establish the exact strong converse exponent of the IB source in Theorem \ref{thm:strong_converse_exponent_remote_log_loss} by deriving matching achievability and converse bounds.
The achievability scheme is similar to the one in Theorem \ref{thm:exponent_remote_log_loss}, with the main difference being that here we do not restrict the output of covering to satisfy the rate-limit \(R\).
Instead, we handle the rate limit by choosing the best \(e^{nR}\) codewords from the covering set, which enables us to find a lower bound for the probability of interest.
For the converse, we draw inspiration from the converse proof of \cite[Theorem 15.9]{polyanskiyInformationTheoryCoding2025} in the context of binary hypothesis testing.
We generalize and extend this change of measure argument to multiterminal coding problems by introducing an auxiliary random variable \(U\) and the soft Markov chain \(I_Q(X;U|Y)\), and deriving the corresponding single-letter lower bound.
We also show that the exponent is positive if and only if the rate is below the rate-distortion function, and study the special case of lossy source coding under logarithmic loss.

Finally, we establish a code-level connection between the IB source and the WAK problem in Theorem \ref{thm:chp_sphere_packing_exponent_WAK/source_coding_helper_both_sides_equivalent_IB_source} and Corollary \ref{thm:chp_sphere_packing_exponent_WAK/WAK_connections_IB_source}.
In particular, we show that the IB source under the excess distortion criterion is equivalent to source coding with a helper available at both the transmitter and the receiver: each code for one problem can be transformed into a code for the other.
This is done by noticing that the latter problem can be viewed as conditional source coding based on the helper's common description, which then relates to the IB source through the connection between the logarithmic loss and list decoding.
Thus, the two problems share the same error probability and error exponent.
From this, we see that codes for the WAK problem form a subset of codes for the IB source, and the error exponent of the latter becomes an upper bound for all achievable error exponents of the former.
While this upper bound was previously established by Kang and Liu in \cite{kangUpperBoundError2018}, our proof reveals the code-level connection between the IB source and the WAK problem and provides this upper bound with an operational interpretation.

The rest of the paper is organized as follows.
We provide a literature review on the IB source and the WAK problem in the next subsection, followed by describing the notation used in this paper.
In Section \ref{sec:problem_setup}, we describe the IB source problem in further detail and formally present the main subjects of interest in this paper.
All the main results of this paper are presented and discussed in Section \ref{sec:main_results}.
We then discuss some useful preliminary results in Section \ref{sec:preliminaries}, before proceeding to the proofs of our main results from Section \ref{sec:chp_remote_log_loss/achievability} to Section \ref{sec:chp_sphere_packing_exponent_WAK/source_coding_helper_both_sides_equivalent_IB_source}. Proofs of some technical lemmas are deferred to the appendices.
Lastly, we provide concluding remarks and future directions in Section \ref{sec:conclusions}.

\subsection{Literature Review}
\label{sec:chp_intro/literature_review}
Remote lossy source coding was first studied in \cite{dobrushinInformationTransmissionAdditional1962}, with subsequent classical works including, e.g., \cite{wolfTransmissionNoisyInformation1970, bergerRateDistortionTheory1971, witsenhausenIndirectRateDistortion1980}.
Some variants of this problem have been considered in, e.g., multiterminal \cite{yamamotoSourceCodingTheory1980}, the CEO problem \cite{bergerCEOProblem1996,oohamaRatedistortionFunctionQuadratic1998}, Gaussian \cite{oohamaIndirectDirectGaussian2014,eswaranRemoteSourceCoding2019}, nonasymptotic regime \cite{kostinaNonasymptoticNoisyLossy2016,liuNonasymptoticObliviousRelaying2025}, and \(f\)-separable distortion measures \cite{stavrouIndirectRateDistortion2023}.
The connection between remote lossy source coding under logarithmic loss and the IB method is uncovered in \cite{gilad-bachrachInformationTheoreticTradeoff2003,harremoesInformationBottleneckRevisited2007,courtadeMultiterminalSourceCoding2014}.
Further studies on lossy source coding under logarithmic loss consider, e.g., multiterminal settings \cite{courtadeMultiterminalSourceCoding2014,ugurVectorGaussianCEO2020}, single-shot and universal coding \cite{shkelSingleshotApproachLossy2018,shkelUniversalLossyCompression2017,shkelUniversalCompressionList2018}, generalized length functions \cite{wuSoftGuessingLogarithmic2023}, and stationary sources \cite{ulgerOnSourceCoding2025}.
The error exponent and strong converse exponent for lossy source coding are established in \cite{martonErrorExponentSource1974} and \cite[Problem 9.6]{csiszarInformationTheoryCoding2011} respectively, and for remote lossy source coding, the error exponent and strong converse exponent are characterized in  \cite{weissmanTradeoffsExcesscodelengthExponent2002} and \cite{wuStrongConverseExponentRemote2025} respectively.
Error exponents for lossy source coding under logarithmic loss (with side information) are studied in \cite{joudehErrorExponentsSource2024}.
The exponential strong converse for channel coding problems has attracted interests since \cite{arimotoConverseCodingTheorem1973,dueckReliabilityFunctionDiscrete1979}, and for multiterminal source coding problems it has been explored in, e.g., \cite{oohamaUniversalCodingSlepianWolf1994,oohamaExponentialStrongConverse2018,oohamaExponentialStrongConverse2019,takeuchiTightExponentialStrong2025,watanabeTightExponentialStrong2025}.

The rate limit of the WAK problem is established in \cite{wynerSourceCodingSide1975,ahlswedeSourceCodingSide1975}.
The nonasymptotic regime is studied in, e.g., \cite{watanabeNonasymptoticSecondorderAchievability2015,liuDispersionBoundWynerAhlswedeKorner2021}.
A strong converse for the WAK problem via the Gray-Wyner network \cite{graySourceCodingSimple1974} is provided in \cite{watanabeConverseBoundWynerAhlswedeKorner2017}.
The best known achievable error exponent for the WAK problem first appeared in \cite{kellyReliabilitySourceCoding2012}, where an upper bound for all achievable error exponents is also provided.
This achievable error exponent was recently re-drived in \cite{wuErrorExponentsObliviousRelaying2025} by establishing a connection between the WAK problem and oblivious relaying \cite{sanderovichCommunicationDecentralizedProcessing2008}, also known as the IB channel.
The best known upper bound for all achievable error exponents was first derived in \cite{kangUpperBoundError2018}.
The exponential strong converse for the WAK problem is studied in \cite{oohamaExponentialStrongConverse2019,takeuchiTightExponentialStrong2025}.

\subsection{Notation}
\label{sec:chp_intro/notation}
We describe the notation that will be used throughout the paper.
Given a finite alphabet \(\mathcal{X}\), we use \(\mathcal{P}(\mathcal{X})\) to denote the set of all probability mass functions (pmfs) \(P_X\) on \(\mathcal{X}\).
We write \(\bm{x}=(x_1, x_2, \ldots, x_n)\) for an \(n\)-length sequence from \(\mathcal{X}^n\).
A random vector on \(\mathcal{X}^n\)  is denoted by \(\bm{X}=(X_1, X_2, \ldots, X_n)\).
Depending on the context, we may also write \(x^n\) and \(X^n\) instead of \(\bm{x}\) and \(\bm{X}\).
In the same manner, we adopt the notation \(\bm{y} = (y_1, y_2, \ldots, y_n)\) or \(\bm{u}=(u_1, u_2, \ldots, u_n)\), and \(\bm{Y}\) or \(\bm{U}\), on \(\mathcal{Y}^n\) or \(\mathcal{U}^n\) respectively.
All alphabets in this paper are finite.
Following convention, the hat symbol \(\hat{P}\) is used whenever we are looking at the empirical distribution induced by some deterministic sequences.
For example, for a sequence \(\bm{x} \in \mathcal{X}^{n}\), we use \(\hat{P}_{\bm{x} }\) to denote its vector of relative frequencies of all symbols \(x \in \mathcal{X}\), i.e., its type.
\(\hat{P}_{\bm{x}\bm{y}}\) denotes the joint type of a sequence pair \((\bm{x}, \bm{y})\), while \(\hat{P}_{\bm{y} | \bm{x}}\) is the conditional type from \(\bm{x}\) to \(\bm{y}\) induced by  \(\hat{P}_{\bm{x}\bm{y}}\).
The set of all possible types \(\hat{P}_{\bm{x}}\) on \(\mathcal{X}^{n}\) is written as \(\mathcal{P}_n(\mathcal{X})\), while the set of all possible conditional types \(\hat{P}_{\bm{y} | \bm{x}}\) for sequences from \(\mathcal{X}^n\) and \(\mathcal{Y}^n\) is written as \(\mathcal{P}_n(\mathcal{Y} | \mathcal{X})\).
The type class \(\mathcal{T}_n(P_X)\) consists of all sequences \(\bm{x}\) that have the same type \(P_X \in \mathcal{P}_n(\mathcal{X})\).
For a given sequence \(\bm{x}\), the conditional type class \(\mathcal{T}_n(P_{Y|X} | \bm{x})\) is the set of all sequences \(\bm{y}\) such that the conditional type from \(\bm{x}\) to \(\bm{y}\) is \(P_{Y|X} \in \mathcal{P}_n(\mathcal{Y} | \mathcal{X})\).
In a slight abuse of notation, we also use \(\hat{P}\) for the soft estimates at the receiver.
This, however, should create no confusion, as types are indexed by sequences, e.g., \(\hat{P}_{\bm{x} }\), while soft estimates are not.

The entropy of \(P_{X}\) is written as \(H(X)\) or \(H(P_X)\) and the conditional entropy between two random variables \(X\) and \(Y\) is denoted by \(H(Y|X)\) or \(H(P_{Y|X} | P_X)\), while the mutual information between \(X\) and \(Y\) is written as \(I(X;Y)\) or \(I(P_X, P_{Y|X})\).
\(D(Q_X\|P_X)\) is the KL-divergence between two pmfs \(Q_X\) and \(P_X\), and \(D(Q_{Y|X}\|P_{X|Y} | P_X)\) denotes the conditional KL-divergence.
Given an event \(\mathcal{A}\), we use \(P[\mathcal{A}]\) to denote the probability of \(\mathcal{A}\) under the probability measure \(P\), while \(\idc{\mathcal{A}}\) is the indicator function of \(\mathcal{A}\) and \(\abs{\mathcal{A}}\) is its cardinality or size.
Given two sets \(\mathcal{A}\) and \(\mathcal{B}\), we use \(\mathcal{A} - \mathcal{B}\) to denote the elements from \(\mathcal{A}\) but not in \(\mathcal{A} \cap \mathcal{B}\).
For a conditional distribution \(P_{Y|X}\) with \(X \overset{P_{Y|X}}{\to} Y\), we use \(P_X \cdot P_{Y|X}\) to denote the distribution of \(Y\) when the input distribution is \(P_X\).
For a Markov chain \(X \overset{P_{Y|X}}{\to} Y \overset{P_{U|Y}}{\to} U\), we use \(P_{Y|X} \cdot P_{U|Y}\) to denote the conditional distribution between \(X\) and \(U\) through the Markov chain.
Let \(a_n \ndot{\leq} b_n\) if \(\limsup_{n \to \infty}\frac{1}{n}\log (a_n/b_n) \leq 0\) and \(a_n \ndot{\geq} b_n\) if \(\liminf_{n \to \infty}\frac{1}{n} \log (a_n/b_n) \geq 0\).
We will write \(a_n \ndot{=} b_n\) if both hold.
For a positive integer constant \(N\), we use \([N]\) to denote \(\{1,2,\ldots,N\}\).
Let \(|a|^{+} \triangleq \max\{0,a\}\).
The base of exponential and log functions is chosen as the natural base.


\section{Problem setup}
\label{sec:problem_setup}
Consider a pmf \(P_{XY}\).
Assume that we have a DMS pair \((X^n, Y^n)\) following the distribution
\begin{equation}
    P_{X^n Y^n} ( x^n, y^n) = \prod_{i=1}^{n}P_{XY}(x_i, y_i), \qquad \forall (x^n, y^n) \in \mathcal{X}^n \times \mathcal{Y}^n.
\end{equation}
We can think of \(X^n\) as a remote source and \(Y^n\) as a noisy observation through the DMC \(P_{Y|X}\).
As shown in Fig. \ref{fig:source_model_IB_source}, a helper has access to the noisy observation \(Y^n\) and describes it to a receiver through a rate-limited link \(f_n:\mathcal{Y}^n \to [e^{nR}]\).
Given a forwarded index \(m \in [e^{nR}]\), the receiver produces a ``soft'' estimate of the source sequence through a decoder \(g_n: [e^{nR}] \to \mathcal{P}(\mathcal{X}^n) \).
That is, for every forwarded index \(m \in [e^{nR}]\), the receiver produces a pmf as an estimate, denoted by \(\hat{P}_m\), rather than reconstruct a deterministic sequence.
Note that here the soft estimate \(\hat{P}_m\) can be any distribution supported on \(\mathcal{X}^n\); hence is not limited to the product (i.e., memoryless) distributions \(\mathcal{P}(\mathcal{X})^n\).

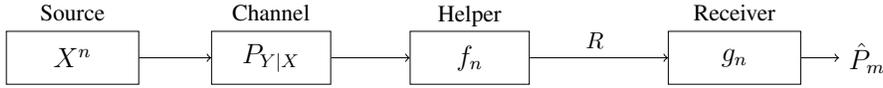
\begin{figure}[hbt!]
    \centering
    \scalebox{0.88}{\begin{tikzpicture}
        \node at (0,0) [draw, rectangle,  name = tx, label= above:{\small Source}, minimum height=0.8cm,minimum width=2cm] {$X^n$};
        \node at (3,0) [draw, rectangle,  name = channel, label= above:{\small Channel}, minimum height=0.8cm,minimum width=1.8cm] {$P_{Y|X}$};
        \node at (6,0) [draw, rectangle,  name = relay, label= above:{\small \smash{Helper} }, minimum height=0.8cm,minimum width=1.8cm] {$f_n$};
        \node at (10,0) [draw, rectangle,  name = rx, label= above:{\small Receiver}, minimum height=0.8cm,minimum width=2cm] {$g_n$};
        \node at (12,0) [name = output] { $\hat{P}_m$ } ;

        \draw [->] (tx.east) -- (channel.west);
        \draw [->] (channel.east) --  (relay.west);
        \draw [->] (relay.east) -- node[align = center, above]{\small $R$ } (rx.west);
        \draw [->] (rx.east) -- (output.west);
\end{tikzpicture}



    \captionsetup{justification=centering}
    \caption{Information Bottleneck Source}
    \label{fig:source_model_IB_source}
\end{figure}

The distortion between a source sequence \(x^n\) and its soft reconstruction \(\hat{P}\) is measured by the logarithmic loss defined in \eqref{eq:intro:log-loss-definition}, which we restate here for convenience:
\begin{equation}
    \bar{d}(x^n, \hat{P}) \triangleq - \frac{1}{n} \log \hat{P}(x^n).
\end{equation}
Since \(\hat{P}(x^n) \leq 1\) for every \(\hat{P}\), we have \(  \bar{d}(x^n, \hat{P}) \geq 0\).
It follows that \(\bar{d}(x^n, \hat{P}) = 0\) if and only if \(\hat{P}(x^n) = 1\), i.e., the receiver produces a correct ``hard'' reconstruction.
As we are interested in exponential error bounds, it is natural to adopt an "excess distortion" formulation, see, e.g., \cite{martonErrorExponentSource1974}. In such a formulation, an excess distortion event (i.e., reconstruction "error" event) occurs if  the logarithmic loss between a source sequence and its soft reconstruction exceeds a pre-specified distortion level $\Delta$.
To this end, we define
\begin{equation}
    p_{\mathrm{e}} (n, R, \Delta) \triangleq \min_{f_n, g_n} \P \{ \bar{d}(X^n, g_n(f_n(Y^n)) ) > \Delta \},
\end{equation}
i.e., the minimum excess distortion probability for a given rate-distortion pair \((R, \Delta)\).
Note that throughout this paper we assume \(\Delta \geq \Delta_{\min}\), where
\begin{equation}
    \Delta_{\min} \triangleq \min_{\varphi_n: \mathcal{Y}^n \to \mathcal{P}(\mathcal{X}^n)} \E [\bar{d}(X^n, \varphi_n(Y^n))] = H(X|Y).
\end{equation}
Next, define the rate-distortion function of the IB source as
\begin{equation}
    R(\Delta) = \min_{P_{U|Y}: H(X|U) \leq \Delta} I(Y;U), \label{eq:chp_remote_log_loss/definition_rate_distoriton_function}
\end{equation}
where we have the joint distribution \(P_{XYU} = P_{X|Y}P_{Y}P_{U|Y}\) and \(|\mathcal{U}| \leq |\mathcal{Y}| + 1\).
In this paper, we are mainly interested in the error and strong converse exponents, formally defined as follows.

\subsection{Error and Strong Converse Exponents}
We first investigate how fast \( p_{\mathrm{e}} (n, R, \Delta) \) decays to \(0\) as \(n\) grows, captured by
\begin{equation}
    E(R, \Delta) \triangleq \lim_{n \to \infty} -\frac{1}{n} \log p_{\mathrm{e}} (n, R, \Delta),
\end{equation}
which is known as the error exponent, or the reliability function.
We will provide a complete characterization for \(E(R, \Delta)\) in this work, and also show that \(E(R, \Delta) > 0\) if and only if \(R > R(\Delta)\).

We next study the behavior of \(p_{\mathrm{e}} (n, R, \Delta)\) in the regime \(R < R(\Delta)\).
From the weak converse, we know that \(p_{\mathrm{e}} (n, R, \Delta)\) is bounded away from zero if \(R < R(\Delta)\).
Here we are interested in an exponential strong converse, i.e., showing that \(p_{\mathrm{e}}(n, R, \Delta)\) converges to \(1\) exponentially fast when \(R < R(\Delta)\), and characterizing the corresponding exponent.

To this end, we now consider the non-excess distortion probability (i.e., "correct" reconstruction probability), defined as
\begin{equation}
    p_{\mathrm{c}} (n, R, \Delta) \triangleq \max_{f_n, g_n} \P \{ \bar{d}(X^n, g_n(f_n(Y^n)) ) \leq \Delta \}.
\end{equation}
Since \(p_{\mathrm{c}}(n, R, \Delta) = 1 - p_{\mathrm{e}}(n, R, \Delta)\), it is equivalent to investigating how fast \(p_{\mathrm{c}} (n, R, \Delta)\) converges to \(0\) as \(n\) grows.
Let
\begin{equation}
    F(R, \Delta) \triangleq \lim_{n \to \infty} - \frac{1}{n} \log p_{\mathrm{c}}(n, R, \Delta),
\end{equation}
which is known as the strong converse exponent.
We will provide a complete characterization of \(F(R, \Delta)\) in this work,  and also show that \(F(R, \Delta) > 0\) if and only if \(R < R(\Delta)\).
\subsection{Differences from Previous Studies}
Before presenting the main results, we clarify the differences between this work and previous studies on the error and strong converse exponents of remote lossy source coding (RLSC).
We first illustrate that the IB source and standard RLSC differ in their problem settings, and then discuss how this difference affects the information-theoretic results (e.g., rate-distortion functions and error exponents) and their proofs.

In standard RLSC, for every forwarded index \(m \in [e^{nR}]\), the receiver reconstructs a deterministic sequence from \(\hat{\mathcal{X}}^n\), where \(\hat{\mathcal{X}}\) is a discrete reconstruction alphabet.
The distortion between a source sequence \(x^n\) and its reconstruction \(\hat{x}^n\) is also assumed to be additive, i.e.,
\begin{equation}
    \bar{d}(x^n, \hat{x}^n) = \frac{1}{n} \sum_{i=1}^{n}d(x_i, \hat{x}_i),
\end{equation}
where \(d: \mathcal{X} \times \hat{\mathcal{X}} \to \mathbb{R} \) is a distortion measure.
The IB source does not fall under the umbrella of this standard setting, because here the reconstruction alphabet \(\mathcal{P}(\mathcal{X}^n)\) is neither a product alphabet nor discrete (in fact, it is a simplex).
Further, the distortion measure \(\bar{d}(x^n, \hat{P})\) is not additive either.
For \(\bar{d}(x^n, \hat{P})\) to be additive, the soft reconstruction \(\hat{P}\) must be a product distribution, which need not be the case in general.
The difference in problem settings leads to different results, featured by an auxiliary random variable appearing in those of the IB source.
The error and strong converse exponents derived in this work also involve an auxiliary random variable, which is absent from the counterpart exponents of RLSC \cite{weissmanTradeoffsExcesscodelengthExponent2002,wuStrongConverseExponentRemote2025}.

We next discuss whether it suffices to consider product distributions in establishing achievability for the IB source.
As far as the rate-distortion function is concerned, Courtade and Weissman \cite[Section III-C]{courtadeMultiterminalSourceCoding2014} noticed that there is no loss in achievable rates when restricting reconstructions to product distributions.
However, the same phenomenon is not immediately observable in the case of error and strong converse exponents.
As we shall see, in the achievability proofs of this paper, the soft reconstructions employed at the receiver are not product distributions; and it is not immediately clear whether the same exponents are achievable using product distributions. This may be worth exploring as future work.

\section{Main Results and Discussions}
\label{sec:main_results}

\subsection{Error Exponent}
We first provide a complete characterization of \(E(R, \Delta)\).
\begin{theorem}
    \label{thm:exponent_remote_log_loss}
    For a DMS pair \(P_{XY}^n\), we have
    \begin{equation}
        E(R, \Delta) = \min_{Q_Y} \! \! \! \max_{\substack{  Q_{U|Y}:  \\ I(Q_Y, Q_{U|Y}) \leq R }   }  \min_{ \substack{  Q_{X|YU} : \\ H_Q(X|U) \geq \Delta }   } D(Q_{XYU} \| P_{XY}Q_{U|Y}), \label{eq:main_results_exponent_remote_log_loss}
    \end{equation}
    where we have the joint distribution \(Q_{XYU} = Q_Y Q_{U|Y} Q_{X|YU}\) and \(|\mathcal{U}| \leq 2|\mathcal{X}| |\mathcal{Y}| + 2\).
\end{theorem}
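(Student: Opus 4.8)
The plan is to establish the two matching bounds separately. For the achievability (upper bound on $p_{\mathrm{e}}$, hence lower bound on $E(R,\Delta)$), I would first invoke the connection between logarithmic loss under excess distortion and list decoding from \cite{shkelSingleshotApproachLossy2018}: producing a soft estimate $\hat P$ with $\bar d(x^n,\hat P)\le\Delta$ for all $x^n$ in a set $\mathcal{L}$ is essentially equivalent (up to the exponential scale) to producing a list of size roughly $e^{n\Delta}$ containing $x^n$. Concretely, for each forwarded index $m$ the decoder should cover a union of conditional type classes whose total size is at most $e^{n\Delta}$. I would then use the type covering lemma \cite{bergerRateDistortionTheory1971,csiszarInformationTheoryCoding2011} applied to the observation sequence $Y^n$: for each type $Q_Y\in\mathcal{P}_n(\mathcal{Y})$ and each choice of test channel $Q_{U|Y}$ with $I(Q_Y,Q_{U|Y})\le R$, cover $\mathcal{T}_n(Q_Y)$ by $\approx e^{nI(Q_Y,Q_{U|Y})}$ "$U$-balls" $\mathcal{T}_n(Q_{Y|U}\,|\,u^n)$; the decoder, on receiving the index of the ball containing $Y^n$, outputs the soft estimate that is (essentially) uniform over $\bigcup_{x^n}\mathcal{T}_n(Q_{X|YU}\,|\,u^n,y^n)$ optimized so that $H_Q(X|U)\ge\Delta$ guarantees the list is no larger than $e^{n\Delta}$. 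The excess distortion event then forces $(X^n,Y^n)$ to have a joint type $Q_{XY}$ (with induced $Q_{XYU}$) for which no admissible $Q_{U|Y}$ and no admissible $Q_{X|YU}$ works; a standard type-counting argument (number of types is polynomial, probability of type class $Q_{XY}$ under $P_{XY}^n$ is $\doteq e^{-nD(Q_{XY}\|P_{XY})}$, and lifting to $Q_{XYU}$ via $Q_{U|Y}$ gives $D(Q_{XYU}\|P_{XY}Q_{U|Y})$) yields exactly the min–max–min in \eqref{eq:main_results_exponent_remote_log_loss}. The key subtlety here, to be handled carefully, is that the dominating joint type $Q_{XYU}$ need \emph{not} satisfy $X\to Y\to U$, which is why the divergence is measured against $P_{XY}Q_{U|Y}$ rather than a Markov measure and why the conditional mutual information $I_Q(X;U|Y)$ surfaces inside $D(Q_{XYU}\|P_{XY}Q_{U|Y})=D(Q_{XY}\|P_{XY})+I_Q(X;U|Y)$.

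For the converse (lower bound on $p_{\mathrm{e}}$, i.e. upper bound on $E(R,\Delta)$), I would run a sphere-packing / change-of-measure argument adapted to the auxiliary variable. Fix any code $(f_n,g_n)$. For each index $m$, the set $\mathcal{D}_m\triangleq\{x^n:\bar d(x^n,\hat P_m)\le\Delta\}$ satisfies $|\mathcal{D}_m|\le e^{n\Delta}$ (since $\hat P_m$ is a probability distribution, $\sum_{x^n\in\mathcal{D}_m}\hat P_m(x^n)\le 1$ and each summand is $\ge e^{-n\Delta}$). So the correctly-decoded set is a union of at most $e^{nR}$ "lists" each of size at most $e^{n\Delta}$, partitioned according to the helper's index $f_n(y^n)$. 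Now I would pick the worst type $Q_Y$ for $Y^n$ and, \emph{crucially}, choose a dummy (test) source on $\mathcal{X}^n\times\mathcal{Y}^n$ that is \emph{not} i.i.d.\ — rather one that matches $Q_Y$ on the $Y$-marginal but whose conditional law $Q_{X|Y}$ (together with an auxiliary $U$ built from the code's index structure) is chosen to violate $X\to Y\to U$, then combine with a marginalization step to extract the penalty term $I_Q(X;U|Y)$. The target is: for the best response $Q_{U|Y}$ (best for the code, worst for us) with $I(Q_Y,Q_{U|Y})\le R$, and the worst $Q_{X|YU}$ with $H_Q(X|U)\ge\Delta$, the probability of decoding error is at least $\doteq e^{-nD(Q_{XYU}\|P_{XY}Q_{U|Y})}$. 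Counting: within type class $\mathcal{T}_n(Q_Y)$, the helper's $e^{nR}$ bins each can "correctly cover" at most $\approx e^{nR}$-worth of $U$-structure, and the list constraint caps the $X$-side at $e^{n\Delta}$, so a sequence with conditional type $Q_{X|YU}$ satisfying $H_Q(X|U)\ge\Delta$ and with $I(Q_Y,Q_{U|Y})>R$ must be in error; translating the fraction of such sequences into probability under $P_{XY}^n$ gives the bound.

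The main obstacle I expect is the converse, specifically engineering the non-i.i.d.\ dummy source and the marginalization step so that the conditional mutual information $I_Q(X;U|Y)$ appears with the correct sign and coefficient. A naive application of the classical change-of-measure argument (as in \cite[Problem 13.7]{csiszarInformationTheoryCoding2011}) would only yield a bound with the optimization restricted to Markov chains $X\to Y\to U$, which is strictly looser; the fix is to let the test source depend on the codeword structure in a way that breaks the Markov property, then single-letterize using a time-sharing variable. Secondary obstacles are: (i) justifying the cardinality bound $|\mathcal{U}|\le 2|\mathcal{X}||\mathcal{Y}|+2$ via a Carathéodory / support-lemma argument on the functionals $I(Q_Y,Q_{U|Y})$, $H_Q(X|U)$, and the objective $D(Q_{XYU}\|P_{XY}Q_{U|Y})$; (ii) checking the order of $\min_{Q_Y}\max_{Q_{U|Y}}\min_{Q_{X|YU}}$ is exactly the one both bounds produce (achievability naturally gives $\min\max\min$ by construction, and the converse must be massaged — e.g.\ via a minimax argument using convexity/concavity in the relevant variables, or by a direct "the adversary code picks $Q_{U|Y}$ after seeing the type" reasoning — to match); and (iii) the standard continuity/limit arguments ($n\to\infty$, replacing type-constrained optimizations by their continuous counterparts, handling the $|\cdot|^+$-type boundary cases when $R$ is near $R(\Delta)$). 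I would relegate the cardinality bound and the routine continuity steps to appendices and focus the main text on the covering construction and the modified change-of-measure step.
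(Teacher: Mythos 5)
Your overall architecture matches the paper's: type covering plus the log-loss/list-decoding equivalence for the achievability, and a carefully constructed non-i.i.d.\ test source with a change-of-measure and marginalization step for the converse. On the achievability side one correction is needed: the decoder only sees the index $m$ (hence $\bm{u}_m$), not $y^n$, so the list must be the union of conditional type classes $\mathcal{T}_n(Q_{X|U}\,|\,\bm{u}_m)$ over all $Q_{X|U}$ with $H(Q_{X|U}\,|\,\hat P_{\bm{u}_m})<\Delta-\epsilon$, not classes conditioned on $(u^n,y^n)$. The penalty $I_Q(X;U|Y)$ then emerges from evaluating $P_{X|Y}^n[\mathcal{T}_n(Q_{X|U}\,|\,\bm{u}_m)\,|\,\bm{y}]$, i.e.\ from the exponential size of the intersection of the conditional type classes given $\bm{u}_m$ and given $\bm{y}$ (Lemma \ref{lem:probability_intersection_conditional_type_class}); it does not come from merely ``lifting'' $Q_{XY}$ to $Q_{XYU}$ via $Q_{U|Y}$, which would produce a Markov chain with $I_Q(X;U|Y)=0$.

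The genuine gap is in the converse, at the step where the test source's error probability must be bounded away from zero. The counting sketch (``a sequence with conditional type $Q_{X|YU}$ satisfying $H_Q(X|U)\ge\Delta$ and with $I(Q_Y,Q_{U|Y})>R$ must be in error'') cannot be run against an arbitrary code: the decision regions are not type classes, there is no $U$ in the problem until you manufacture one, and the stated condition $I(Q_Y,Q_{U|Y})>R$ contradicts the constraint $I\le R$ appearing in the exponent. What closes the argument in the paper is, first, the identification $U_i=(M,X^{i-1},Y^{i-1})$, under which the test source $Q_{X^nY^nU^n}=\prod_i Q_{X_i|Y_iU_i}\,Q_{Y}\,Q_{U_i|Y_i}$ is well-defined (the $Q_{U_i|Y_i}$ are induced recursively by the code and the freely chosen $Q_{X_i|Y_iU_i}$), the chain $X^{i-1}\to(M,Y^{i-1})\to Y_i$ holds, and the weak-converse chain on $H_Q(M)\le nR$ forces the induced $Q_{U_J,J|Y}$ to satisfy $I(Q_Y,Q_{U_J,J|Y})\le R$ --- this is precisely what yields the inner $\max_{Q_{U|Y}:I\le R}$ with the correct quantifier order. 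Second, one needs a quantitative lower bound on $Q_{X^nY^n}[\mathcal{E}_n]$: since each decoded list has size at most $e^{n\Delta}$, one upper bounds $Q_{X^n|M=m}$ of the list by the probability that the optimal guessing rank is at most $e^{n\Delta}$, and then the reverse Markov inequality together with the reverse Wyner inequality $\mathbb{E}[\log G(X)]\ge H(X)-\log(1+\log|\mathcal{X}|)$ give $Q_{X^nY^n}[\mathcal{E}_n]\ge(\frac{1}{n}H_Q(X^n|M)-\tau-\Delta)/(|\mathcal{X}|-\Delta)$, a positive constant whenever $\frac{1}{n}\sum_i H_Q(X_i|U_i)\ge\Delta+\tau+\nu$. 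Without some such device (a Fano-type bound would also serve, less cleanly), the change-of-measure step has nothing to multiply against and the converse does not close.
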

\begin{proof}
    See Section \ref{sec:chp_remote_log_loss/achievability} and Section \ref{sec:chp_remote_log_loss/converse}.
    Note that we can also write
    \begin{align}
        D(Q_{XYU} \| P_{XY}Q_{U|Y}) & = D(Q_{XY} \| P_{XY}) + D(Q_{XYU} \| Q_{XY}Q_{U|Y}) \\
        & = D(Q_{XY} \| P_{XY}) + I_Q(X;U|Y),
    \end{align}
    which is the form we arrive at in the achievability part.
\end{proof}

We next look into the rate regime for which \(E(R, \Delta)> 0\).
\begin{proposition}
    \label{prop:remote_log_loss_exponent_rate_distortion_function}
    For a fixed \(\Delta\), we have \(E(R, \Delta) > 0\) if and only if \(R > R(\Delta)\).
\end{proposition}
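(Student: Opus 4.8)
The plan is to use the closed-form expression for $E(R,\Delta)$ from Theorem \ref{thm:exponent_remote_log_loss} and analyze when the outer minimization attains the value $0$. Recall that $D(Q_{XYU}\|P_{XY}Q_{U|Y})\ge 0$ always, with equality if and only if $Q_{XYU}=P_{XY}Q_{U|Y}$, i.e., $Q_{XY}=P_{XY}$ and $X\to Y\to U$ under $Q$. So $E(R,\Delta)=0$ precisely when there exists a choice of $Q_Y$ such that for the maximizing $Q_{U|Y}$ (subject to $I(Q_Y,Q_{U|Y})\le R$), the inner minimization over $Q_{X|YU}$ with $H_Q(X|U)\ge\Delta$ can drive the divergence to $0$. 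The natural candidate is $Q_Y=P_Y$: then $E(R,\Delta)=0$ iff there is some $Q_{U|Y}$ with $I(P_Y,Q_{U|Y})\le R$ such that, taking $Q_{X|Y}=P_{X|Y}$ and $Q_{U|X}=Q_{U|Y}$ (the Markov completion), one has $H_Q(X|U)\ge\Delta$ — because under this choice $D(Q_{XYU}\|P_{XY}Q_{U|Y})=0$, and $0$ is the minimum, so the inner min is indeed $0$ as long as this $Q_{X|YU}$ is feasible.

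First I would prove the ``if'' direction ($R>R(\Delta)\implies E>0$). Suppose $E(R,\Delta)=0$. Since the outer minimization is over $Q_Y$ and the value is a minimum, there is a minimizing $Q_Y^\star$; but more usefully, $E(R,\Delta)\le \max_{Q_{U|Y}}\min_{Q_{X|YU}}D(\cdots)$ evaluated at $Q_Y=P_Y$, and this already must be $0$ if the overall min is $0$ — wait, that is the wrong direction since it is an outer min. Instead: $E=0$ means the min over $Q_Y$ is $0$, so there exists $Q_Y^\star$ achieving $0$ in the $\max$–$\min$. For the divergence $D(Q_{XYU}\|P_{XY}Q_{U|Y})$ to be made $0$ by the inner min, we need $Q_Y^\star=P_Y$ forced by the $Q_{XY}=P_{XY}$ requirement (the $Y$-marginal of $P_{XY}$ is $P_Y$). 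Hence there exists $Q_{U|Y}$ with $I(P_Y,Q_{U|Y})\le R$ such that the Markov completion $Q_{XYU}=P_{XY}Q_{U|Y}$ satisfies $H_Q(X|U)\ge\Delta$. But then the pair $(R',\Delta)$ with $R'=I(P_Y,Q_{U|Y})\le R$ is achievable in the definition \eqref{eq:chp_remote_log_loss/definition_rate_distoriton_function}, giving $R(\Delta)\le R'\le R$, contradicting $R>R(\Delta)$.

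Then I would prove the ``only if'' direction ($R\le R(\Delta)\implies E=0$). If $R\ge R(\Delta)$ — actually for $R=R(\Delta)$ we get $E=0$ and the strict statement handles $R>R(\Delta)$, so here I treat $R< R(\Delta)$ hmm, the proposition says $E>0$ iff $R>R(\Delta)$, so I must show $R\le R(\Delta)\implies E=0$. Take $Q_Y=P_Y$ in the outer min. I need: for every $Q_{U|Y}$ with $I(P_Y,Q_{U|Y})\le R$, the inner min over $Q_{X|YU}$ with $H_Q(X|U)\ge\Delta$ is $0$; equivalently the Markov completion achieving zero divergence is feasible, i.e., $H(X|U)\ge\Delta$ under $P_{XY}Q_{U|Y}$. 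This need not hold for every such $Q_{U|Y}$, so instead I use that $E(R,\Delta)\le \max_{Q_{U|Y}:I\le R}\min_{Q_{X|YU}:H_Q(X|U)\ge\Delta}D(\cdots)$ with $Q_Y=P_Y$, and I must bound this max. For any $Q_{U|Y}$ with $I(P_Y,Q_{U|Y})\le R\le R(\Delta)$: if its Markov completion has $H(X|U)\ge\Delta$ the inner min is $0$; if $H(X|U)<\Delta$, then by continuity/convexity there is a feasible $Q_{X|YU}$ (perturbing toward the uniform conditional, which has maximal entropy) with $H_Q(X|U)=\Delta$ and small divergence — but this does not give exactly $0$. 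The clean fix: the outer $\min_{Q_Y}$ combined with the fact that the max-min is $0$ at $Q_Y = P_Y$ requires a better argument — I would instead invoke that $R(\Delta)$ is the infimum of rates $I(P_Y,Q_{U|Y})$ over Markov $Q_{U|Y}$ with $H(X|U)\le\Delta$, and use a supporting-hyperplane / Lagrangian continuity argument showing $R\mapsto E(R,\Delta)$ is nonincreasing, convex, continuous, and hits $0$ at $R=R(\Delta)$, hence is $0$ for all $R\le R(\Delta)$. Monotonicity in $R$ is immediate (larger $R$ enlarges the feasible set in the inner $\max$). So it suffices to show $E(R(\Delta),\Delta)=0$: pick the optimal $Q_{U|Y}^\star$ for $R(\Delta)$ (Markov, $H(X|U)=\Delta$ at optimum generically, or $\le\Delta$); with $Q_Y=P_Y$ and this $Q_{U|Y}^\star$, the Markov completion gives divergence $0$ and $H_Q(X|U)=H(X|U)\le\Delta$... but we need $\ge\Delta$. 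When $H(X|U^\star)<\Delta$ strictly, $R(\Delta)$ is not tight and one can find a lower-rate completion; when $H(X|U^\star)=\Delta$, the completion is feasible and gives $E=0$.

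The main obstacle is exactly this ``only if'' (zero-exponent) direction: the inner minimization demands $H_Q(X|U)\ge\Delta$ while zero divergence demands the Markov completion of $P_{XY}$ and $Q_{U|Y}$, and these two constraints are compatible only when the Markov completion already has conditional entropy at least $\Delta$. I expect the right route is to establish the regularity of $R\mapsto E(R,\Delta)$ (monotone nonincreasing, convex via convexity of $D$ and of the constraint sets, lower semicontinuous), then pin down $E(R(\Delta),\Delta)=0$ by exhibiting the Markov-optimal test channel, handling the boundary case $H(X|U^\star)=\Delta$ directly and the slack case $H(X|U^\star)<\Delta$ by noting $R(\Delta)$ would not be minimal; monotonicity then propagates $E=0$ to all $R\le R(\Delta)$. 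The ``if'' direction, by contrast, is a short contradiction argument reading off the $R(\Delta)$-feasibility of the zero-divergence minimizer, and should need only the characterization \eqref{eq:chp_remote_log_loss/definition_rate_distoriton_function} together with $D\ge 0$.
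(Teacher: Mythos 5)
Your overall strategy (read off when the formula in Theorem \ref{thm:exponent_remote_log_loss} can vanish, using $D(Q_{XYU}\|P_{XY}Q_{U|Y})=0$ iff $Q_{XY}=P_{XY}$ and $X\to Y\to U$) is the right one and is exactly the paper's starting point, but both of your directions have genuine logical gaps that trace to the same source: you never correctly exploit the fact that the middle optimization is a \emph{maximum}. Since the inner minimum is always nonnegative, $E(P_Y,R,\Delta)=0$ is equivalent to the \emph{universal} statement that \emph{every} $Q_{U|Y}$ with $I(P_Y,Q_{U|Y})\le R$ has the zero-divergence point $Q_{XYU}=P_{XY}Q_{U|Y}$ feasible for the inner minimization, i.e.\ $H(X|U)\ge\Delta$ under the Markov completion; and this universal statement is equivalent to $R\le R(\Delta)$. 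That is the entire proof in the paper. In your ``if'' direction you instead extract only an existential consequence --- \emph{some} $Q_{U|Y}$ with $I\le R$ and $H(X|U)\ge\Delta$ --- and then claim it is ``achievable in the definition'' of $R(\Delta)$. It is not: the constraint in \eqref{eq:chp_remote_log_loss/definition_rate_distoriton_function} is $H(X|U)\le\Delta$, the opposite inequality. And even granting the step, your conclusion $R(\Delta)\le R'\le R$ does not contradict $R>R(\Delta)$; it is implied by it. So no contradiction is reached.

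In the ``only if'' direction you actually write down the correct target (``for every $Q_{U|Y}$ with $I(P_Y,Q_{U|Y})\le R$, the inner min is $0$'') and then abandon it with ``this need not hold for every such $Q_{U|Y}$'' --- but it \emph{does} hold for every such $Q_{U|Y}$ precisely when $R\le R(\Delta)$, because a feasible $Q_{U|Y}$ with $H(X|U)<\Delta$ would be admissible in \eqref{eq:chp_remote_log_loss/definition_rate_distoriton_function} at rate $\le R$, contradicting $R\le R(\Delta)$ (up to the boundary case $I=R=R(\Delta)$, handled by perturbation). Your fallback --- monotonicity in $R$ plus exhibiting the single optimal test channel $Q_{U|Y}^{\star}$ at $R=R(\Delta)$ --- cannot work: producing one $Q_{U|Y}$ whose inner minimum is zero does not force the maximum over all feasible $Q_{U|Y}$ to be zero. (Separately, $E(R,\Delta)$ is non\emph{decreasing} in $R$, not nonincreasing as you state, since larger $R$ enlarges the feasible set of the maximization; the propagation $E(R)\le E(R(\Delta))=0$ for $R\le R(\Delta)$ that you implicitly use relies on the nondecreasing direction.) The fix is to replace both of your arguments by the single two-way equivalence above.
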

\begin{proof}
    See Appendix \ref{sec:chp_remote_log_loss/exponent_rate_distortion_function}.
\end{proof}

If \(P_{XY}\) satisfies \(\P\{X = Y\} = 1\), i.e., \(Y^n\) is a noiseless observation of \(X^n\), then the IB source is reduced to lossy source coding under logarithmic loss. Hence, we can recover \cite[Theorem 1]{joudehErrorExponentsSource2024} from Theorem \ref{thm:exponent_remote_log_loss}.
\begin{corollary}
    \label{cor:exponent_lossy_log_loss}
    If \(P_{XY}\) satisfies \(\P\{X = Y\} = 1\), then we have
    \begin{equation}
        E(R, \Delta) = \min_{Q_{X}:H_Q(X) \geq R + \Delta } D(Q_{X} \| P_X).
    \end{equation}
\end{corollary}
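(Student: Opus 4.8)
The plan is to obtain Corollary \ref{cor:exponent_lossy_log_loss} by specializing the general formula of Theorem \ref{thm:exponent_remote_log_loss} to the noiseless channel $P_{Y|X}(y|x)=\idc{y=x}$, for which $P_{XY}(x,y)=P_X(x)\idc{x=y}$. The first observation is that $D(Q_{XYU}\|P_{XY}Q_{U|Y})$ is finite only if $Q_{XYU}(x,y,u)=0$ whenever $P_{XY}(x,y)=0$, which forces $Q_{XY}$ to be supported on the diagonal $\{(x,y):x=y\}$. On this support we have $Q_X=Q_Y=:Q$, and $Q_{U|Y}$ may be regarded as a channel $Q_{U|X}$ from $X$ to $U$; moreover $X$ is a deterministic function of $Y$, so $I_Q(X;U|Y)=0$ and, by the alternative form recorded in the proof of Theorem \ref{thm:exponent_remote_log_loss}, the objective reduces to $D(Q_{XY}\|P_{XY})=D(Q\|P_X)$. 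Also $H_Q(X|U)=H(Q)-I(Q,Q_{U|X})$.

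Next I would unwind the three nested optimizations. In the innermost minimization over $Q_{X|YU}$ with $H_Q(X|U)\ge\Delta$, the ``diagonal'' conditional type $Q_{X|YU}(x|y,u)=\idc{x=y}$ is the only choice keeping the divergence finite, and it is admissible precisely when $H(Q)-I(Q,Q_{U|X})\ge\Delta$, in which case the inner minimum equals $D(Q\|P_X)$; when it is inadmissible the constraint drives $Q_{X|YU}$ off the diagonal and the value becomes $+\infty$. Passing to the middle maximization over $Q_{U|X}$ with $I(Q,Q_{U|X})\le R$, and using that the set of attainable mutual informations $I(Q,Q_{U|X})$ is exactly the interval $[0,H(Q)]$, one checks that the maximum is finite and equal to $D(Q\|P_X)$ iff $H(Q)-I\ge\Delta$ holds for every attainable $I\le R$, i.e. iff $\min\{R,H(Q)\}\le H(Q)-\Delta$; for $\Delta>\Delta_{\min}=0$ this is equivalent to $H(Q)\ge R+\Delta$, and otherwise the maximum is $+\infty$. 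The outer minimization over $Q=Q_Y$ therefore retains exactly the $Q$ with $H(Q)\ge R+\Delta$, giving $E(R,\Delta)=\min_{Q:H(Q)\ge R+\Delta}D(Q\|P_X)$, as claimed; the cardinality bound on $\mathcal{U}$ is vacuous here since $U$ may be taken trivial.

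I expect the delicate point to be the middle step: one must argue rigorously that when the diagonal conditional type is inadmissible there is genuinely no finite-divergence competitor (so that every $Q$ with $H(Q)<R+\Delta$ contributes $+\infty$ and drops out of the outer minimum), and dually that the feasible range of $I(Q,Q_{U|X})$ really sweeps past the threshold $H(Q)-\Delta$ so that the maximum ``sees'' it. The boundary case $\Delta=\Delta_{\min}=0$, i.e. exact (lossless) reconstruction, should be checked separately and is expected to recover the classical source-coding error exponent $\min_{Q:H(Q)\ge R}D(Q\|P_X)$.
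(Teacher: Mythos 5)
Your proposal is correct and reaches the same endpoint as the paper, but the execution is more elementary. The paper first peels off $D(Q_Y\|P_Y)$, then converts the constrained inner minimization into a penalized form via Lagrange multipliers, swaps $\min$ and $\max$ by the minimax theorem (using convexity of the divergence and of $-H_Q(X|U)$ in $Q_{X|YU}$), and only then observes that the identity reverse channel $Q_{X|YU}=P_{X|Y}$ is the unique finite-divergence point, which collapses the penalty to $\max_{\rho\ge0}\rho(\Delta+I(Q_Y,Q_{U|Y})-H(Q_Y))$ and hence to $\max_{\rho\ge0}\rho(\Delta+R-H(Q_Y))$. You instead argue directly on supports: absolute continuity forces $Q_{XYU}$ onto the diagonal, whence $I_Q(X;U|Y)=0$, $H_Q(X|U)=H(Q)-I(Q,Q_{U|Y})$, and the three nested optimizations reduce to a feasibility case-split. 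This avoids the minimax machinery entirely, and it is also slightly more careful than the paper at one point: you track that the attainable mutual informations are $[0,\min\{R,H(Q)\}]$, whereas the paper's step $\max_{I\le R}I=R$ tacitly assumes $R\le H(Q_Y)$ (harmless, since such $Q_Y$ are excluded by the final constraint when $\Delta>0$, but worth the care you give it). The two points you flag as delicate are genuine but unproblematic: off-diagonal $Q_{X|YU}$ indeed gives $D=+\infty$ by absolute continuity (and an empty feasible set also yields $+\infty$ by convention), and the boundary case $\Delta=\Delta_{\min}=0$ is equally glossed over in the paper's own Lagrangian derivation, so you are not losing anything relative to it.
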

\begin{proof}
    See Appendix \ref{sec:chp_remote_log_loss/lossy_speical_case}.
\end{proof}

We now provide insights into the proof of Theorem \ref{thm:exponent_remote_log_loss}.
For the achievability part, we employ the type covering lemma to design the helper's compress-forward scheme: for every type class \(\mathcal{T}_n(Q_Y)\), we select a conditional type \(Q_{U|Y}\) such that \(I(Q_Y, Q_{U|Y}) \leq R\) to satisfy the rate-limit; then use the type covering lemma to construct the compress-forward codebook.
The receiver's soft estimates are constructed based on the connection between the logarithmic loss and list decoding, established by Shkel and Verdú in \cite{shkelSingleshotApproachLossy2018}.
Due to the importance of this connection to our proofs, we will review it in detail shortly.
For now, it suffices to note that the receiver's soft estimate under the logarithmic loss and distortion level \(\Delta\) is equivalent to a list with size \(e^{n\Delta}\).
As a result, we construct the list by considering all conditional type classes \(\mathcal{T}_n(Q_{X|U} | \bm{u}_m)\) satisfying \(|\mathcal{T}_n(Q_{X|U} | \bm{u}_m)| \leq e^{n\Delta}\) for a forwarded index \(m\) and its codeword \(\bm{u}_m\).
From this, we can establish the error exponent of the coding scheme by investigating the intersection of conditional type classes.

As for the converse part, a key difficulty is to derive the term \(I_Q(X;U|Y)\) in the exponent.
Note that this term is sometimes referred to as the \emph{soft} Markov chain or Markov penalty \cite{oohamaExponentialStrongConverse2018,tyagiStrongConverseUsing2020,takeuchiTightExponentialStrong2025}, for reasons that will become clear further on.
To fully appreciate the key difficulty in the proof, recall that the conventional converse proof for error exponents, commonly known as the sphere packing bound, dates back to Haroutunian's proof for the sphere packing exponent of DMCs \cite{haroutunianBoundsExponentProbability1968} and Marton's proof for the error exponent of lossy source coding \cite{martonErrorExponentSource1974}.\footnote{Here we focus on the approach to establishing the primal form of the sphere packing exponent. The dual form (i.e., Gallager's style) of the sphere packing exponent of DMCs was derived in \cite{shannonLowerBoundsError1967a} using a different method. The primal form of the sphere packing exponent of DMCs also appeared in \cite{blahutHypothesisTestingInformation1974}.}
Such proof relies on a change of measure argument, i.e., given any DMS \(P_{XY}^n\) and \(Q_{XY}^n\), if a pair \((\bm{x}, \bm{y})\) satisfies
\begin{equation}
    \frac{1}{n}\log \frac{Q_{XY}^n(\bm{x}, \bm{y})}{P_{XY}^n(\bm{x}, \bm{y})} \leq D(Q_{XY} \| P_{XY}) + \epsilon,
\end{equation}
where \(\epsilon>0\) is a small constant, then for such \((\bm{x}, \bm{y})\) we must have
\begin{equation}
    P_{XY}^n(\bm{x}, \bm{y}) \geq Q_{XY}^n(\bm{x}, \bm{y}) \times e^{-n(D(Q_{XY} \| P_{XY}) + \epsilon)}.
\end{equation}
From this, we can relate the decoding error probability under \(P_{XY}^n\) to that under a dummy (or test) source \(Q_{XY}^n\), which in turn can be evaluated via the weak converse for \(Q_{XY}^n\).

However, this conventional approach falls short when applied to multiterminal problems, i.e., the obtained sphere packing exponent is not sufficiently tight.
For example, a standard application of the conventional approach to the IB source only yields
\begin{equation}
    E(R, \Delta) \leq \min_{Q_{XY}: R_Q(\Delta) \geq R} D(Q_{XY} \| P_{XY}),
\end{equation}
where \(R_{Q}(\Delta)\) denotes the rate-distortion function of \(Q_{XY}^n\) defined in \eqref{eq:chp_remote_log_loss/definition_rate_distoriton_function}.
Later on, when studying source coding with a helper, i.e., the WAK problem, Kelly and Wagner \cite{kellyReliabilitySourceCoding2012} refined the conventional approach by incorporating Fano's lower bound for the decoding error probability of \(Q_{XY}^n\) into the analysis.
They derived an improved sphere packing exponent for the WAK problem, where the improvement can be seen from a sandwiched maximization step in \cite[Theorem 2]{kellyReliabilitySourceCoding2012}.
Following Kelly and Wagner's refined approach, we can show that
\begin{equation}
    E(R, \Delta) \leq \min_{Q_Y} \! \! \! \max_{\substack{  Q_{U|Y}:  \\ I(Q_Y, Q_{U|Y}) \leq R }   }  \min_{ \substack{  Q_{X|Y} : \\ H_Q(X|U) \geq \Delta }   } D(Q_{XY} \| P_{XY}), \label{eq:chp_remote_log_loss/main_results/refined_with_markov_chain}
\end{equation}
where we have the joint distribution \(Q_{XYU} = Q_{X|Y}Q_YQ_{U|Y}\).
The difference between \eqref{eq:chp_remote_log_loss/main_results/refined_with_markov_chain} and \eqref{eq:main_results_exponent_remote_log_loss} is that we are limited to minimizing over Markov chains \(X \to Y \to U\) in \eqref{eq:chp_remote_log_loss/main_results/refined_with_markov_chain}.

Such limitation results from the fact that the conventional approach, despite Kelly and Wagner's refinement, is built upon the weak converse for the dummy DMS \(Q_{XY}^n\), leading to the Markov chain in the exponent.
In this work, we overcome this limitation by choosing a different dummy source \(Q_{X^nY^n}\) that is not a DMS.
This dummy source \(Q_{X^nY^n}\) is carefully constructed based on the weak converse for \(Q_{XY}^n\), so the two distributions are closely related to each other but not the same.
Hence, unlike \(Q_{XY}^n\), the source \(Q_{X^nY^n}\) concentrates on sequences whose ``joint types'' do not adhere to the Markov chain as \(n \to \infty\), leading to the appearance of \(I_Q(X;U|Y)\) in the exponent.
When constructing the source \(Q_{X^nY^n}\), we draw inspiration from \cite[Section IV]{graczykGuessingBasedCompressed2022}, which deals with a problem related to ours; though we will show that our method to construct the dummy source as well as the obtained source is different from the one applied there, see Remark \ref{rem:chp_remote_log_loss/connection_to_guessing} for further elaboration.

To summarize, for multiterminal coding problems, we further improve the conventional sphere packing approach by choosing a carefully constructed dummy source rather than a DMS \(Q_{XY}^n\).
Combining it with the technique of marginalizing, we are able to introduce the soft Markov chain \(I_Q(X;U|Y)\) into the exponent.
For the IB source, we also employ the reverse Markov's inequality to simplify the converse proof, instead of using Fano's lower bound, which can be achieved due to the connection to list decoding.

\subsection{Strong Converse Exponent}
We first provide a complete characterization of \(F(R, \Delta)\).
\begin{theorem}
    \label{thm:strong_converse_exponent_remote_log_loss}
    For a DMS \(P_{XY}^n\), we have
    \begin{equation}
        F(R, \Delta) = \min_{ \substack{ Q_{XYU} : \\ H_{Q}(X|U) \leq \Delta } } D(Q_{XYU} \| P_{XY}Q_{U|Y}) + | I_{Q}(Y;U) - R |^{+}, \label{eq:strong_converse_exponent_remote_log_loss}
    \end{equation}
    where we have the joint distribution \(Q_{XYU} = Q_{Y}Q_{U|Y}Q_{X|YU}\) and \(|\mathcal{U}| \leq |\mathcal{X}| |\mathcal{Y}| + 2\).
\end{theorem}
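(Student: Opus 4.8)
We sketch the two matching bounds; the achievability controls $p_{\mathrm{c}}(n,R,\Delta)$ from below and the converse from above.

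\emph{Achievability} ($F(R,\Delta)\le\text{RHS}$). I would fix any joint type $Q_{XYU}$ with $H_Q(X|U)\le\Delta$ and build a code tuned to the type class $\mathcal{T}_n(Q_{XY})$. The helper applies the type covering lemma \cite{bergerRateDistortionTheory1971,csiszarInformationTheoryCoding2011} to $\mathcal{T}_n(Q_Y)$ with the test conditional type $Q_{U|Y}$, but---unlike in the achievability of Theorem~\ref{thm:exponent_remote_log_loss}---\emph{without} enforcing the rate constraint, obtaining a covering codebook of size $\ndot{=}e^{nI_Q(Y;U)}$; to meet the link capacity it retains only the $e^{nR}$ most useful codewords, which still covers at least an $e^{-n|I_Q(Y;U)-R|^{+}}$ fraction of $\mathcal{T}_n(Q_Y)$. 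At the receiver we use the list-decoding equivalence of \cite{shkelSingleshotApproachLossy2018}: the soft estimate attached to a codeword $\bm u_m$ is the uniform distribution over the list formed by the conditional type classes $\mathcal{T}_n(\tilde Q_{X|U}|\bm u_m)$ with $H(\tilde Q_{X|U}|Q_U)\le\Delta-o(1)$; this list has size at most $e^{n\Delta}$, so its logarithmic loss is at most $\Delta$. It remains to count the handled source pairs $(\bm x,\bm y)$: $\bm y$ must be covered (a $\ndot{=}e^{-n|I_Q(Y;U)-R|^{+}}$ fraction of $\mathcal{T}_n(Q_Y)$), and its covering codeword $\bm u_m$---which, for a random covering codebook, is uniform on $\mathcal{T}_n(Q_{U|Y}|\bm y)$ given that it covers $\bm y$---must in addition have joint type $Q_{U|XY}$ with $(\bm x,\bm y)$, which occurs with probability $\ndot{=}e^{-nI_Q(X;U|Y)}$; on that event $(\bm x,\bm y,\bm u_m)\in\mathcal{T}_n(Q_{XYU})$, so $\bm x$ lies in the list. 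Multiplying by $P_{XY}^n(\mathcal{T}_n(Q_{XY}))\ndot{=}e^{-nD(Q_{XY}\|P_{XY})}$ and using $D(Q_{XY}\|P_{XY})+I_Q(X;U|Y)=D(Q_{XYU}\|P_{XY}Q_{U|Y})$ gives $p_{\mathrm{c}}(n,R,\Delta)\ndot{\geq}e^{-n(D(Q_{XYU}\|P_{XY}Q_{U|Y})+|I_Q(Y;U)-R|^{+})}$. Optimizing over $Q_{XYU}$ (and passing to general distributions by continuity of the objective) yields the claimed upper bound on $F(R,\Delta)$.

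\emph{Converse} ($F(R,\Delta)\ge\text{RHS}$). Here I would adapt the change-of-measure argument used for binary hypothesis testing in \cite[Theorem~15.9]{polyanskiyInformationTheoryCoding2025}. For any code $(f_n,g_n)$, the correct-reconstruction set $\mathcal{C}$ has, by the list-decoding equivalence, the form $\bigcup_m(\mathcal{D}_m\times\mathcal{L}_m)$ with $\mathcal{D}_m=f_n^{-1}(m)$, $|\mathcal{L}_m|\le e^{n\Delta}$, and at most $e^{nR}$ indices. Letting $\bar P$ denote $P_{XY}^n$ conditioned on $\mathcal{C}$ yields the exact identity $-\log p_{\mathrm{c}}(n,R,\Delta)=D(\bar P\|P_{XY}^n)$, and under $\bar P$ the code never exceeds distortion $\Delta$, so the list and rate constraints become the hard bounds $H_{\bar P}(X^n|f_n(Y^n))\le n\Delta$ and $H_{\bar P}(f_n(Y^n))\le nR$, with $f_n(Y^n)$ deterministic given $Y^n$. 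The crux is a single-letter \emph{lower} bound for $D(\bar P\|P_{XY}^n)$: with a time-sharing index $J$ and an auxiliary $U$ built from $f_n(Y^n)$, $J$, and a suitably chosen portion of the past, one expects $-\tfrac1n\log p_{\mathrm{c}}(n,R,\Delta)\ge D(Q_{XYU}\|P_{XY}Q_{U|Y})+|I_Q(Y;U)-R|^{+}-o(1)$ for the induced single-letter law $Q$, with $H_Q(X|U)\le\Delta$. The soft Markov term $I_Q(X;U|Y)$ inside $D(Q_{XYU}\|P_{XY}Q_{U|Y})$ is forced because $\bar P$ is not a product measure, so the chain $X\to Y\to U$ that holds under $P_{XY}^n$ is broken by the conditioning; the rate bound produces the $|I_Q(Y;U)-R|^{+}$ term (extracted by comparing $\bar P$ to a reference law that replaces the deterministic index by a uniform one); and the list bound produces the constraint $H_Q(X|U)\le\Delta$. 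Minimizing over feasible $Q$ lower-bounds $F(R,\Delta)$ by the claimed expression, and the cardinality bound $|\mathcal{U}|\le|\mathcal{X}||\mathcal{Y}|+2$ follows by a support-lemma (Fenchel--Eggleston--Carath\'eodory) reduction of the single-letter optimization.

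\emph{Main obstacle.} The hard step is the converse single-letterization: lower-bounding the divergence of the non-product measure $\bar P$ by the single-letter expression with \emph{matching} constants---keeping the soft Markov penalty $I_Q(X;U|Y)$ with the correct sign, turning the rate bound into exactly $|I_Q(Y;U)-R|^{+}$, and doing so through a convexity/super-additivity argument that stays valid once $U$ is present. One must also verify that joint types with $H_Q(X|U)>\Delta$---not excluded a priori, since the decoder lists are arbitrary in the converse---cannot improve the bound, i.e., that the unconstrained minimum of $D(Q_{XYU}\|P_{XY}Q_{U|Y})+|I_Q(Y;U)-R|^{+}+|H_Q(X|U)-\Delta|^{+}$ coincides with the constrained form in the statement; this is the routine-but-necessary calculus. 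A lighter issue, on the achievability side, is to justify that after pruning the covering codebook to $e^{nR}$ codewords the surviving covering codeword of a typical $\bm y$ still behaves like a uniform conditionally typical sequence, so that the $e^{-nI_Q(X;U|Y)}$ factor is genuinely attained.
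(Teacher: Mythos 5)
Your overall architecture matches the paper's: for achievability, type covering without the rate constraint, pruning to the best $e^{nR}$ codewords to retain an $e^{-n|I_Q(Y;U)-R|^{+}}$ fraction of $\mathcal{T}_n(Q_Y)$, and list decoding over small conditional type classes; for the converse, conditioning $P_{XY}^n$ on the correct-reconstruction event, the identity $-\log p_{\mathrm{c}} = D(\tilde P\|P_{XY}^n)$, and a single-letterization with $U_i$ built from $(M, \text{past})$. Two points deserve attention.

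First, on the achievability side, your route to the factor $e^{-nI_Q(X;U|Y)}$ is shakier than the paper's and is not merely a ``lighter issue.'' You argue that the covering codeword of $\bm y$, ``for a random covering codebook,'' is uniform on $\mathcal{T}_n(Q_{U|Y}|\bm y)$, and then compute the chance that it aligns with $(\bm x,\bm y)$. But the type covering lemma produces a deterministic codebook, so the codeword assigned to $\bm y$ is fixed and no such uniformity is available without an additional random-coding argument. The paper avoids this entirely: it fixes the covering codeword $\bm u_m$ (which has the prescribed joint type with $\bm y$) and computes $P_{X|Y}^n[\mathcal{T}_n(Q_{X|U}|\bm u_m)\,|\,\bm y]$ over the randomness of $X^n$ alone, using the intersection-of-conditional-type-classes estimate (Lemma \ref{lem:probability_intersection_conditional_type_class}); this yields $\ndot{=}\exp\{-n(D(Q_{X|Y}\|P_{X|Y}|Q_Y)+I_Q(X;U|Y))\}$ with no appeal to codebook randomness. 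You should replace your step by this computation.

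Second, and more importantly, the converse as written is a plan, not a proof: the single-letter lower bound is stated as what ``one expects,'' and you yourself flag it as the main obstacle. This is exactly the nontrivial content of the paper's converse. The paper resolves it with an \emph{exact identity} (Lemma \ref{lem:single_letterization_remote_log_loss}): for any $Q_{X^nY^n}$, writing $M=f_n(Y^n)$, $U_i=(M,X^{i-1},Y^{i-1})$, and a time-sharing index $J$,
\begin{equation*}
\frac{1}{n}D(Q_{X^nY^n}\|P_{XY}^n)=D(Q_{X_JY_JU_JJ}\|P_{X_JY_J}Q_{U_JJ|Y_J})+I_Q(Y_J;U_J,J)-\frac{1}{n}H_Q(M),
\end{equation*}
obtained by splitting $D(Q_{X^nY^n}Q_{M|Y^n}\|P_{XY}^nQ_{M|Y^n})$ into a $Y^nM$-marginal term and a conditional term and single-letterizing each against the product reference $\prod_iQ_{X_i|Y_iU_i}$. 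The $|I_Q(Y;U)-R|^{+}$ term then falls out by combining the two elementary bounds $\frac{1}{n}H_Q(M)\le R$ and $\frac{1}{n}H_Q(M)\le I_Q(Y_J;U_J,J)$, not by comparison with a uniform-index reference law as you suggest. Finally, your worry about joint types with $H_Q(X|U)>\Delta$ is a non-issue: the conditioned measure $\tilde P$ is supported on lists of size at most $\lfloor e^{n\Delta}\rfloor$ given $M$, which forces $H_{\tilde P}(X_J|U_J,J)\le\Delta$ directly, so the induced $Q$ automatically lies in the constraint set and no unconstrained variant with a $|H_Q(X|U)-\Delta|^{+}$ penalty needs to be analyzed.
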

\begin{proof}
    See Section \ref{sec:chp_strong_converse_remote_log_loss/achievability} and Section \ref{sec:chp_strong_converse_remote_log_loss/converse}.
\end{proof}

We next look into the rate regime for which \(F(R, \Delta) > 0\).
\begin{proposition}
    \label{prop:chp_strong_converse_remote_log_loss_rate_regime}
    For a fixed \(\Delta\), we have \(F(R, \Delta) > 0\) if and only if \(R < R(\Delta)\).
\end{proposition}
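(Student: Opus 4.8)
The plan is to read the dichotomy directly off the formula in Theorem~\ref{thm:strong_converse_exponent_remote_log_loss}. Since both $D(\cdot\|\cdot)$ and $|\cdot|^{+}$ are nonnegative, the objective in \eqref{eq:strong_converse_exponent_remote_log_loss} is nonnegative, so $F(R,\Delta)\ge 0$ always, and (using that the minimum is attained, as asserted by the theorem) $F(R,\Delta)=0$ if and only if there is an admissible $Q_{XYU}$---one with $H_Q(X|U)\le\Delta$ and $|\mathcal{U}|\le|\mathcal{X}||\mathcal{Y}|+2$---for which \emph{both} summands vanish. So the first step is to characterize when each summand is zero. For the divergence term, $D(Q_{XYU}\|P_{XY}Q_{U|Y})=0$ iff $Q_{XYU}=P_{XY}Q_{U|Y}$ as pmfs; marginalizing over $U$ forces $Q_{XY}=P_{XY}$, and what remains forces the Markov chain $X\to Y\to U$ under $Q$. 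For the second term, $|I_Q(Y;U)-R|^{+}=0$ iff $I_Q(Y;U)\le R$. Hence I would reduce ``$F(R,\Delta)=0$'' to the statement: there exists $Q_{XYU}$ with $Q_{XY}=P_{XY}$, $X\to Y\to U$, $H_Q(X|U)\le\Delta$, and $I_Q(Y;U)\le R$.

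The remaining step is to match this with the definition \eqref{eq:chp_remote_log_loss/definition_rate_distoriton_function} of $R(\Delta)$. For the ``if'' direction ($R\ge R(\Delta)\Rightarrow F(R,\Delta)=0$), I would take $P_{U|Y}^{*}$ attaining $R(\Delta)$ and set $Q_{XYU}:=P_{X|Y}P_YP_{U|Y}^{*}$; this has the Markov structure, $H_Q(X|U)\le\Delta$, $I_Q(Y;U)=R(\Delta)\le R$, and $|\mathcal{U}|\le|\mathcal{Y}|+1\le|\mathcal{X}||\mathcal{Y}|+2$, so it is admissible in \eqref{eq:strong_converse_exponent_remote_log_loss} and zeroes both summands. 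For the ``only if'' direction ($F(R,\Delta)=0\Rightarrow R\ge R(\Delta)$), I would take an optimal $Q_{XYU}$, read off $P_{U|Y}:=Q_{U|Y}$, note that under $P_{X|Y}P_YP_{U|Y}$ we have $X\to Y\to U$ with $H(X|U)\le\Delta$ and $I(Y;U)\le R$, and conclude $R(\Delta)\le I(Y;U)\le R$. The last inequality is where I would invoke the standard support-lemma fact that the value $R(\Delta)$ in \eqref{eq:chp_remote_log_loss/definition_rate_distoriton_function} is unchanged when $\mathcal{U}$ is allowed to be an arbitrary finite alphabet, which is precisely what lets me compare the two optimizations despite their different cardinality bounds. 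Taking the contrapositive of both implications yields $F(R,\Delta)>0\iff R<R(\Delta)$.

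The argument is mostly bookkeeping, and the point I would flag as the main (minor) obstacle is reconciling the cardinality bound $|\mathcal{U}|\le|\mathcal{X}||\mathcal{Y}|+2$ in Theorem~\ref{thm:strong_converse_exponent_remote_log_loss} with the bound $|\mathcal{U}|\le|\mathcal{Y}|+1$ in \eqref{eq:chp_remote_log_loss/definition_rate_distoriton_function}; this is handled by the support-lemma remark above together with the trivial inequality $|\mathcal{Y}|+1\le|\mathcal{X}||\mathcal{Y}|+2$. I would also record at the outset that the feasible set in \eqref{eq:strong_converse_exponent_remote_log_loss} is nonempty---take $U=Y$, so that $H_Q(X|U)=H(X|Y)=\Delta_{\min}\le\Delta$ under the standing assumption---so that $F(R,\Delta)$ is well defined, and correspondingly that $R(\Delta)<\infty$.
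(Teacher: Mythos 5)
Your proposal is correct, and it reaches the same characterization as the paper ($F(R,\Delta)=0$ iff some feasible $Q_{XYU}$ has $Q_{XY}=P_{XY}$, the Markov chain $X\to Y\to U$, and $I_Q(Y;U)\le R$, which is exactly feasibility of $R$ for the $R(\Delta)$ program), but by a different route. The paper's proof in Appendix \ref{sec:chp_strong_converse_remote_log_loss/rate_regime} first rewrites $|a|^{+}=\max_{\rho\in[0,1]}\rho a$, turns the condition $F(R,\Delta)>0$ into $R<\min_{Q}\max_{\rho}\bigl(\tfrac{1}{\rho}D(Q_{XYU}\|P_{XY}Q_{U|Y})+I_Q(Y;U)\bigr)$, and then argues that the inner maximization over $\rho$ blows up unless the divergence vanishes, which collapses the outer minimization onto $Q_{XY}=P_{XY}$ with the Markov chain and yields $R(\Delta)$. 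Your argument bypasses this variational manipulation entirely: since the objective is a sum of two nonnegative terms and the minimum is attained (compact feasible set, lower semicontinuous objective), $F=0$ iff both terms vanish simultaneously at some feasible point, and each vanishing condition is read off directly. This is more elementary and avoids the slightly delicate $\tfrac{1}{\rho}D$ bookkeeping at $\rho=0$. You also explicitly address two points the paper leaves implicit: attainment of the minimum, and the mismatch between the cardinality bound $|\mathcal{U}|\le|\mathcal{X}||\mathcal{Y}|+2$ in Theorem \ref{thm:strong_converse_exponent_remote_log_loss} and $|\mathcal{U}|\le|\mathcal{Y}|+1$ in \eqref{eq:chp_remote_log_loss/definition_rate_distoriton_function}, which you correctly resolve with the support lemma. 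No gaps.
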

\begin{proof}
    See Appendix \ref{sec:chp_strong_converse_remote_log_loss/rate_regime}.
\end{proof}

If \(P_{XY}\) satisfies \(\P\{X = Y\} = 1\), i.e., \(Y^n\) is a noiseless observation of \(X^n\), then the IB source becomes lossy source coding under logarithmic loss. This leads to the following result.
\begin{corollary}
    \label{cor:strong_converse_exponent_lossy_log_loss}
    If \(P_{XY}\) satisfies \(\P\{X = Y\} = 1\), then we have
    \begin{equation}
        F(R, \Delta) = \min_{Q_X} D(Q_X \| P_X) + | H_Q(X)  - R - \Delta |^{+}.
    \end{equation}
\end{corollary}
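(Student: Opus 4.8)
The plan is to specialize the general characterization \eqref{eq:strong_converse_exponent_remote_log_loss} to the degenerate channel $\P\{X=Y\}=1$ and then carry out the resulting optimization by hand. The first step is to note that when $P_{XY}$ is supported on the diagonal $\{(x,x):x\in\mathcal{X}\}$, any feasible $Q_{XYU}$ with $D(Q_{XYU}\|P_{XY}Q_{U|Y})<\infty$ must have $Q_{XY}$ also supported on the diagonal (otherwise $Q_{XYU}$ assigns positive mass where $P_{XY}Q_{U|Y}$ vanishes, making the divergence infinite). Hence nothing is lost by assuming $Y=X$ almost surely under $Q$: the relevant $Q_{XYU}$ are precisely those of the form $Q_{XYU}(x,x,u)=Q_X(x)Q_{U|X}(u|x)$, and on the support of $Q_X$ the kernel $Q_{U|Y}(\cdot|x)$ coincides with $Q_{U|X}(\cdot|x)$.

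Under this reduction I would evaluate the ingredients of \eqref{eq:strong_converse_exponent_remote_log_loss}. A one-line computation gives
\[
D(Q_{XYU}\|P_{XY}Q_{U|Y}) = \sum_{x,u}Q_X(x)Q_{U|X}(u|x)\log\frac{Q_X(x)Q_{U|X}(u|x)}{P_X(x)Q_{U|X}(u|x)} = D(Q_X\|P_X),
\]
while $I_Q(Y;U)=I_Q(X;U)$ and the constraint $H_Q(X|U)\leq\Delta$ is unchanged. Therefore
\[
F(R,\Delta) = \min_{Q_X}\Bigl[\,D(Q_X\|P_X) + \min_{Q_{U|X}:\,H_Q(X|U)\leq\Delta}\bigl|I_Q(X;U)-R\bigr|^{+}\,\Bigr].
\]

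It then remains to solve the inner minimization over $Q_{U|X}$ for each fixed $Q_X$. Since $I_Q(X;U)=H_Q(X)-H_Q(X|U)\geq 0$ and $t\mapsto |t-R|^{+}$ is non-decreasing, the inner objective is minimized by making $H_Q(X|U)$ as large as the constraint allows. As $Q_{U|X}$ varies, $H_Q(X|U)$ sweeps the whole interval $[0,H_Q(X)]$ — the endpoints are attained by a constant $U$ and by $U=X$, and every intermediate value is attained by a simple time-sharing kernel on $\{0\}\cup\mathcal{X}$ that keeps $|\mathcal{U}|\leq|\mathcal{X}||\mathcal{Y}|+2$ — so the largest feasible value of $H_Q(X|U)$ is $\min\{\Delta,H_Q(X)\}$. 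Substituting and splitting into the cases $\Delta\geq H_Q(X)$ and $\Delta<H_Q(X)$ (in the former case both $|I_Q(X;U)-R|^{+}$, attained at $I_Q(X;U)=0$, and $|H_Q(X)-\Delta-R|^{+}$ vanish because $R\geq0$), the inner minimum equals $\bigl|H_Q(X)-\Delta-R\bigr|^{+}$, which yields the claimed expression.

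The argument is essentially mechanical; the only places that need explicit care are the diagonal-support/finiteness reduction in the first step and verifying that the intermediate conditional entropies are realizable within the stated cardinality bound. The most error-prone part is the final case analysis in the substitution, but it is routine.
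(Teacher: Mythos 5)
Your proposal is correct and reaches the result by the same overall strategy as the paper (specialize Theorem \ref{thm:strong_converse_exponent_remote_log_loss}, reduce to diagonal test distributions, then optimize $H_Q(X|U)$ under the constraint $H_Q(X|U)\leq\Delta$), but the middle step is handled differently. The paper decomposes $D(Q_{XYU}\|P_{XY}Q_{U|Y})=D(Q_Y\|P_Y)+D(Q_{X|YU}Q_YQ_{U|Y}\|P_{X|Y}Q_YQ_{U|Y})$ and then invokes a Lagrangian formulation together with the minimax theorem to argue that the inner minimization over $Q_{X|YU}$ is attained only at $Q_{X|YU}=P_{X|Y}$ (the identity kernel), since any other choice makes the divergence infinite. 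You instead observe directly that finiteness of $D(Q_{XYU}\|P_{XY}Q_{U|Y})$ forces $Q_{XY}$ to be supported on the diagonal, which collapses the divergence to $D(Q_X\|P_X)$ in one line and avoids the duality/minimax machinery entirely; this is the same underlying observation (the paper's own justification of its step is ``otherwise the value of the divergence will be infinity''), packaged more elementarily. Your explicit case split on $\Delta\gtrless H_Q(X)$ and the erasure-type kernel realizing intermediate values of $H_Q(X|U)$ make precise a step the paper states without comment, namely that $\min_{Q_{U|Y}:H_Q(Y|U)\leq\Delta}|H_Q(Y)-H_Q(Y|U)-R|^{+}=|H_Q(Y)-R-\Delta|^{+}$ even when $\Delta>H_Q(Y)$. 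No gaps.
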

\begin{proof}
    See Appendix \ref{sec:chp_strong_converse_remote_log_loss/lossy_speical_case}.
\end{proof}

We now discuss Theorem \ref{thm:strong_converse_exponent_remote_log_loss} as well as its proof.
Takeuchi and Watanabe established the strong converse exponent for the WAK problem in \cite[Theorem 2]{takeuchiTightExponentialStrong2025},  where the characterization of the exponent is identical to \eqref{eq:strong_converse_exponent_remote_log_loss}.
That is, the strong converse exponents of the IB source and the WAK problem are the same.
The reason for this phenomenon is the deep connection between the two problems.
Details on the connection are deferred to the next subsection; for now, it suffices to note that codes for the WAK problem form a subset of codes for the IB source.

Due to the connection between the two problems, our proof of Theorem \ref{thm:strong_converse_exponent_remote_log_loss} shares similarities in technique with that of \cite[Theorem 2]{takeuchiTightExponentialStrong2025}, with the key difference lying in the converse part.
For the achievability part, similar to that of Theorem \ref{thm:exponent_remote_log_loss}, we employ the type covering lemma to design the coding scheme, and construct the soft estimates based on the connection between the logarithmic loss and list decoding.
However, here we will not restrict the selected \(Q_{U|Y}\) to satisfy \(I(Q_Y, Q_{U|Y}) \leq R\).
To satisfy the rate-limit \(R\) when \(I(Q_Y, Q_{U|Y}) > R\), we select the best \(e^{nR}\) sequences from the covering set \(\mathcal{A}_n(Q_Y)\), enabling us to find a lower bound for the size of \(\bm{y} \in \mathcal{T}_n(Q_Y)\) being covered.
From this, we analyze the non-excess distortion probability by investigating the intersection between conditional type classes.
Overall, our achievability proof is similar to those of \cite[Theorem 2]{takeuchiTightExponentialStrong2025} and \cite[Theorem 1]{wuStrongConverseExponentRemote2025}; the main difference between the achievability proof of \cite[Theorem 2]{takeuchiTightExponentialStrong2025} and ours is that \cite[Theorem 2]{takeuchiTightExponentialStrong2025} also employs random binning and the minimum entropy decoder, which are not needed here.

For the converse part, we draw inspiration from the converse proof of \cite[Theorem 15.9]{polyanskiyInformationTheoryCoding2025}, which deals with hypothesis testing and large deviations.
In particular, we use the two following techniques: connecting \(p_{\mathrm{c}}(n,R,\Delta)\) to  the KL-divergence between the source \(P_{XY}^n\) and its conditional distribution given the zero-error event\footnote{This technique, linking the probability of an event to the KL-divergence, dates back at least to Marton \cite{martonSimpleProofBlowingup1986}.}; and then
for any pmf \(Q_{X^nY^n}\) and DMS \(P_{XY}^n\), we find a single-letter lower bound of the following form
\begin{equation}
    \frac{1}{n}D(Q_{X^nY^n} \| P_{XY}^n) \geq D(Q_{X_JY_J}\| P_{XY}),
\end{equation}
where \(J\) is uniformly distributed over \([n]\), commonly known as a time-sharing random variable.
However, a straightforward application of the two techniques falls short for the IB source: it cannot produce the auxiliary random variable \(U\) in the lower bound; and the second term \(|I_Q(Y;U)-R|^{+}\) is still missing.
In this work, we generalize and extend this single-letter lower bound to multiterminal problems, by introducing an auxiliary random variable \(U\) into the picture.
Instead of a lower bound, we provide an exact single-letter identity (or decomposition) of \(\frac{1}{n}D(Q_{X^nY^n} \| P_{XY}^n)\) through an auxiliary random variable \(U\), as we will see in Lemma \ref{lem:single_letterization_remote_log_loss}.
The desired lower bound will follow immediately from the identity, leading to the establishment of the converse.

\begin{remark}
Takeuchi and Watanabe established the converse proof of \cite[Theorem 2]{takeuchiTightExponentialStrong2025} using the method developed in \cite{tyagiStrongConverseUsing2020}.
Discussions of the differences between our approach and theirs will be provided in Remark \ref{rem:chp_strong_converse_remote_log_loss/difference}, after presenting our single-letter identity and lower bound.
\end{remark}

\subsection{Connections to the WAK Problem}
\label{chp:sphere_packing_exponent_WAK_problem}
Finally, we establish a code-level connection between the IB source and the WAK problem.
For reasons that will become clear shortly, we first describe source coding with a helper available at \emph{both} the transmitter \emph{and} receiver, of which the WAK Problem is a special case.
Consider a pmf \(P_{XY}\).
Assume that we have a DMS pair \((X^n, Y^n)\) following the distribution
\begin{equation}
    P_{X^nY^n}(x^n, y^n) = \prod_{i=1}^{n}P_{XY}(x_i, y_i), \qquad \forall (x^n, y^n) \in \mathcal{X}^n \times \mathcal{Y}^n.
\end{equation}
We can think of \(X^n\) as a source and \(Y^n\) as its side information.
As seen in Fig. \ref{fig:source_coding_both_sides}, a helper observes the side information \(Y^n\) and describes it through an encoder \(\varphi_n: \mathcal{Y}^n \to [e^{nB}]\) to \emph{both} a transmitter \emph{and} a receiver.
The transmitter observes the source \(X^n\)  and the rate-limited description of \(Y^n\) from the helper; and describes them to the receiver through another encoder \(\tilde{f}_n: \mathcal{X}^n \times [e^{nB}]  \to [e^{nR}]\).
A receiver reconstructs \(\hat{X}^n\) using a decoder \(\phi_n:[e^{nR}] \times [e^{nB}] \to \mathcal{X}^n\) after receiving the two descriptions.

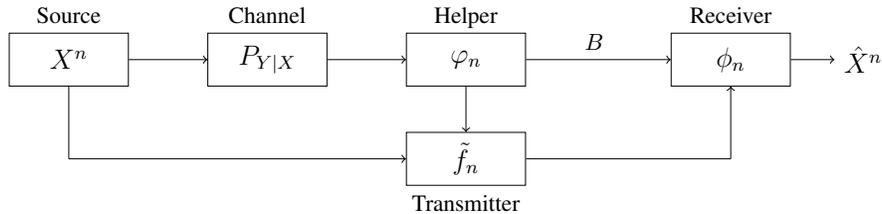
\begin{figure}[t]
    \centering
    \scalebox{0.88}{\begin{tikzpicture}
        \node at (0,0) [draw, rectangle, name = source, label= above:{\small Source },  minimum height=0.8cm, minimum width=1.8cm]  {$X^n$};

        \node at (3,0) [draw, rectangle, name = channel, label= above:{\small Channel }, minimum height=0.8cm,minimum width=1.8cm] {$P_{Y | X} $};
        \node at (6,0) [draw, rectangle, name = helper, label= above:{\small \smash{Helper} }, minimum height=0.8cm,minimum width=1.8cm] {$\varphi_n$};
        \node at (10,0) [draw, rectangle, name = receiver, label= above:{\small Receiver}, minimum height=0.8cm,minimum width=1.8cm] {$\phi_n$};
        \node at (6,-1.5) [draw, rectangle, name = transmitter, label= below:{\small Transmitter}, minimum height=0.8cm,minimum width=1.8cm] {$\tilde{f}_n$};
        \node at (12,0) [name = output] { $\hat{X}^n$ } ;

        \draw [->] (source.east) -- (channel.west);
        \draw [->] (channel.east) --  (helper.west);
        \draw [->] (helper.east) -- node[align = center, above]{\small $B$ } (receiver.west);
        \draw [->] (0,-1.5) -- (transmitter.west);
        \draw  (0,-1.5) --(source.south);
        \draw (transmitter.east) -- (10,-1.5);
       \draw [->] (10,-1.5) --(receiver.south);
       \draw [->] (receiver.east) -- (output.west);
       \draw [->] (helper.south) -- (transmitter.north);

\end{tikzpicture}




    \captionsetup{justification=centering}
    \caption{Source Coding with a Helper at Both Sides}
    \label{fig:source_coding_both_sides}
\end{figure}

We call the mapping vector \((\tilde{f}_n, \varphi_n, \phi_n)\) an \((n, R, B)\)-code for the DMS pair \((X^n, Y^n)\).
The performance of an \((n, R, B)\)-code is measured by the decoding error probability
\begin{equation}
    \tilde{p}_{\mathrm{e}}(n,R,B) \triangleq \P \{ \hat{X}^n \neq X^n \}.
\end{equation}
We say that the rate \(R\) is achievable if there exists a sequence of \((n,R,B)\)-codes such that \(\tilde{\lambda}(n, R, B) \to 0\).
The optimal (i.e. minimum) achievable rate for this problem \cite[Problem 10.10]{gamalNetworkInformationTheory2011} is known to be
\begin{equation}
    R_{\text{h}}(B) = \min_{P_{U|Y}} H(X|U) \qquad \text{s.t.}  \quad  I(Y; U) \leq B, \ X \to Y \to U
\end{equation}
where  \(|\mathcal{U}| \leq |\mathcal{Y}| + 1\).
That is, source coding with a helper at both sides shares the same rate limit as the WAK problem.
Here, we are interested in the reliability function \(\tilde{E}_{\text{h}}(R, B)\), i.e., the maximum \(\beta \geq 0\) for which there exists a sequence of \((n, R, B)\)-codes such that
\begin{equation}
    \liminf_{n \to \infty} -\frac{1}{n} \log \tilde{p}_{\mathrm{e}}(n, R, B) \geq \beta, \quad \text{where } R > R_{\text{h}}(B).
\end{equation}
Similarly, we also consider the strong converse exponent of this problem, and denote it by \(\tilde{F}_{\mathrm{h}}(R,B)\).

We first show that the IB source under the excess-distortion criterion is equivalent to source coding with a helper at both sides.
To differentiate the problems, we will denote the error exponent and strong converse exponent of the IB source in Theorem \ref{thm:exponent_remote_log_loss} and Theorem \ref{thm:strong_converse_exponent_remote_log_loss} by \(E_{\mathrm{IB}}(R, \Delta)\) and \(F_{\mathrm{IB}}(R, \Delta)\) respectively.


\begin{theorem}
    \label{thm:chp_sphere_packing_exponent_WAK/source_coding_helper_both_sides_equivalent_IB_source}
    Source coding with a helper at both sides under rate-limit pair \((R, B)\) is equivalent to the IB source under rate-distortion pair \((B, R)\): each code for one problem can be transformed into a code for the other. Hence, we have
    \begin{equation}
        \tilde{E}_{\mathrm{h}}(R, B) = E_{\mathrm{IB}}(B, R) \quad \text{and} \quad \tilde{F}_{\mathrm{h}}(R, B) = F_{\mathrm{IB}}(B, R).
    \end{equation}
\end{theorem}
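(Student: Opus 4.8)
The plan is to prove the equivalence at the level of finite-blocklength codes, and then read off both exponent identities for free. Concretely, I will exhibit two code transformations: one that sends every $(n,R,B)$-code $(\tilde f_n,\varphi_n,\phi_n)$ for source coding with a helper at both sides to an IB-source code operating at helper rate $B$ and distortion level $R$, and one going the other way, and I will show that the relevant error probability does not increase under either transformation. Since this holds for every $n$, it forces $p_{\mathrm{e}}(n,B,R) = \tilde\lambda^\ast(n,R,B)$ for all $n$, where $\tilde\lambda^\ast(n,R,B)\triangleq\min_{\tilde f_n,\varphi_n,\phi_n}\tilde\lambda(n,R,B)$; taking $-\tfrac1n\log(\cdot)$ and passing to the limit then yields $\tilde E_{\mathrm h}(R,B)=E_{\mathrm{IB}}(B,R)$, and applying the same to the non-excess/non-error probabilities (which are the complements) yields $\tilde F_{\mathrm h}(R,B)=F_{\mathrm{IB}}(B,R)$. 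No asymptotic analysis beyond this is needed.

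The bridge is the list-decoding reformulation of the logarithmic loss from \cite{shkelSingleshotApproachLossy2018}: for any $\hat P\in\mathcal P(\mathcal X^n)$ the set $\mathcal L(\hat P)\triangleq\{x^n:\bar d(x^n,\hat P)\le R\}=\{x^n:\hat P(x^n)\ge e^{-nR}\}$ has $|\mathcal L(\hat P)|\le e^{nR}$, and conversely the uniform distribution on any list $\mathcal L$ with $|\mathcal L|\le e^{nR}$ achieves $\bar d(x^n,\hat P)\le R$ for all $x^n\in\mathcal L$; so, for a fixed forwarded index, the event $\{\bar d(X^n,\hat P)\le R\}$ is exactly the event that $X^n$ lies in a decoder list of size at most $e^{nR}$. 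On the other side, in the both-sides problem the helper's description $k=\varphi_n(Y^n)\in[e^{nB}]$ is available to \emph{both} the transmitter and the receiver, so conditioned on $k$ the transmitter--receiver pair performs ordinary conditional source coding of $X^n$ at rate $R$: the reachable reconstructions form, for each $k$, the list $\{\phi_n(j,k):j\in[e^{nR}]\}$ of size at most $e^{nR}$. This is precisely the list structure a soft IB estimate produces, which is why distortion level $R$ matches transmitter rate $R$ and the helper rate $B$ is shared. For the forward transformation I would keep $f_n\triangleq\varphi_n$ and let the IB receiver output, for each $k$, the uniform distribution $\hat P_k$ over $\mathcal L_k\triangleq\{\phi_n(j,k):j\in[e^{nR}]\}$; then $\bar d(X^n,\hat P_{f_n(Y^n)})>R\iff X^n\notin\mathcal L_{\varphi_n(Y^n)}$, while $\hat X^n=\phi_n(\tilde f_n(X^n,\varphi_n(Y^n)),\varphi_n(Y^n))\in\mathcal L_{\varphi_n(Y^n)}$, so $\hat X^n=X^n$ implies $X^n\in\mathcal L_{\varphi_n(Y^n)}$, giving IB excess-distortion probability $\le\tilde\lambda(n,R,B)$. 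For the reverse transformation, given an IB code $(f_n,g_n)$ at rate $B$ and level $R$, set $\varphi_n\triangleq f_n$, write $\hat P_k\triangleq g_n(k)$, form $\mathcal L_k\triangleq\{x^n:\hat P_k(x^n)\ge e^{-nR}\}$ (each of size $\le e^{nR}$ by the bound above), fix an enumeration of each $\mathcal L_k$, let the transmitter send the index of $X^n$ within $\mathcal L_{\varphi_n(Y^n)}$ (an arbitrary index if $X^n\notin\mathcal L_{\varphi_n(Y^n)}$), and let the receiver decode to the enumerated sequence; then $X^n\in\mathcal L_{\varphi_n(Y^n)}$ implies $\hat X^n=X^n$, so $\tilde\lambda(n,R,B)\le$ the IB excess-distortion probability of $(f_n,g_n)$. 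Minimizing over codes on both sides gives the desired per-$n$ equality.

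The argument is essentially bookkeeping once this reformulation is in place, so the step I would flag as the crux is the conceptual one: recognizing that ``helper at both sides'' collapses to conditional source coding with common side information $k$, which is exactly list decoding with list size $e^{nR}$, matching the soft-estimate/list equivalence for the logarithmic loss. Two routine points need a word of care: the integer rounding implicit in $[e^{nR}]$, which is harmless because list sizes are integers bounded by $e^{nR}$ and hence by $\lfloor e^{nR}\rfloor$; and the benign ``arbitrary completion'' of the transmitter map on the branch $X^n\notin\mathcal L_{\varphi_n(Y^n)}$, which only enlarges the error event on a set already counted as an error. Finally, Corollary \ref{thm:chp_sphere_packing_exponent_WAK/WAK_connections_IB_source} follows by specializing to encoders $\tilde f_n$ that do not use $k$, i.e.\ depend on $X^n$ alone: these are exactly the WAK transmitters, so WAK codes form a subset of both-sides codes and hence (via the forward transformation) of IB codes, which makes the IB error exponent an upper bound on every achievable WAK error exponent.
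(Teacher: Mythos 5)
Your proposal is correct and follows essentially the same route as the paper: both rest on the log-loss/list-decoding equivalence of Lemma \ref{lem:connection_log_loss_list_decoding} and the same two code transformations (decoder range $\{\phi_n(j,k)\}_{j}$ $\to$ uniform soft estimate, and $R$-covered list $\to$ enumeration sent by the transmitter). The only minor difference is the final step: you deduce equality of the per-$n$ minimum error probabilities directly from the two-sided containment of error events under the transformations, whereas the paper exhibits the common optimal error probability explicitly via the probability rank function $G$; your bookkeeping is, if anything, slightly more streamlined.
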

\begin{proof}
See Section \ref{sec:chp_sphere_packing_exponent_WAK/source_coding_helper_both_sides_equivalent_IB_source}.
\end{proof}

We next turn our attention to the WAK problem, differing in that the link between the helper and the transmitter in Fig. \ref{fig:source_coding_both_sides} is now absent.
Denote its error exponent and strong converse exponent by \(E_{\mathrm{h}}(R, B)\) and \(F_{\mathrm{h}}(R,B)\) respectively.
Then, we have this immediate result.
\begin{corollary}
    \label{thm:chp_sphere_packing_exponent_WAK/WAK_connections_IB_source}
    Codes for the WAK problem under rate-limit pair \((R, B)\) form a subset of codes for the IB source under rate-distortion pair \((B, R)\). Hence, we have
    \begin{equation}
        E_{\mathrm{h}}(R, B) \leq  E_{\mathrm{IB}}(B, R) \quad \text{and} \quad F_{\mathrm{h}}(R, B) \geq F_{\mathrm{IB}}(B,R).
    \end{equation}
\end{corollary}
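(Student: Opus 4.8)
The plan is to obtain this result as an immediate consequence of Theorem \ref{thm:chp_sphere_packing_exponent_WAK/source_coding_helper_both_sides_equivalent_IB_source}, by arguing that any WAK code is a valid code for source coding with a helper at both sides. First I would observe that the WAK problem is literally the special case of source coding with a helper at both sides (Fig. \ref{fig:source_coding_both_sides}) in which the transmitter's encoder $\tilde{f}_n$ is constrained to ignore its second input, i.e., $\tilde{f}_n(x^n, m_B) = \tilde{f}_n(x^n)$ depends on $x^n$ only. A WAK code $(f_n^{\mathrm{WAK}}, \varphi_n, \phi_n)$ with transmitter rate $R$ and helper rate $B$ is therefore, verbatim, an $(n, R, B)$-code for source coding with a helper at both sides (set $\tilde{f}_n(x^n, m_B) := f_n^{\mathrm{WAK}}(x^n)$), and it achieves exactly the same error probability $\tilde{\lambda}(n,R,B) = \P\{\hat{X}^n \neq X^n\}$. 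Hence every sequence of WAK codes with error probability decaying at exponential rate $\beta$ yields a sequence of helper-at-both-sides codes with the same rates and the same exponential decay rate. This gives $E_{\mathrm{h}}(R,B) \leq \tilde{E}_{\mathrm{h}}(R,B)$ for the error exponent and, by the identical inclusion applied to the non-error (correct decoding) probability $\P\{\hat{X}^n = X^n\}$, also $F_{\mathrm{h}}(R,B) \geq \tilde{F}_{\mathrm{h}}(R,B)$ for the strong converse exponent. (The inequality directions flip because a smaller class of codes can only make the best achievable error probability larger, hence its exponent smaller, while the best achievable correct-decoding probability becomes smaller, hence its strong-converse exponent larger.)

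Then I would simply substitute the equivalences $\tilde{E}_{\mathrm{h}}(R, B) = E_{\mathrm{IB}}(B, R)$ and $\tilde{F}_{\mathrm{h}}(R, B) = F_{\mathrm{IB}}(B, R)$ from Theorem \ref{thm:chp_sphere_packing_exponent_WAK/source_coding_helper_both_sides_equivalent_IB_source} to conclude $E_{\mathrm{h}}(R, B) \leq E_{\mathrm{IB}}(B, R)$ and $F_{\mathrm{h}}(R, B) \geq F_{\mathrm{IB}}(B,R)$. The statement "codes for the WAK problem form a subset of codes for the IB source" is the composition of two inclusions: WAK codes $\subseteq$ helper-at-both-sides codes (the easy embedding above) and helper-at-both-sides codes $\leftrightarrow$ IB source codes (the nontrivial code transformation established in Theorem \ref{thm:chp_sphere_packing_exponent_WAK/source_coding_helper_both_sides_equivalent_IB_source}, viewing the latter as conditional source coding on the helper's description and invoking the logarithmic-loss/list-decoding correspondence of \cite{shkelSingleshotApproachLossy2018}).

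The proof itself is essentially trivial given Theorem \ref{thm:chp_sphere_packing_exponent_WAK/source_coding_helper_both_sides_equivalent_IB_source}; there is no real obstacle here. The only point requiring a line of care is making explicit that restricting to a subclass of codes preserves achievability of any given error probability (so exponents can only weaken in the expected direction), and keeping the two rate arguments $(R,B)$ versus $(B,R)$ consistently swapped, since the transmitter rate of the WAK/helper problem plays the role of the distortion level $\Delta$ in the IB source and the helper rate plays the role of the IB rate $R$. All of the substantive work — the code transformation, the cardinality bounds on $\mathcal{U}$, and the single-letterization — has already been done in Theorem \ref{thm:chp_sphere_packing_exponent_WAK/source_coding_helper_both_sides_equivalent_IB_source} and Theorems \ref{thm:exponent_remote_log_loss}–\ref{thm:strong_converse_exponent_remote_log_loss}.
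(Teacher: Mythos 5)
Your proposal is correct and matches the paper's intent exactly: the paper states the corollary as an immediate consequence of Theorem \ref{thm:chp_sphere_packing_exponent_WAK/source_coding_helper_both_sides_equivalent_IB_source}, with the implicit embedding being precisely the one you spell out, namely that a WAK code is a helper-at-both-sides code whose transmitter ignores the helper's description. Your accounting of the inequality directions and the $(R,B)\leftrightarrow(B,R)$ swap is also consistent with the paper.
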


A sphere packing exponent for the WAK problem, which coincides with the one obtained from
\begin{equation*}
    E_{\mathrm{h}}(R, B) \leq  E_{\mathrm{IB}}(B, R),
\end{equation*}
has been derived by Kang and Liu in \cite[Theorem 4]{kangUpperBoundError2018} using the inherently typical subset lemma as well as its related results (see equation (34) in \cite{kangUpperBoundError2018}), developed by Ahlswede \emph{et al.} in \cite{ahlswedeIdentificationCompressedData1997}.
The derivation we present here is perhaps more direct. Moreover, it provides an operational interpretation for the bound by linking it to source coding with a helper at both sides, and uncovers the connections between the WAK problem and the IB source.
Further, our converse proof of Theorem \ref{thm:exponent_remote_log_loss} is of probabilistic nature, which contrasts sharply with the proof of the inherently typical subset lemma in \cite[Section III]{ahlswedeIdentificationCompressedData1997} that relies on a quantization argument; and the proof of \cite[Theorem 4]{kangUpperBoundError2018} that follows the method of types built upon the inherently typical subset lemma.
Last but not least, we also presented a cardinality bound in Theorem \ref{thm:exponent_remote_log_loss}, i.e., \(|\mathcal{U}| \leq 2|\mathcal{X}| |\mathcal{Y}| + 2\), showing the computability of the sphere packing bound, which is absent from \cite[Theorem 4]{kangUpperBoundError2018}.

The other bound \(F_{\mathrm{h}}(R, B) \geq F_{\mathrm{IB}}(B, R)\) has also been derived in \cite{takeuchiTightExponentialStrong2025}.
Furthermore, it holds that \(F_{\mathrm{h}}(R, B) = F_{\mathrm{IB}}(B, R)\), as shown in \cite[Theorem 2]{takeuchiTightExponentialStrong2025}.
The reason for this equality is that employing random binning at the transmitter and the minimum entropy decoder at the receiver can enable the receiver to correctly decode the sequence, even when the helper is absent at the transmitter.
Similar to the sphere packing exponent above, our derivation here provides an operational interpretation for the bound \(F_{\mathrm{h}}(R, B) \geq F_{\mathrm{IB}}(B, R)\) by linking the WAK problem to the IB source.

\section{Preliminaries}
\label{sec:preliminaries}
Before proceeding to the proofs of the main results, we discuss some preliminary results and techniques that will be useful down the line.

\subsection{Connection Between Log-loss and List Decoding}
\label{sec:chp_remote_log_loss/connection_list_decoding}
Shkel and Verdú \cite{shkelSingleshotApproachLossy2018} established a fundamental connection between the logarithmic loss and list decoding.
To this end, for a fixed soft reconstruction \(\hat{P} \in \mathcal{P}(\mathcal{X}^n)\) and a distortion level \(\Delta \geq 0\), we say a sequence \(x^n \in \mathcal{X}^n\) is \(\Delta\)-covered by \(\hat{P}\) if \(\bar{d}(x^n, \hat{P}) \leq \Delta\).
If \(\hat{P}\) \(\Delta\)-covers every element from a set (or list) \(\mathcal{L}_n \subseteq \mathcal{X}^n\), then the set \(\mathcal{L}_n\) is also said to be \(\Delta\)-covered by \(\hat{P}\).
\begin{lemma}
    \label{lem:connection_log_loss_list_decoding}
    For every set \(\mathcal{L}_n \subseteq \mathcal{X}^n\), there exists a soft reconstruction \(\hat{P} \in \mathcal{P}(\mathcal{X}^n)\) that \(\Delta\)-covers \(\mathcal{L}_n\) if and only if
    \begin{equation}
        | \mathcal{L}_n | \leq \floor{\exp{n\Delta}}.
    \end{equation}
\end{lemma}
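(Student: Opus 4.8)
The plan is to prove the two directions of the equivalence separately, in both cases exploiting the fact that the logarithmic loss condition $\bar d(x^n,\hat P)\le\Delta$ is equivalent to the pointwise bound $\hat P(x^n)\ge e^{-n\Delta}$.

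First I would prove the "only if" direction (necessity of the cardinality bound). Suppose $\hat P\in\mathcal P(\mathcal X^n)$ $\Delta$-covers $\mathcal L_n$. Then for every $x^n\in\mathcal L_n$ we have $\bar d(x^n,\hat P)=-\frac1n\log\hat P(x^n)\le\Delta$, i.e.\ $\hat P(x^n)\ge e^{-n\Delta}$. Summing this bound over all $x^n\in\mathcal L_n$ and using that $\hat P$ is a pmf on $\mathcal X^n$,
\begin{equation}
    1\ge\sum_{x^n\in\mathcal L_n}\hat P(x^n)\ge|\mathcal L_n|\,e^{-n\Delta},
\end{equation}
which gives $|\mathcal L_n|\le e^{n\Delta}$, and since $|\mathcal L_n|$ is an integer, $|\mathcal L_n|\le\lfloor e^{n\Delta}\rfloor$.

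Next I would prove the "if" direction (sufficiency) by explicitly constructing a covering distribution. Assume $|\mathcal L_n|\le\lfloor e^{n\Delta}\rfloor$; if $\mathcal L_n$ is empty the claim is trivial, so assume $|\mathcal L_n|\ge1$. Take $\hat P$ to be the uniform distribution on $\mathcal L_n$, i.e.\ $\hat P(x^n)=1/|\mathcal L_n|$ for $x^n\in\mathcal L_n$ and $\hat P(x^n)=0$ otherwise; this is a valid pmf on $\mathcal X^n$. Then for every $x^n\in\mathcal L_n$,
\begin{equation}
    \bar d(x^n,\hat P)=-\frac1n\log\frac{1}{|\mathcal L_n|}=\frac1n\log|\mathcal L_n|\le\frac1n\log\lfloor e^{n\Delta}\rfloor\le\frac1n\log e^{n\Delta}=\Delta,
\end{equation}
so $\hat P$ $\Delta$-covers $\mathcal L_n$. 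This completes the construction.

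There is no real obstacle here — the statement is essentially a repackaging of the pigeonhole principle plus the observation that the log-loss threshold $\Delta$ corresponds to a pointwise probability floor $e^{-n\Delta}$. The only mild subtlety is handling the floor function correctly (the uniform distribution on a list of size exactly $\lfloor e^{n\Delta}\rfloor$ achieves distortion $\frac1n\log\lfloor e^{n\Delta}\rfloor\le\Delta$, with equality only when $e^{n\Delta}$ is itself an integer), and the degenerate case $\mathcal L_n=\emptyset$, both of which are dispatched in a line.
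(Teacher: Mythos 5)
Your proposal is correct and follows essentially the same argument as the paper: necessity by summing the pointwise bound $\hat P(x^n)\ge e^{-n\Delta}$ over $\mathcal L_n$ and using integrality to insert the floor, and sufficiency by taking $\hat P$ uniform on $\mathcal L_n$. The only (harmless) addition is your explicit treatment of the empty-list case.
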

\begin{proof}
The proof follows from \cite[Lemma 1]{shkelSingleshotApproachLossy2018}.
Given any \(x^n\) and \(\hat{P}\),  we have \(\bar{d}(x^n, \hat{P}) \leq \Delta\) if and only if \(\hat{P}(x^n) \geq e^{-n\Delta}\).
Hence, for a set \(\mathcal{L}_n\), there exists a \(\hat{P}\) that \(\Delta\)-covers \(\mathcal{L}_n\) only if there exists a \(\hat{P}\) satisfying \(\hat{P}(x^n) \geq e^{-n\Delta}\) for every \(x^n \in \mathcal{L}_n\).
Consequently, we must have
\begin{equation}
    1 = \sum_{x^n \in \mathcal{X}^n} \hat{P}(x^n) \geq \sum_{x^n \in \mathcal{L}_n} \hat{P}(x^n) \geq |\mathcal{L}_n| \times e^{-n\Delta},
\end{equation}
which implies \(| \mathcal{L}_n | \leq \floor{\exp{n\Delta}}\).
Note that we tightened the upper bound through the floor function.
On the other hand, if a set \(\mathcal{L}_n\) satisfies \(| \mathcal{L}_n | \leq \floor{\exp{n\Delta}}\), then we can choose \(\hat{P}\) to be uniformly distributed over \(\mathcal{L}_n\).
We see that \(\mathcal{L}_n\) is \(\Delta\)-covered by this \(\hat{P}\), since
\begin{equation}
    \hat{P}(x^n) = \frac{1}{|\mathcal{L}_n|} \geq \frac{1}{\floor{\exp{n\Delta}}} \geq e^{-n\Delta} \qquad \forall x^n \in \mathcal{L}_n,
\end{equation}
which completes the proof.
\end{proof}

We next discuss how to connect the logarithmic loss to list decoding using Lemma \ref{lem:connection_log_loss_list_decoding}.
For every  \(\hat{P} \in \mathcal{P}(\mathcal{X}^n)\), we write the set of all sequences that are \(\Delta\)-covered by \(\hat{P}\) as
\begin{equation}
    \mathcal{F}(\hat{P}, \Delta) \triangleq \{ x^n \in \mathcal{X}^n : \bar{d}(x^n, \hat{P}) \leq \Delta \}.
\end{equation}
Consider an \(x^n \in \mathcal{X}^n\) and a soft reconstruction \(\hat{P}\).
Under the excess distortion criterion, we are interested in whether \(\bar{d}(x^n, \hat{P}) \leq \Delta\) or not, i.e., whether \(x^n \in \mathcal{F}(\hat{P}, \Delta)\) or not.
Note that this is different from the average distortion criterion, where instead we are interested in the value of \(\bar{d}(x^n, \hat{P}) \).
Since \(\mathcal{F}(\hat{P}, \Delta)\) is \(\Delta\)-covered by \(\hat{P}\), from Lemma \ref{lem:connection_log_loss_list_decoding}, we have
\begin{eqnarray}
    | \mathcal{F}(\hat{P}, \Delta) | \leq \floor{ \exp{n\Delta} }.
\end{eqnarray}
Therefore, a soft reconstruction \(\hat{P}\) under the logarithmic loss can be treated as a list
\begin{equation}
    \mathcal{L}_n \triangleq \mathcal{F}(\hat{P}, \Delta)
\end{equation}
with list size \(|\mathcal{L}_n| \leq \floor{\exp{n\Delta}}\).
On the other hand, by Lemma \ref{lem:connection_log_loss_list_decoding}, every list \(\mathcal{L}_n\) satisfying \(|\mathcal{L}_n| \leq \floor{\exp{n\Delta}}\) can also be converted into a soft reconstruction \(\hat{P}\), e.g., we can choose \(\hat{P}\) to be uniformly distributed over \(\mathcal{L}_n\).
This shows the connection between the two under the excess distortion criterion.
Owing to this connection, a key technique in the following proofs is approaching the logarithmic loss through list decoding with list size \(\floor{\exp{n\Delta}}\).

\subsection{Type Covering Lemma}
We next review a key tool, known as the type covering lemma, that will be useful further on when we design lossy compression schemes.
The type covering lemma was proposed by Berger \cite{bergerRateDistortionTheory1971} to show the existence of good codes in lossy source coding.
The original version presented in \cite{bergerRateDistortionTheory1971} considered sequences that are jointly typical.
Since here our main interest is the exponents, it is more convenient to work with a modified version of the lemma, where we look into sequences that have the exact joint type.
The version presented here appears in other literature, see, e.g., \cite[Lemma 3.34]{moserAdvancedTopicsInformation2022a}.
\begin{lemma}
    \label{lemma:type_covering}
    For every joint type \(Q_{XY} \in \mathcal{P}_n(\mathcal{X} \times \mathcal{Y})\), there exists a set \(\mathcal{A}_n \subseteq \mathcal{T}_n(Q_Y)\) with
    \begin{equation}
        |\mathcal{A}_n| \ndot{\leq} e^{nI(Q_X, Q_{Y|X})}
    \end{equation}
    such that for every \(\bm{x} \in \mathcal{T}_n(Q_X)\) we can find a \(\bm{y} \in \mathcal{A}_n\) satisfying \(\hat{P}_{\bm{x}  \bm{y}} = Q_{XY}\).
\end{lemma}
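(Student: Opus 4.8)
The plan is to prove the lemma by a random selection (random covering) argument. Fix the joint type \(Q_{XY}\in\mathcal{P}_n(\mathcal{X}\times\mathcal{Y})\) and abbreviate \(I \triangleq I(Q_X, Q_{Y|X})\). Draw sequences \(\bm{Y}_1,\dots,\bm{Y}_N\) independently and uniformly at random from \(\mathcal{T}_n(Q_Y)\), where \(N = \lceil e^{nI}(n+1)^{c}\rceil\) for a constant \(c = c(|\mathcal{X}|,|\mathcal{Y}|)\) to be fixed below. I would show that for this \(N\) the random set \(\mathcal{A}_n \triangleq \{\bm{Y}_1,\dots,\bm{Y}_N\}\) \emph{covers} \(\mathcal{T}_n(Q_X)\) in the required sense---every \(\bm{x}\in\mathcal{T}_n(Q_X)\) admits some \(\bm{Y}_j\) with \(\hat{P}_{\bm{x}\bm{Y}_j} = Q_{XY}\)---with probability strictly less than one of failing; the existence of one good realization then yields the claimed \(\mathcal{A}_n\).

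First I would pin down the per-draw success probability. For a fixed \(\bm{x}\in\mathcal{T}_n(Q_X)\), a uniform draw \(\bm{Y}\) from \(\mathcal{T}_n(Q_Y)\) satisfies \(\hat{P}_{\bm{x}\bm{Y}} = Q_{XY}\) exactly when \(\bm{Y}\in\mathcal{T}_n(Q_{Y|X}\mid\bm{x})\), so this event has probability \(|\mathcal{T}_n(Q_{Y|X}\mid\bm{x})|/|\mathcal{T}_n(Q_Y)|\). Since \(Q_{XY}\) is an \(n\)-type and \(\bm{x}\in\mathcal{T}_n(Q_X)\), the conditional type class is nonempty, and the standard method-of-types estimates \(|\mathcal{T}_n(Q_{Y|X}\mid\bm{x})| \geq (n+1)^{-|\mathcal{X}||\mathcal{Y}|}e^{nH(Q_{Y|X}|Q_X)}\) and \(|\mathcal{T}_n(Q_Y)| \leq e^{nH(Q_Y)}\), combined with \(H(Q_Y)-H(Q_{Y|X}|Q_X)=I\), give a success probability of at least \(p_n \triangleq (n+1)^{-|\mathcal{X}||\mathcal{Y}|}e^{-nI}\).

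Next I would apply a union bound over the type class. By independence, the probability that none of \(\bm{Y}_1,\dots,\bm{Y}_N\) covers a given \(\bm{x}\) is at most \((1-p_n)^N \leq e^{-Np_n}\). Taking \(c = |\mathcal{X}||\mathcal{Y}| + |\mathcal{X}| + 1\) yields \(Np_n \geq (n+1)^{|\mathcal{X}|+1}\), so each \(\bm{x}\) is left uncovered with probability at most \(e^{-(n+1)^{|\mathcal{X}|+1}}\). Since \(|\mathcal{T}_n(Q_X)| \leq e^{nH(Q_X)} \leq |\mathcal{X}|^n\), a union bound shows the probability that \emph{some} \(\bm{x}\in\mathcal{T}_n(Q_X)\) is uncovered is at most \(|\mathcal{X}|^n e^{-(n+1)^{|\mathcal{X}|+1}}\), which is \(<1\) for all large \(n\) (and \(\to 0\)) because the exponent grows super-linearly in \(n\) whenever \(|\mathcal{X}|\geq 2\), the case \(|\mathcal{X}|=1\) being trivial. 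Hence for all large \(n\) a good realization \(\mathcal{A}_n\subseteq\mathcal{T}_n(Q_Y)\) with \(|\mathcal{A}_n|\leq N\) exists; for the finitely many remaining \(n\) one takes \(\mathcal{A}_n = \mathcal{T}_n(Q_Y)\), which covers trivially. Finally \(\frac{1}{n}\log N \to I\), so \(|\mathcal{A}_n| \ndot{\leq} e^{nI}\), completing the proof.

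The only delicate point---there is no deep obstacle, the result being classical---is the bookkeeping of the polynomial-in-\(n\) factors coming from the method of types: they must be absorbed into the slack \((n+1)^c\) in \(N\) so that the per-\(\bm{x}\) failure probability is \emph{doubly} exponentially small (thereby overpowering the merely singly exponential union bound over \(\mathcal{T}_n(Q_X)\)), while at the same time \(\frac{1}{n}\log N\) still converges to \(I\), which is precisely what \(\ndot{\leq}\) demands. A fully deterministic alternative reaches the same conclusion: greedily add to \(\mathcal{A}_n\), at each step, a \(\bm{y}\in\mathcal{T}_n(Q_Y)\) that is in the relation \(\hat{P}_{\bm{x}\bm{y}} = Q_{XY}\) with the largest number of still-uncovered \(\bm{x}\)'s; a double-counting identity---each \(\bm{x}\) is in this relation with exactly \(|\mathcal{T}_n(Q_{Y|X}\mid\bm{x})|\) sequences \(\bm{y}\)---shows this \(\bm{y}\) disposes of at least a fraction \(p_n\) of the uncovered set, which therefore shrinks geometrically and is exhausted after \(O(\mathrm{poly}(n)\,e^{nI})\) steps.
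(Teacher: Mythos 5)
Your random-selection argument is correct: the per-draw success probability bound, the choice of the polynomial slack \(c\), and the union bound over \(\mathcal{T}_n(Q_X)\) all check out, and the polynomial prefactor is properly absorbed by the \(\ndot{\leq}\) notation. The paper itself does not spell out a proof but defers to \cite[Lemma 9.1]{csiszarInformationTheoryCoding2011} and \cite[Lemma 3.34]{moserAdvancedTopicsInformation2022a}, whose argument is essentially the same random-covering construction you give, so your proposal matches the intended proof.
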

\begin{proof}
    The proof is a modification of the proof in \cite[Lemma 9.1]{csiszarInformationTheoryCoding2011}, where sequences with the exact joint type are considered instead of sequences that are jointly typical. See \cite[Lemma 3.34]{moserAdvancedTopicsInformation2022a}.
\end{proof}

\subsection{Intersection of Conditional Type Classes}
We now present a key technique that will be used in the achievability proofs, followed by a discussion of its motivation.
We begin with a known result in the literature.
\begin{lemma}[{\cite[Problem 2.10]{csiszarInformationTheoryCoding2011}}]
\label{lem:intersection_conditional_type_class}
For every \(\bm{x}, \bm{u}\), and conditional type pair \((Q_{Y|X}, Q_{Y|U})\) such that \(\mathcal{T}_n(Q_{Y|X} | \bm{x}) \cap \mathcal{T}_n(Q_{Y|U} | \bm{u})  \neq \emptyset\), we have
\begin{equation}
    |\mathcal{T}_n(Q_{Y|X} | \bm{x}) \cap \mathcal{T}_n(Q_{Y|U} | \bm{u}) | \ndot{=} \max_{Q_{XYU}}\exp{nH_{Q}(Y|XU)},
\end{equation}
where the maximization is over all \(Q_{XYU}\) satisfying the three marginal distribution conditions
\begin{equation*}
    Q_{XY} = \hat{P}_{\bm{x}} \times Q_{Y|X}, \ Q_{YU} = \hat{P}_{\bm{u}} \times Q_{Y|U},\ \text{and} \ Q_{XU} = \hat{P}_{\bm{x} \bm{u}}.
\end{equation*}
\end{lemma}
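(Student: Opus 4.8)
The plan is to reduce the statement about the intersection of two conditional type classes to counting sequences of a prescribed joint type, and then invoke the standard type-counting estimate $|\mathcal{T}_n(Q_{Y|XU}\mid \bm{x},\bm{u})| \doteq \exp\{nH_Q(Y|XU)\}$ (valid whenever the conditional type class is nonempty). First I would fix $\bm{x}$ and $\bm{u}$, note that they determine the joint type $\hat P_{\bm{x}\bm{u}}$, and observe that a sequence $\bm{y}$ lies in $\mathcal{T}_n(Q_{Y|X}\mid \bm{x}) \cap \mathcal{T}_n(Q_{Y|U}\mid \bm{u})$ if and only if the joint type $\hat P_{\bm{x}\bm{y}\bm{u}}$ of the triple $(\bm{x},\bm{y},\bm{u})$ is some $Q_{XYU}$ satisfying the three marginal constraints $Q_{XY} = \hat P_{\bm{x}} \times Q_{Y|X}$, $Q_{YU} = \hat P_{\bm{u}} \times Q_{Y|U}$, and $Q_{XU} = \hat P_{\bm{x}\bm{u}}$. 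Hence the intersection decomposes as a disjoint union, over the finitely many such joint types $Q_{XYU} \in \mathcal{P}_n(\mathcal{X}\times\mathcal{Y}\times\mathcal{U})$, of the conditional type classes $\mathcal{T}_n(Q_{Y|XU}\mid \bm{x},\bm{u})$.

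Next I would bound the union. Since the number of joint types is polynomial in $n$, we have
\begin{equation}
    \max_{Q_{XYU}} |\mathcal{T}_n(Q_{Y|XU}\mid \bm{x},\bm{u})| \;\leq\; |\mathcal{T}_n(Q_{Y|X}\mid \bm{x}) \cap \mathcal{T}_n(Q_{Y|U}\mid \bm{u})| \;\leq\; \mathrm{poly}(n) \cdot \max_{Q_{XYU}} |\mathcal{T}_n(Q_{Y|XU}\mid \bm{x},\bm{u})|,
\end{equation}
where in both the upper and lower bounds the maximization is over joint types $Q_{XYU}$ meeting the three marginal conditions \emph{and} for which the conditional type class is nonempty (the empty ones contribute $0$ and can be dropped). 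Applying the standard estimate $|\mathcal{T}_n(Q_{Y|XU}\mid \bm{x},\bm{u})| \doteq \exp\{nH_Q(Y|XU)\}$ term by term and absorbing the polynomial factor into the $\doteq$ notation yields
\begin{equation}
    |\mathcal{T}_n(Q_{Y|X}\mid \bm{x}) \cap \mathcal{T}_n(Q_{Y|U}\mid \bm{u})| \;\doteq\; \max_{Q_{XYU}} \exp\{nH_Q(Y|XU)\}.
\end{equation}
Finally, I would remark that extending the maximization from nonempty-conditional-type joint types to \emph{all} $Q_{XYU}$ satisfying the marginal constraints is harmless: by continuity of $H_Q(Y|XU)$ in $Q$ and a standard approximation argument (any such $Q_{XYU}$ can be approximated by a sequence of $n$-types with nonempty conditional type classes, using that the given $(\bm{x},\bm{u})$ already realize the required $\mathcal{X}\times\mathcal{U}$-marginal), the maximum over the restricted set converges to the maximum over the full set as $n\to\infty$, which is absorbed into the $\doteq$.

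The main obstacle I anticipate is the bookkeeping around nonemptiness and the approximation in the last step: one must be careful that the hypothesis $\mathcal{T}_n(Q_{Y|X}\mid \bm{x}) \cap \mathcal{T}_n(Q_{Y|U}\mid \bm{u}) \neq \emptyset$ guarantees at least one admissible joint type with a nonempty conditional class (so the maximum is over a nonempty set and the lower bound is meaningful), and that replacing ``admissible $n$-types'' by ``all admissible $Q_{XYU}$'' in the exponent does not change the answer in the limit — this is where the fixed marginal $\hat P_{\bm{x}\bm{u}}$ plays its role, since it constrains which $Q_{XYU}$ are even eligible. Everything else is routine method-of-types estimation.
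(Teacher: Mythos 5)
The paper itself gives no proof of this lemma---it is quoted from Csisz\'ar--K\"orner (Problem 2.10)---so I can only compare your argument against the intended textbook solution, which is exactly your decomposition-plus-sandwich approach. Your first two paragraphs are correct and complete: a \(\bm{y}\) in the intersection is classified by the joint type \(\hat P_{\bm{x}\bm{y}\bm{u}}\), which must satisfy the three marginal conditions; the intersection is therefore the disjoint union of the classes \(\mathcal{T}_n(Q_{Y|XU}\mid\bm{x},\bm{u})\) over admissible joint \(n\)-types; and the polynomial number of types together with the standard estimate \(|\mathcal{T}_n(Q_{Y|XU}\mid\bm{x},\bm{u})|\doteq \exp\{nH_Q(Y|XU)\}\) yields the claim with the maximization taken over admissible joint \emph{types}. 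The observation that any admissible \(n\)-type with the correct \(\mathcal{X}\times\mathcal{U}\)-marginal automatically has a nonempty conditional type class, and that the hypothesis guarantees at least one admissible type, is also handled correctly.

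The one soft spot is the final approximation step, which you rightly flag as the obstacle but then dismiss as ``standard.'' To pass from the maximum over admissible \(n\)-types to the maximum over \emph{all} admissible distributions (in the direction \(\liminf\geq\)), you must approximate an arbitrary admissible \(Q_{XYU}\) by \(n\)-types having \emph{exactly} all three prescribed pairwise marginals, not merely the \(\mathcal{X}\times\mathcal{U}\)-marginal realized by \((\bm{x},\bm{u})\). This is a question about integer points in a three-way transportation polytope with fixed two-margins; such polytopes lack the total-unimodularity and rounding properties of ordinary two-way transportation polytopes, so ``round to a nearby type with the same margins'' is not automatic, particularly when the continuous maximizer lies on or near the boundary or when some marginal entries are of order \(1/n\). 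A careful treatment would, for instance, mix the continuous maximizer with a known lattice point of the polytope (one exists since the intersection is nonempty), exploit concavity of \(H(Y|XU)\) on the polytope (where \(Q_{XU}\) is fixed), and control the lattice approximation in the relative interior. Alternatively, read the maximization in the lemma as ranging over joint types---which is all your sandwich requires and how the exercise is usually resolved---in which case your proof is complete as written.
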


With the help of Lemma \ref{lem:intersection_conditional_type_class}, we can derive the following result.
\begin{lemma}
    \label{lem:probability_intersection_conditional_type_class}
    For every \(\bm{x}, \bm{u}\), and conditional type \( Q_{Y|U}\), we have
    \begin{equation}
        P_{Y|X}^n[  \mathcal{T}_n(Q_{Y|U} | \bm{u})   | \bm{x} ] \ndot{=} \max_{Q_{XYU}} \exp\{ -n(D(Q_{Y|X} \| P_{Y|X} | \hat{P}_{\bm{x}}) + I_{Q} ( Y;U | X ) )\},  \label{eq:probability_intersection_conditional_type_class}
    \end{equation}
    where the maximization is over all \(Q_{XYU}\) satisfying the two marginal distribution conditions
    \begin{equation*}
        Q_{YU} = \hat{P}_{\bm{u}} \times Q_{Y|U} \quad \text{and} \quad Q_{XU} = \hat{P}_{\bm{x} \bm{u}}.
    \end{equation*}
\end{lemma}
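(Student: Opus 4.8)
The plan is to decompose the conditional type class $\mathcal{T}_n(Q_{Y|U} | \bm{u})$ according to the conditional type of the output sequence relative to $\bm{x}$, reduce the probability of interest to a maximum of products of an intersection size and a single-sequence probability, and then substitute the estimate from Lemma~\ref{lem:intersection_conditional_type_class} together with the standard type identity for memoryless channel probabilities.

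First I would write the disjoint decomposition
\begin{equation*}
    \mathcal{T}_n(Q_{Y|U} | \bm{u}) = \bigcup_{Q_{Y|X} \in \mathcal{P}_n(\mathcal{Y} | \mathcal{X})} \big( \mathcal{T}_n(Q_{Y|X} | \bm{x}) \cap \mathcal{T}_n(Q_{Y|U} | \bm{u}) \big),
\end{equation*}
valid since every $\bm{y}$ in the left-hand set has a well-defined conditional type relative to $\bm{x}$. Every $\bm{y} \in \mathcal{T}_n(Q_{Y|X} | \bm{x})$ carries the same probability $P_{Y|X}^n(\bm{y} | \bm{x}) = \exp\{-n(D(Q_{Y|X} \| P_{Y|X} | \hat{P}_{\bm{x}}) + H(Q_{Y|X} | \hat{P}_{\bm{x}}))\}$, and the number of conditional types is polynomial in $n$; hence
\begin{equation*}
    P_{Y|X}^n[ \mathcal{T}_n(Q_{Y|U} | \bm{u}) | \bm{x} ] \ndot{=} \max_{Q_{Y|X}} |\mathcal{T}_n(Q_{Y|X} | \bm{x}) \cap \mathcal{T}_n(Q_{Y|U} | \bm{u})| \cdot \exp\{-n(D(Q_{Y|X} \| P_{Y|X} | \hat{P}_{\bm{x}}) + H(Q_{Y|X} | \hat{P}_{\bm{x}}))\},
\end{equation*}
where the maximization is over those conditional types $Q_{Y|X}$ yielding a nonempty intersection.

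Next I would invoke Lemma~\ref{lem:intersection_conditional_type_class} on each nonempty intersection, obtaining $|\mathcal{T}_n(Q_{Y|X} | \bm{x}) \cap \mathcal{T}_n(Q_{Y|U} | \bm{u})| \ndot{=} \max_{Q_{XYU}} \exp\{n H_Q(Y|XU)\}$ over $Q_{XYU}$ with $Q_{XY} = \hat{P}_{\bm{x}} \times Q_{Y|X}$, $Q_{YU} = \hat{P}_{\bm{u}} \times Q_{Y|U}$, and $Q_{XU} = \hat{P}_{\bm{x}\bm{u}}$. Substituting and using $H_Q(Y|XU) - H(Q_{Y|X} | \hat{P}_{\bm{x}}) = H_Q(Y|XU) - H_Q(Y|X) = -I_Q(Y;U|X)$ (valid because $Q_{XY} = \hat{P}_{\bm{x}} \times Q_{Y|X}$) turns the exponent into $-n(D(Q_{Y|X} \| P_{Y|X} | \hat{P}_{\bm{x}}) + I_Q(Y;U|X))$. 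Since $\hat{P}_{\bm{x}}$ is the $X$-marginal of $\hat{P}_{\bm{x}\bm{u}}$, once $Q_{XU} = \hat{P}_{\bm{x}\bm{u}}$ is imposed the constraint $Q_{XY} = \hat{P}_{\bm{x}} \times Q_{Y|X}$ merely identifies $Q_{Y|X}$ with the conditional of $Q$; thus the nested maximization over $Q_{Y|X}$ and then $Q_{XYU}$ collapses to a single maximization over $Q_{XYU}$ constrained only by $Q_{YU} = \hat{P}_{\bm{u}} \times Q_{Y|U}$ and $Q_{XU} = \hat{P}_{\bm{x}\bm{u}}$, which is exactly \eqref{eq:probability_intersection_conditional_type_class}.

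The step I expect to be the main obstacle is matching the combinatorial side-condition ``the intersection is nonempty'' with the analytic maximization domain in the statement. For the $\ndot{\leq}$ direction this is automatic, since only nonempty terms appear in the sum. For the $\ndot{\geq}$ direction one must check that every distribution $Q_{XYU}$ in the stated domain is approached, as $n$ grows, by a joint type compatible with the fixed $n$-types $\hat{P}_{\bm{x}\bm{u}}$ and $Q_{Y|U}$ whose associated intersection is nonempty and of the claimed exponential size; this is a routine but slightly delicate rounding/type-approximation argument, relying on continuity of $D(\cdot\|\cdot|\cdot)$, $H(\cdot|\cdot)$ and $I(\cdot;\cdot|\cdot)$ and on the nonemptiness already guaranteed within Lemma~\ref{lem:intersection_conditional_type_class}. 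Finally, the degenerate case where the probability on the left is exactly zero forces every feasible $Q_{XYU}$ to have $D(Q_{Y|X} \| P_{Y|X} | \hat{P}_{\bm{x}}) = \infty$, so the right-hand side vanishes as well.
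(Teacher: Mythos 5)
Your proposal follows essentially the same route as the paper's proof: the same decomposition of \(\mathcal{T}_n(Q_{Y|U}|\bm{u})\) over conditional types \(Q_{Y|X}\), the same per-type probability identity, the same invocation of Lemma~\ref{lem:intersection_conditional_type_class}, and the same collapse of the nested maximizations via \(H_Q(Y|XU)-H_Q(Y|X)=-I_Q(Y;U|X)\). The additional care you take about the type-approximation in the \(\ndot{\geq}\) direction and the degenerate zero-probability case is sound but not needed beyond what the paper already implicitly handles.
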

\begin{proof}
The proof of Lemma \ref{lem:probability_intersection_conditional_type_class} is provided in Appendix \ref{apd:proof_probability_intersection_conditional_type_class}.
In \eqref{eq:probability_intersection_conditional_type_class}, we can also write
\begin{equation*}
    D(Q_{Y|X} \| P_{Y|X} | \hat{P}_{\bm{x}}) + I_{Q} ( Y;U | X ) = D(Q_{Y|XU} \| P_{Y|X} | \hat{P}_{\bm{x} \bm{u}}),
\end{equation*}
which will be used in Remark \ref{rem:apd_proof_intersection_conditional_type_class} of Appendix \ref{apd:proof_probability_intersection_conditional_type_class}.
\end{proof}
We discuss the motivation behind Lemma \ref{lem:probability_intersection_conditional_type_class}.
The IB source involves reconstructing a remote source \(X^n\) based on a rate-limited description of its side information sequence \(Y^n\).
The rate-limited description will be constructed using the type covering lemma; hence represented by a certain sequence \(\bm{u}\).
The receiver uses the forwarded \(\bm{u}\) to reconstruct \(X^n\).
As a result, the reconstruction error probability is analyzed by considering different conditional type classes of the form \(\mathcal{T}_n(Q_{X|U}|\bm{u})\), evaluated under the source conditional probability law \(P_{X|Y}^n\).
Given a \(\bm{Y} = \bm{y}\), this gives rise to the probability \(P_{X|Y}^n[  \mathcal{T}_n(Q_{X|U} | \bm{u})   | \bm{y} ]\), whose exponential behavior is characterized in Lemma \ref{lem:probability_intersection_conditional_type_class}, from which we see why the mutual information term \(I_Q(X;U|Y)\) appears in the exponents of information bottleneck and related problems.

\section{Achievability Proof of Theorem \ref{thm:exponent_remote_log_loss}}
\label{sec:chp_remote_log_loss/achievability}
We first focus on the error exponent of the IB source.
In particular, we establish the achievability of Theorem \ref{thm:exponent_remote_log_loss} in this section.
That is, we show that for a DMS pair \(P_{XY}^n\), there exists a sequence of coding schemes \((f_n, g_n)\) with \(||f_n|| \leq e^{nR}\) such that
\begin{align*}
    \P \{ \bar{d}(X^n, g_n(f_n(Y^n)) ) > \Delta \} \ndot{\leq} e^{-nE_{\mathrm{a}}(R, \Delta)},
\end{align*}
where we denote by \(E_{\mathrm{a}}(R,\Delta)\) the RHS of \eqref{eq:main_results_exponent_remote_log_loss}.

\subsection{Coding Scheme}
\label{sec:chp_remote_log_loss/achievability/coding_scheme}
The employed coding scheme \((f_n, g_n)\) is designed using the type covering lemma, i.e., Lemma \ref{lemma:type_covering}.
First, fix an arbitrary auxiliary alphabet \(\mathcal{U}\).
For every type \(Q_{Y} \in \mathcal{P}_n(\mathcal{Y})\), we select a conditional type \(Q_{U|Y} \in \mathcal{P}_n(\mathcal{U} | \mathcal{Y})\) such that
\(I(Q_Y, Q_{U|Y}) \leq R\).
Note that \(Q_{U|Y}\) can vary for different \(Q_Y\).
Let \(Q_U = Q_Y \cdot Q_{U|Y}\).
Following Lemma \ref{lemma:type_covering}, for every type class \(\mathcal{T}_n(Q_Y)\), we find a codebook \(\mathcal{A}_n(Q_Y) = \{\bm{u}_1, \bm{u}_2, \ldots, \bm{u}_{K_n}\} \subseteq \mathcal{T}_n(Q_U) \) that covers \(\mathcal{T}_n(Q_Y)\) under \(Q_{U|Y}\), where
\begin{equation}
    K_n \ndot{\leq} e^{nI(Q_Y, Q_{U|Y})} \leq e^{nR}.
\end{equation}
Hence, we obtain a collection of codebooks \(\{ \mathcal{A}_n(Q_Y)\}_{Q_Y \in \mathcal{P}_n(\mathcal{Y})}\).

The encoder and decoder operate on a type-per-type basis with respect to the observed sequence \(\bm{y}\), accomplished by sending an additional index indicating the type \(\hat{P}_{\bm{y}}\) of \(\bm{y}\).
Since there are at most \((n+1)^{|\mathcal{Y}|}\) types, adding this type index does not break the rate-limit \(R\) asymptotically.
Bearing this in mind, now we are ready to present the coding scheme in detail.
\begin{enumerate}
    \item Consider a type \(Q_Y \in \mathcal{P}_n(\mathcal{Y})\). For every \(\bm{y} \in \mathcal{T}_n(Q_Y)\), the encoder \(f_n\) looks up the codebook \(\mathcal{A}_n(Q_Y)\) and finds a codeword \(\bm{u}_m \in \mathcal{A}_n(Q_Y)\) such that \(\hat{P}_{\bm{u}_m | \bm{y}} = Q_{U|Y}\).
    Since the codebook \(\mathcal{A}_n(Q_Y)\) is constructed using the type covering lemma, the existence of such \(\bm{u}_m\) is guaranteed.
    If there are multiple such codewords, the encoder selects one of them arbitrarily.
    Next, the encoder \(f_n\) sends the index \(m\) to the receiver.
    \item Given a forwarded index \(m \in [K_n]\), the receiver identifies the codeword \(\bm{u}_m\) according to \(\mathcal{A}_n(Q_Y)\).
    Based on \(\bm{u}_m\), the soft estimate \(\hat{P}_m\) is constructed as follows:
    fix a small constant \(\epsilon > 0\) and consider the set
    \begin{equation*}
        \mathcal{L}_n(\bm{u}_m) \triangleq \hspace{-12pt} \bigcup_{  \substack{ Q_{X|U} : \\  H(Q_{X|U} | \hat{P}_{\bm{u}_m}) < \Delta - \epsilon} } \hspace{-12pt} \mathcal{T}_n( Q_{X|U} | \bm{u}_m),
    \end{equation*}
    i.e., \(\mathcal{L}_n(\bm{u}_m)\) is the union of all conditional type classes \( \mathcal{T}_n( Q_{X|U} | \bm{u}_m) \) satisfying \( H(Q_{X|U} | \hat{P}_{\bm{u}_m}) < \Delta - \epsilon \);
    and \(\hat{P}_m\) is chosen to be uniformly distributed over \(\mathcal{L}_n(\bm{u}_m)\).
\end{enumerate}
The above procedure is repeated for every type \(Q_Y\).
Since \(K_n \ndot{\leq} e^{nR}\) holds for every type class \(\mathcal{T}_n(Q_Y)\), the rate-limit \(R\) is asymptotically satisfied.

\subsection{Exponent Analysis}
\label{sec:chp_remote_log_loss/achievability/exponent_analysis}
Consider a type \(Q_Y\) as well as a sequence \(\bm{y} \in \mathcal{T}_n(Q_Y)\).
Assume without loss of generality that under the described coding scheme, we have \(f_n(\bm{y}) = m\).
Thus, conditioned on \(\bm{y}\), the receiver identifies \(\bm{u}_m\) from the codebook \(\mathcal{A}_n(Q_Y)\), and produces the soft estimate \(\hat{P}_m\) that is uniformly distributed over \(\mathcal{L}_n(\bm{u}_m)\).
If \( Q_{X|U}\) satisfies \( H(Q_{X|U} | \hat{P}_{\bm{u}_m}) < \Delta - \epsilon \), then
\begin{align}
    | \mathcal{T}_n( Q_{X|U} | \bm{u}_m) | & \leq (n+1)^{|\mathcal{X}| \times |\mathcal{U}|} \exp \{n H(Q_{X|U} | \hat{P}_{\bm{u}_m}) \} \\
    & < (n+1)^{|\mathcal{X}| \times |\mathcal{U}|} \exp \{n(\Delta-\epsilon)\},
\end{align}
Hence, for sufficiently large \(n\), we have
\begin{align}
    |\mathcal{L}_n(\bm{u}_m)| & \leq \hspace{-12pt} \sum_{  \substack{ Q_{X|U} : \\  H(Q_{X|U} | \hat{P}_{\bm{u}_m}) < \Delta - \epsilon} } \hspace{-12pt} | \mathcal{T}_n( Q_{X|U} | \bm{u}_m)  | \\
    & < (n+1)^{2 \times |\mathcal{X}| \times |\mathcal{U}|} \exp \{n(\Delta-\epsilon)\} \\
    & < e^{n\Delta}.
\end{align}
As a result, when \(n\) is sufficiently large, we see that
\begin{equation}
    \bar{d}( \bm{x}, \hat{P}_m ) = - \frac{1}{n} \log \frac{1}{ |\mathcal{L}_n(\bm{u}_m)| } < \Delta, \qquad \forall \bm{x} \in \mathcal{L}_n(\bm{u}_m).
\end{equation}
Therefore, conditioned on \(\bm{Y} = \bm{y}\), we have \(\bar{d}(\bm{X}, g_n(f_n(\bm{y}))) < \Delta\) if \( \bm{X} = \bm{x} \in \mathcal{L}_n(\bm{u}_m)\).
Then, the excess distortion probability can be upper bounded through
\begin{align}
    &  \P \{ \bar{d}(\bm{X}, g_n(f_n(\bm{Y}))) > \Delta | \bm{Y} = \bm{y} \} \nonumber \\
    & = \P \{ \bar{d}(\bm{X}, \hat{P}_m) > \Delta | \bm{Y} = \bm{y} \}  \\
    & \leq 1 - \P \{ \bm{X} \in \mathcal{L}_n( \bm{u}_m ) | \bm{Y} = \bm{y} \} \\
    & = \P \{ \bm{X} \in \mathcal{L}_n( \bm{u}_m )^c | \bm{Y} = \bm{y} \},
\end{align}
where
\begin{equation}
    \mathcal{L}_n(\bm{u}_m)^{c} \triangleq \hspace{-12pt} \bigcup_{  \substack{ Q_{X|U} : \\  H(Q_{X|U} | \hat{P}_{\bm{u}_m}) \geq \Delta - \epsilon} } \hspace{-12pt} \mathcal{T}_n( Q_{X|U} | \bm{u}_m).
\end{equation}
From the union bound, we obtain
\begin{equation}
    \P \{ \bm{X} \in \mathcal{L}_n( \bm{u}_m )^c | \bm{Y} = \bm{y} \} \leq \hspace{-12pt} \sum_{  \substack{ Q_{X|U} : \\  H(Q_{X|U} | \hat{P}_{\bm{u}_m}) \geq \Delta - \epsilon} } \hspace{-12pt} \P \{ \bm{X} \in \mathcal{T}_n(Q_{X|U} | \bm{u}_m)  | \bm{Y} = \bm{y}\}. \label{eq:chp_remote_log_loss/achievability/sum_over_Q_X_U}
\end{equation}
Now, the task is reduced to evaluating the probability \(\P \{ \bm{X} \in \mathcal{T}_n(Q_{X|U} | \bm{u}_m)  | \bm{Y} = \bm{y}\}\).

Recall that the joint distribution of \((\bm{X}, \bm{Y})\) is \(P_{XY}^n\).
Hence, we can treat the connection from  \(\bm{Y}\) to \(\bm{X}\) as the DMC \(P_{X|Y}\).
In other words, conditioned on \(\bm{Y} = \bm{y}\), it holds that
\begin{equation}
    \P \{ \bm{X} \in \mathcal{T}_n(Q_{X|U} | \bm{u}_m)  | \bm{Y} = \bm{y}\} = P_{X|Y}^n[ \mathcal{T}_n(Q_{X|U} | \bm{u}_m) | \bm{y} ].
\end{equation}
From Lemma \ref{lem:probability_intersection_conditional_type_class}, we see that
\begin{align}
    P_{X|Y}^n[ \mathcal{T}_n(Q_{X|U} | \bm{u}_m) | \bm{y} ] \ndot{=} \max_{Q_{XYU}} \exp \{ -n( D(Q_{X|Y} \| P_{X|Y} | \hat{P}_{\bm{y}}) + I_{Q}(X;U | Y) )  \},
\end{align}
where we consider all \(Q_{XYU}\) satisfying the two marginal distribution conditions
\begin{equation*}
    Q_{XU} = \hat{P}_{\bm{u}_m} \times Q_{X|U} \quad \text{and} \quad Q_{YU} = \hat{P}_{\bm{y} \bm{u}_m}.
\end{equation*}
Recall the assumption that \(\bm{y} \in \mathcal{T}_n(Q_Y)\) and that  we have \(\hat{P}_{\bm{u}_m | \bm{y}} = Q_{U|Y}\) under the described coding scheme, where \(Q_{U|Y}\) is the conditional type selected for \(Q_Y\).
Let \(Q_U = \hat{P}_{\bm{u}_m} = Q_Y \cdot Q_{U|Y}\).
Thus, conditioned on \(\bm{Y} = \bm{y} \in \mathcal{T}_n(Q_Y)\), we obtain
\begin{align}
    & \P \{ \bm{X} \in \mathcal{T}_n(Q_{X|U} | \bm{u}_m)  | \bm{Y} = \bm{y}\} \nonumber \\
    & \ndot{=} \max_{Q_{XYU}} \exp \{ -n( D(Q_{X|Y} \| P_{X|Y} | Q_Y) + I_{Q}(X;U | Y) )  \}, \label{eq:chp_remote_log_loss/achievability/upper_bound_Q_X_U}
\end{align}
where we consider all \(Q_{XYU}\) satisfying the two marginal distribution conditions
\begin{equation*}
    Q_{XU} = Q_U \times Q_{X|U} \quad \text{and} \quad Q_{YU} = Q_Y \times Q_{U|Y}.
\end{equation*}
Substituting \eqref{eq:chp_remote_log_loss/achievability/upper_bound_Q_X_U} into \eqref{eq:chp_remote_log_loss/achievability/sum_over_Q_X_U} yields that conditioned on \(\bm{Y} = \bm{y} \in \mathcal{T}_n(Q_Y)\), we have
\begin{align}
    &  \P \{ \bar{d}(\bm{X}, g_n(f_n(\bm{Y}))) > \Delta | \bm{Y} = \bm{y} \} \nonumber \\
    & \ndot{\leq}  \hspace{-12pt} \sum_{  \substack{ Q_{X|U} : \\  H(Q_{X|U} | \hat{P}_{\bm{u}_m}) \geq \Delta - \epsilon} } \hspace{-12pt}  \max_{Q_{XYU}} \exp \{ -n( D(Q_{X|Y} \| P_{X|Y} | Q_Y) + I_{Q}(X;U | Y) )  \} \\
    & \ndot{ = } \hspace{-12pt} \max_{  \substack{ Q_{X|U} : \\  H(Q_{X|U} | \hat{P}_{\bm{u}_m}) \geq \Delta - \epsilon} } \hspace{-12pt}  \max_{Q_{XYU}} \exp \{ -n( D(Q_{X|Y} \| P_{X|Y} | Q_Y) + I_{Q}(X;U | Y) )  \} \\
    & = \hspace{-12pt} \max_{  \substack{ Q_{X|U} : \\  H(Q_{X|U} | Q_U) \geq \Delta - \epsilon} } \hspace{-12pt}  \max_{Q_{XYU}} \exp \{ -n( D(Q_{X|Y} \| P_{X|Y} | Q_Y) + I_{Q}(X;U | Y) )  \} \\
    & = \hspace{-12pt} \max_{  \substack{ Q_{X|YU} : \\  H_Q(X|U) \geq \Delta - \epsilon} } \hspace{-12pt}   \exp \{ -n( D(Q_{X|Y} \| P_{X|Y} | Q_Y) + I_{Q}(X;U | Y) )  \}, \label{eq:chp_remote_log_loss/achievability/combination_two_maximizations}
\end{align}
where in \eqref{eq:chp_remote_log_loss/achievability/combination_two_maximizations} we combine the two maximizations, i.e., now wo consider all \(Q_{X|YU}\) such that the joint distribution \(Q_{XYU} = Q_Y Q_{U|Y} Q_{X|YU}\) satisfies \(H_Q(X|U) > \Delta - \epsilon\).

Since \eqref{eq:chp_remote_log_loss/achievability/combination_two_maximizations} holds for any type \(Q_Y\) and \(\bm{y} \in \mathcal{T}_n(Q_Y)\), the error probability of the described coding scheme can be asymptotically upper bounded through
\begin{align}
    & \P \{ \bar{d}(\bm{X}, g_n(f_n(\bm{Y}))) > \Delta \} \nonumber \\
    & = \sum_{Q_Y} \sum_{\bm{y} \in \mathcal{T}_n(Q_Y)} \P\{\bm{Y} = \bm{y}\} \times \P \{ \bar{d}(\bm{X}, g_n(f_n(\bm{Y}))) > \Delta | \bm{Y} = \bm{y} \} \nonumber  \\
    & \ndot{\leq} \sum_{Q_Y} \sum_{\bm{y} \in \mathcal{T}_n(Q_Y)} \P\{\bm{Y} = \bm{y}\} \times  \hspace{-12pt} \max_{  \substack{ Q_{X|YU} : \\  H_Q(X|U) \geq \Delta - \epsilon} } \hspace{-12pt}   \exp \{ -n( D(Q_{X|Y} \| P_{X|Y} | Q_Y) + I_{Q}(X;U | Y) )  \}\nonumber  \\
    & \ndot{=} \sum_{Q_Y} \exp\{ -nD(Q_Y \| P_Y) \} \times \hspace{-12pt}  \max_{  \substack{ Q_{X|YU} : \\  H_Q(X|U) \geq \Delta - \epsilon} } \hspace{-12pt}   \exp \{ -n( D(Q_{X|Y} \| P_{X|Y} | Q_Y) + I_{Q}(X;U | Y) )  \} \nonumber \\
    & = \sum_{Q_Y} \max_{  \substack{ Q_{X|YU} : \\  H_Q(X|U) \geq \Delta - \epsilon} } \hspace{-12pt}   \exp \{ -n( D(Q_{XY} \| P_{XY}) + I_{Q}(X;U | Y) )  \} \\
    & \ndot{=} \max_{Q_Y} \max_{  \substack{ Q_{X|YU} : \\  H_Q(X|U) \geq \Delta - \epsilon} } \hspace{-12pt}   \exp \{ -n( D(Q_{XY} \| P_{XY}) + I_{Q}(X;U | Y) )  \} \\
    & = \max_{Q_Y} \min_{ \substack{ Q_{U|Y} : \\ I(Q_Y, Q_{U|Y}) \leq R }} \max_{  \substack{ Q_{X|YU} : \\  H_Q(X|U) \geq \Delta - \epsilon} } \hspace{-12pt}   \exp \{ -n( D(Q_{XY} \| P_{XY}) + I_{Q}(X;U | Y) )  \} \label{eq:chp_remote_log_loss/achievability/optimization_over_P_U_Y},
\end{align}
where in \eqref{eq:chp_remote_log_loss/achievability/optimization_over_P_U_Y} we assume the best \(Q_{U|Y}\) is selected when constructing the compression codebooks.
Achievability is established after noticing \eqref{eq:chp_remote_log_loss/achievability/optimization_over_P_U_Y} holds for any \(\epsilon > 0\) as \(n \to \infty\).

\section{Converse Proof of Theorem \ref{thm:exponent_remote_log_loss}}
\label{sec:chp_remote_log_loss/converse}
Now that the achievability of Theorem \ref{thm:exponent_remote_log_loss} is proved, we turn to the converse in this section.
That is, we show that for a DMS pair \(P_{XY}^n\), every sequence of coding schemes \((f_n, g_n)\) with \(||f_n|| \leq e^{nR}\) must satisfy
\begin{equation}
    \P \{ \bar{d}(X^n, g_n(f_n(Y^n)) ) > \Delta \} \ndot{\geq} e^{-nE_{\mathrm{a}}(R, \Delta)},
\end{equation}
where we recall that \(E_{\mathrm{a}}(R, \Delta)\) denotes the RHS of \eqref{eq:main_results_exponent_remote_log_loss}.
To this end, fix an arbitrary sequence of coding schemes \((f_n, g_n)\) with \(||f_n|| \leq e^{nR}\), and denote the excess distortion region under \((f_n, g_n)\) by
\begin{equation}
    \mathcal{E}_n \triangleq \{ (x^n, y^n) : \bar{d}(x^n, g_n(f_n(y^n))) > \Delta \}.
\end{equation}
Hence, our task here is to show that
\begin{equation}
    P_{XY}^n [\mathcal{E}_n] \ndot{\geq} e^{-nE_{\mathrm{a}}(R, \Delta)}.
\end{equation}
We accomplish this by choosing a test source \(Q_{X^nY^n}\) that is not a DMS.


\subsection{Test Source Construction}
\label{sec:chp_remote_log_loss/auxilliary_source_construction}


First, fix a product alphabet \(\mathcal{U}^n = \mathcal{U}_1 \times \mathcal{U}_2 \times \cdots \times \mathcal{U}_n\), where the auxiliary alphabet \(\mathcal{U}_i\) for every \(i \in [n]\) will be specified further on.
Next, we consider a pmf \(Q_Y\).
For every \(i \in [n]\), we select an arbitrary conditional distribution pair \((Q_{X_i | Y_i U_i},Q_{U_i|Y_i})\) and let
\begin{equation}
    Q_{X_i Y_i U_i} \triangleq Q_{X_i | Y_i U_i} Q_{Y_i} Q_{U_i| Y_i}, \label{eq:chp_remote_log_loss/converse/replacing_Q_X_Y_U}
\end{equation}
where \(Q_{Y_i} = Q_Y\) for every \(i \in [n]\).
From \eqref{eq:chp_remote_log_loss/converse/replacing_Q_X_Y_U}, we construct a new joint distribution by
\begin{equation}
    Q_{X^n Y^n U^n} \triangleq \prod_{i=1}^{n} Q_{X_i Y_i U_i}.
\end{equation}
By marginalizing, we obtain a test source \(Q_{X^n Y^n}\) from \(Q_{X^nY^nU^n}\).
It is evident that \(Q_{X^nY^n}\) is not necessarily a DMS.
A lower bound on \(P_{XY}^n [ \mathcal{E}_n]\) will be established through \(Q_{X^n Y^n}[\mathcal{E}_n]\), i.e., the excess-distortion probability of the test source \(Q_{X^n Y^n}\).

For reasons that will become clear shortly, we turn our attention to the true source \( P_{XY}^n\).
From its marginal distribution \(P_{X_i Y_i} = P_{X_i | Y_i} P_{Y_i} = P_{X|Y}P_{Y}\) for every  \(i \in [n]\), we construct another distribution by
\begin{equation}
    P_{X_i Y_i U_i} \triangleq P_{X_i | Y_i} P_{Y_i} Q_{U_i | Y_i},
\end{equation}
where \(Q_{U_i | Y_i}\) is the same one that we selected in \eqref{eq:chp_remote_log_loss/converse/replacing_Q_X_Y_U}.
Finally, we define
\begin{equation}
    P_{X^n Y^n U^n} \triangleq \prod_{i=1}^{n} P_{X_i Y_i U_i}. \label{eq:chp_remote_log_loss/converse_definition_second_source}
\end{equation}
It is evident that the marginal distribution of \(P_{X^nY^nU^n}\) satisfies \(P_{X^n Y^n} = P_{XY}^n\).
Notice that
\begin{align}
    &\frac{1}{n} D( Q_{X^n Y^n U^n} \| P_{X^n Y^n U^n} ) \nonumber \\
    & = \frac{1}{n} \E_{Q_{X^n Y^n U^n}} \left[  \log \frac{ Q_{X^n Y^n U^n} }{ P_{X^n Y^n U^n} } \right] \\
    & = \frac{1}{n} \sum_{i=1}^n \E_{Q_{X_i Y_i U_i}} \left[  \log \frac{ Q_{X_i Y_i U_i} }{ P_{X_i Y_i U_i} } \right] \label{eq:chp_remote_log_loss/converse/memoryless_divergence} \\
    & = \frac{1}{n} \sum_{i=1}^n D(  Q_{X_i Y_i U_i} \|  P_{X_i Y_i U_i} ) \\
    & \triangleq D(  Q_{X_J Y_J U_J} \|  P_{X_J Y_J U_J} | Q_J )  \label{eq:chp_remote_log_loss/converse/time_sharing_variable} \\
    & = D(  Q_{X_J Y_J U_J J} \|  P_{X_J Y_J U_J J} ),  \label{eq:chp_remote_log_loss/converse/divergence_expectation}
\end{align}
where \eqref{eq:chp_remote_log_loss/converse/memoryless_divergence} is because the two distributions are both memoryless due to the constructions; in \eqref{eq:chp_remote_log_loss/converse/time_sharing_variable} we define the time-sharing random variable \(J \sim Q_J\) that is uniformly distributed over \([n]\).
With the above prelude, now we can proceed to the main part of the proof.

\subsection{Sphere Packing Bound}
We may write \(Q(x^n, y^n, u^n)\) instead of \(Q_{X^n Y^n U^n}(x^n, y^n, u^n)\) if there is no ambiguity. The same also applies to \(P(x^n, y^n, u^n)\) that represents \(P_{X^n Y^n U^n}(x^n, y^n, u^n)\).
For a small constant \(\epsilon > 0\), we define the divergence typical set as
\begin{equation*}
    \mathcal{D}_n^{\epsilon} \triangleq \left\{  (x^n, y^n, u^n) : \Big|\frac{1}{n} \log \frac{Q(x^n, y^n, u^n)}{ P(x^n, y^n, u^n) } - D(Q_{X_J Y_J U_J J} \| P_{X_J Y_J U_J J}) \Big| \leq \epsilon  \right\}.
\end{equation*}
For every  \(i \in [n]\), under the law of \(Q_{X_i Y_i U_i}\), let
\begin{equation}
    Z_i \triangleq \log \frac{ Q_{X_i Y_i U_i} }{ P_{X_i Y_i U_i} }.
\end{equation}
Due to the definitions of \(Q_{X_i Y_i U_i}\) and \( P_{X_i Y_i U_i}\), we see that
\begin{align}
    Z_i & = \log \frac{Q_{X_i Y_i U_i}}{ P_{X_i Y_i U_i} } \\
    & = \log \frac{ Q_{X_i | Y_i U_i} Q_{Y_i} Q_{U_i| Y_i} }{ P_{X_i | Y_i} P_{Y_i} Q_{U_i | Y_i} } \\
    & = \log \frac{ Q_{X_i | Y_i U_i} Q_{Y_i} }{ P_{X_i | Y_i} P_{Y_i} }.
\end{align}
Hence, we get
\begin{equation}
    |Z_i| = \Big| \log \frac{ Q_{X_i | Y_i U_i} Q_{Y_i} }{ P_{X_i | Y_i} P_{Y_i} } \Big| \leq \max_{x,y} \log \frac{1}{ P_{X_iY_i}(x,y) }  = \max_{x,y} - \log P_{XY}(x,y),
\end{equation}
which means that \(Z_i\) is uniformly bounded for every \(i \in [n]\), i.e., \(Z_i\) has uniformly bounded variance for every \(i \in [n]\).\footnote{We assume without loss of generality that \(P_{XY}(x,y) > 0\) for every \((x,y)\).}
From \eqref{eq:chp_remote_log_loss/converse/divergence_expectation} and Chebyshev's inequality, we have
\begin{equation}
    Q_{X^n Y^n U^n} [ \mathcal{D}_n^{\epsilon} ] = \P \big\{ \big|\frac{1}{n}\sum_{i=1}^{n}Z_i -  D(  Q_{X_J Y_J U_J J} \|  P_{X_J Y_J U_J J} ) \big| \leq \epsilon \big\} \geq 1- \alpha_n,
\end{equation}
where \(\alpha_n\) is a linear function of \(\frac{1}{n\epsilon^2}\).
It is clear that for every \((x^n, y^n, u^n) \in \mathcal{D}_n^{\epsilon}\),
\begin{equation}
    P(x^n, y^n, u^n) \geq Q(x^n, y^n, u^n) \exp \{-n(  D(Q_{X_J Y_J U_J J} \| P_{X_J Y_J U_J J}) + \epsilon )\}.
\end{equation}

As mentioned earlier, we lower bound \(P_{XY}^n[ \mathcal{E}_n ]\) through \(Q_{X^n Y^n} [ \mathcal{E}_n]\).
First, notice that
\begin{align}
    Q_{X^n Y^n} [ \mathcal{E}_n ] & = \sum_{(x^n, y^n) \in \mathcal{E}_n} Q_{X^nY^n}(x^n, y^n) \\
    & = \sum_{(x^n, y^n) \in \mathcal{E}_n} \sum_{u^n \in \mathcal{U}^n } Q(x^n, y^n, u^n) \\
    & = Q_{X^n Y^n U^n} [ \mathcal{E}_n \times \mathcal{U}^n ],
\end{align}
where we define the set
\begin{equation}
    \mathcal{E}_n \times \mathcal{U}^n \triangleq \{ (x^n, y^n, u^n) : (x^n, y^n) \in \mathcal{E}_n, u^n \in \mathcal{U}^n \}.
\end{equation}
Consider the intersection of the two sets \(\mathcal{E}_n \times \mathcal{U}^n\) and \(\mathcal{D}_n^{\epsilon}\), i.e., \( \mathcal{E}_n \times \mathcal{U}^n \cap \mathcal{D}_n^{\epsilon}\).
Since we have \(P[\mathcal{A} \cap \mathcal{B}] \geq P[\mathcal{A}] + P[\mathcal{B}]-1\) for any sets \(\mathcal{A}\) and \(\mathcal{B}\), it follows that
\begin{align}
    & Q_{X^n Y^n U^n} [\mathcal{E}_n \times \mathcal{U}^n \cap \mathcal{D}_n^{\epsilon} ] \nonumber \\
    & \geq Q_{X^n Y^n U^n} [ \mathcal{E}_n \times \mathcal{U}^n] + Q_{X^n Y^n U^n} [ \mathcal{D}_n^{\epsilon} ] - 1 \\
    & \geq Q_{X^n Y^n U^n} [ \mathcal{E}_n \times \mathcal{U}^n] - \alpha_n.
\end{align}
Now, we observe that
\begin{align}
    & P_{XY}^n[ \mathcal{E}_n ] \nonumber \\
    & = P_{X^n Y^n} [ \mathcal{E}_n] \\
    & = P_{X^n Y^n U^n} [ \mathcal{E}_n \times \mathcal{U}^n] \label{eq:chp_remote_log_loss/converse/marginalizing_technique} \\
    & \geq P_{X^n Y^n U^n} [ \mathcal{E}_n \times \mathcal{U}^n \cap \mathcal{D}_n^{\epsilon} ] \\
    & \geq Q_{X^n Y^n U^n} [ \mathcal{E}_n \times \mathcal{U}^n \cap \mathcal{D}_n^{\epsilon} ] \times  \nonumber \\
    & \hspace{3cm} \exp \{-n(  D(Q_{X_J Y_J U_J J} \| P_{X_J Y_J U_J J}) + \epsilon )\} \\
    & \geq ( Q_{X^n Y^n U^n} [ \mathcal{E}_n \times \mathcal{U}^n] - \alpha_n ) \times  \nonumber \\
    & \hspace{3cm} \exp \{-n(  D(Q_{X_J Y_J U_J J} \| P_{X_J Y_J U_J J}) + \epsilon )\} \\
    & = ( Q_{X^n Y^n} [ \mathcal{E}_n ] - \alpha_n ) \exp \{-n(  D(Q_{X_J Y_J U_J J} \| P_{X_J Y_J U_J J}) + \epsilon )\}. \label{eq:chp_remote_log_loss/converse/sphere_packing_bound}
\end{align}
Hence, the task reduces to finding a lower bound for \(Q_{X^n Y^n} [ \mathcal{E}_n ]\).

\subsection{Lower Bound for the Excess Distortion Probability}
\label{sec:chp_remote_log_loss/converse/lower_bound_excess_distortion_probability}
We now focus on finding a lower bound for \(Q_{X^n Y^n} [ \mathcal{E}_n ]\).
Recall that \(M = f_n(Y^n)\).
Applying the coding scheme \((f_n, g_n)\) to the test source \(Q_{X^n Y^n}\), we obtain a joint distribution
\begin{equation}
    Q_{X^n Y^n M} = Q_{X^n Y^n} Q_{M|Y^n},
\end{equation}
where \(Q_{M|Y^n} = \idc{M = f_n(Y^n)}\).
By marginalizing, we get \(Q_{X^n M}\).
It follows that
\begin{align}
    Q_{X^n Y^n} [ \mathcal{E}_n ] & = Q_{X^n Y^n} [ \bar{d}(X^n, g_n (f_n(Y^n))) > \Delta ] \\
    & = Q_{X^n M} [ \bar{d}(X^n, g_n (M)) > \Delta ] \\
    & = \sum_{m \in [e^{nR}]} Q_{M}(m) \times Q_{X^n | M = m} [ \bar{d}(X^n, g_n (m)) > \Delta  ]. \label{eq:chp_remote_log_loss/converse/average_over_M}
\end{align}
Given a \(m \in [e^{nR}]\), let \(g_n(m) = \hat{P}\).
Recall the set
\begin{equation}
    \mathcal{F}(\hat{P}, \Delta) \triangleq \{ x^n \in \mathcal{X}^n : \bar{d}(x^n, \hat{P}) \leq \Delta \}.
\end{equation}
Hence, we have
\begin{equation}
    Q_{X^n | M = m} [ \bar{d}(X^n, g_n (m)) > \Delta  ] = 1 - Q_{X^n | M = m} [ X^n \in \mathcal{F}(\hat{P}, \Delta)  ].
\end{equation}
For a random sequence \(X^n\), we denote by \(G: \mathcal{X}^n \to \{1, 2, \ldots, |\mathcal{X}^n| \}\) the probability rank function that ranks the sequences in decreasing order of their probabilities.
\(G(x^n)\) is also known as the optimal guessing function.
Since \(|\mathcal{F}(\hat{P}, \Delta) | \leq \floor{ \exp\{ n \Delta\} }\), we see that
\begin{equation}
    Q_{X^n | M = m} [ X^n \in \mathcal{F}(\hat{P}, \Delta)  ] \leq Q_{X^n | M = m} [ G(X^n) \leq \floor{ \exp\{ n \Delta\} } ],
\end{equation}
where \(G(x^n)\) is the rank function under \(Q_{X^n | M = m}\); the equality holds if \(\mathcal{F}(\hat{P}, \Delta)\) contains the first \( \floor{ \exp\{ n \Delta\} }\) elements that have the largest probabilities.
Hence, we obtain
\begin{equation}
    Q_{X^n | M = m} [ \bar{d}(X^n, g_n (m)) > \Delta  ] \geq Q_{X^n | M = m} [ G(X^n) > \floor{ \exp\{ n \Delta\} } ]. \label{eq:chp_remote_log_loss/converse/rank_function_bound}
\end{equation}
To proceed, we need to recall two known results in the literature, commonly known as the reverse Markov's inequality and the reverse Wyner's inequality.
\begin{lemma}[The Reverse Markov's Inequality]
    \label{lem:reverse_markov_inequality}
    If \(X\) is a nonnegative random variable and \(\P\{X \leq d \} = 1\) for some constant \(d > 0\), then we have
    \begin{equation}
        \P \{X > a \} \geq \frac{\E[X] - a}{d - a}, \qquad \text{for} \ a < \E[X].
    \end{equation}
\end{lemma}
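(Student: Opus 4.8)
The plan is to prove the inequality by a one-line truncation argument: split the mean $\E[X]$ according to whether $X$ lies below or above the threshold $a$, then invoke the two structural constraints on $X$—nonnegativity (which is not even needed in full, only boundedness below $a$ on the relevant event) for the low part, and the almost-sure upper bound $d$ for the high part.

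First I would write
\[
\E[X] = \E\!\left[X\,\idc{X \leq a}\right] + \E\!\left[X\,\idc{X > a}\right].
\]
On the event $\{X \leq a\}$ we have $X \leq a$, so $X\,\idc{X \leq a} \leq a\,\idc{X \leq a}$ pointwise, giving $\E[X\,\idc{X \leq a}] \leq a\,\P\{X \leq a\}$. Since $X \leq d$ almost surely, $X\,\idc{X > a} \leq d\,\idc{X > a}$ almost surely, giving $\E[X\,\idc{X > a}] \leq d\,\P\{X > a\}$. Setting $p \triangleq \P\{X > a\}$ and using $\P\{X \leq a\} = 1 - p$, these combine to
\[
\E[X] \leq a(1-p) + dp = a + (d - a)\,p .
\]

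Next I would note that dividing by $d - a$ is legitimate: from $X \leq d$ a.s. we get $\E[X] \leq d$, hence $a < \E[X] \leq d$ by hypothesis, so $d - a > 0$. Rearranging the last display then yields $p \geq (\E[X] - a)/(d - a)$, which is precisely the claim; moreover $\E[X] - a > 0$ by hypothesis, so the bound is nontrivial. There is no genuine obstacle here—the argument is elementary—the only point warranting a moment's care is verifying $d - a > 0$ before dividing, which, as just observed, follows from $X \leq d$ almost surely together with the standing assumption $a < \E[X]$.
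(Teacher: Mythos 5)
Your proof is correct, and it rests on the same underlying computation as the paper's. The paper obtains the bound in one line by applying the ordinary Markov inequality to \(\tilde{X} \triangleq d - X\) (which is nonnegative because \(X \leq d\) a.s.), namely \(\P\{X \leq a\} = \P\{\tilde{X} \geq d-a\} \leq \E[\tilde{X}]/(d-a) = (d-\E[X])/(d-a)\), and then takes complements; your truncation \(\E[X] = \E[X\,\idc{X \leq a}] + \E[X\,\idc{X > a}] \leq a(1-p) + dp\) is exactly what one gets by unpacking that application of Markov's inequality rather than citing it. The only substantive differences are presentational: yours is self-contained, and you additionally record the (correct) observations that nonnegativity of \(X\) is not actually needed and that \(d - a > 0\) follows from \(a < \E[X] \leq d\), a point the paper leaves implicit.
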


\begin{lemma}[The Reverse Wyner's Inequality]
    \label{lem:reverse_wyner_inequality}
    Given a random variable \(X \sim P_X\) with finite alphabet \(\mathcal{X}\), for every bijective function \(f: \mathcal{X} \to \{1, 2, \ldots, |\mathcal{X}|\}\), we have
    \begin{equation}
        \E[\log f(X)] \geq H(X) - \log (1 + \log|\mathcal{X}| ).
    \end{equation}
\end{lemma}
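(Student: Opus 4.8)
The plan is to establish the bound directly for an arbitrary bijection $f$, without first passing to the probability-ordered ranking. Write $N = |\mathcal{X}|$ and let $\mathrm{supp}(P_X)$ denote the support of $P_X$. Since the symbols outside the support contribute nothing to either $H(X)$ or $\E[\log f(X)]$, the first step is to express the gap as a single expectation over the support,
\begin{equation}
    H(X) - \E[\log f(X)] = \sum_{x \in \mathrm{supp}(P_X)} P_X(x) \log \frac{1}{f(x)\, P_X(x)} .
\end{equation}

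Next I would apply Jensen's inequality to the concave function $\log$ and to the random variable $1/(f(X) P_X(X))$ under $P_X$, obtaining
\begin{equation}
    \sum_{x \in \mathrm{supp}(P_X)} P_X(x) \log \frac{1}{f(x)\, P_X(x)} \leq \log \sum_{x \in \mathrm{supp}(P_X)} P_X(x) \cdot \frac{1}{f(x)\, P_X(x)} = \log \sum_{x \in \mathrm{supp}(P_X)} \frac{1}{f(x)} .
\end{equation}
This is where the ranking structure enters, but only weakly: since $f$ is a bijection onto $\{1, \dots, N\}$, the values $\{f(x) : x \in \mathrm{supp}(P_X)\}$ are distinct integers in $[N]$, so $\sum_{x \in \mathrm{supp}(P_X)} 1/f(x) \leq \sum_{k=1}^{N} 1/k$, irrespective of which symbol receives which rank. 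Finally, invoking the elementary harmonic-number bound $\sum_{k=1}^{N} 1/k \leq 1 + \log N$ (valid for the natural logarithm used throughout, e.g.\ by comparing the tail sum with $\int_1^{N} t^{-1}\, dt$) and chaining the displays gives $H(X) - \E[\log f(X)] \leq \log(1 + \log N)$, which is exactly the claimed inequality after rearranging.

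I do not anticipate a genuine obstacle. The only points requiring care are the bookkeeping around zero-probability symbols in the Jensen step --- handled cleanly by restricting to $\mathrm{supp}(P_X)$ from the outset --- and keeping the logarithm base consistent so that the harmonic bound matches. If one prefers, an alternative route is to first note that, by the rearrangement inequality, $\E[\log f(X)]$ is minimized over bijections by the optimal guessing function $G$ that ranks symbols in decreasing order of probability, and then apply the same Jensen-plus-harmonic-sum argument to $G$ alone; but the direct argument above already covers every bijective $f$ and avoids this reduction.
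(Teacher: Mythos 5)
Your proposal is correct and is essentially the same argument as the paper's: write the gap $H(X) - \E[\log f(X)]$ as a single expectation, apply Jensen's inequality to the logarithm to obtain $\log \sum_x 1/f(x)$, and bound this by the harmonic sum $1 + \log|\mathcal{X}|$. The extra care you take with zero-probability symbols is a harmless refinement of the same proof.
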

Proofs of Lemma \ref{lem:reverse_markov_inequality} and Lemma \ref{lem:reverse_wyner_inequality} are provided in Appendix \ref{apd:proofs_reverse_markov_wyner_inequality} for completeness.
From \eqref{eq:chp_remote_log_loss/converse/rank_function_bound}, we now proceed with
\begin{align}
    & Q_{X^n | M = m} [ \bar{d}(X^n, g_n (m)) > \Delta  ] \nonumber \\
    & \geq Q_{X^n | M = m} [ G(X^n) > \floor{ \exp\{ n \Delta\} } ] \\
    & \geq  Q_{X^n | M = m} [ G(X^n) >  \exp\{ n \Delta\}  ] \\
    & = Q_{X^n | M = m} \big[ \frac{1}{n} \log G(X^n) >  \Delta  \big] \\
    & \geq \frac{ \frac{1}{n}\E_{X^n | M=m }[\log G(X^n)] - \Delta }{|\mathcal{X}| -\Delta } \\
    & \geq \frac{   \frac{1}{n}H_{Q}(X^n | M = m) - \frac{1}{n}\log(1 + n\log |\mathcal{X}|) - \Delta   }{   |\mathcal{X}| - \Delta   } \\
    & \geq \frac{   \frac{1}{n}H_{Q}(X^n | M = m) - \tau - \Delta   }{   |\mathcal{X}| - \Delta   }, \label{eq:chp_remote_log_loss/converse/lower_bound_reverse_markov_wyner_inequality}
\end{align}
where \(\tau > 0\) is a small constant; and we see that \eqref{eq:chp_remote_log_loss/converse/lower_bound_reverse_markov_wyner_inequality} holds when \(n\) is sufficiently large.
Substituting \eqref{eq:chp_remote_log_loss/converse/lower_bound_reverse_markov_wyner_inequality} into \eqref{eq:chp_remote_log_loss/converse/average_over_M}, we conclude that
\begin{align}
    Q_{X^n Y^n} [ \mathcal{E}_n ] \geq \frac{   \frac{1}{n}H_{Q}(X^n | M) - \tau - \Delta   }{   |\mathcal{X}| - \Delta   }.
\end{align}
Notice that
\begin{align}
    \frac{1}{n}H_{Q}(X^n | M) & = \frac{1}{n} \sum_{i=1}^{n}H_{Q}(X_i | M, X^{i-1}) \\
    & \geq \frac{1}{n} \sum_{i=1}^{n}H_{Q}(X_i | M, X^{i-1}, Y^{i-1}). \label{eq:chp_remote_log_loss/converse/select_U_i}
\end{align}
Recall the definition
\begin{equation}
    Q_{X^nY^nU^n} = \prod_{i=1}^{n}Q_{X_iY_iU_i} = \prod_{i=1}^{n}Q_{X_i|Y_iU_i} Q_{Y_i} Q_{U_i|Y_i},
\end{equation}
where \(Q_{Y_i} = Q_Y\) and the conditional distribution pair \((Q_{X_i|Y_iU_i}, Q_{U_i|Y_i})\) can be arbitrarily selected.
We now specify \(\mathcal{U}^n \triangleq \mathcal{U}_1 \times \mathcal{U}_2 \times \cdots \times \mathcal{U}_n\), where \(\mathcal{U}_i \triangleq ( [e^{nR}], \mathcal{X}^{i-1}, \mathcal{Y}^{i-1} )\) for every \(i \in [n]\), and let \(U_i \triangleq (M, X^{i-1}, Y^{i-1})\) that appeared in \eqref{eq:chp_remote_log_loss/converse/select_U_i}.
That is, we assume the actual \(Q_{X^nY^nU^n}\) selected at the beginning of the proof is given by
\begin{align}
    Q_{X^nY^nU^n} & = \prod_{i=1}^{n}Q_{X_i|Y_iU_i} Q_{Y_i} Q_{U_i|Y_i}, \label{eq:chp_remote_log_loss/converse/construction_under_specified_U}
\end{align}
where \(U_i = (M, X^{i-1}, Y^{i-1})\); \(Q_{X_i|Y_iU_i}\) is arbitrarily selected for every \(i \in [n]\); \(Q_{Y_i} = Q_Y\); \(Q_{U_i|Y_i}\) is well-defined and can be determined on the fly from \(i =1\) to \(n\), because \((M,Y^{i-1})\) only depends on \(Q_Y^n\) and the encoder \(f_n\), while the distribution of the past \(X^{i-1}\) is already known at index \(i\) for \(Y_i\) (we can verify this recursively from \(i =1\) to \(n\)).
Under this choice of \(Q_{X^nY^nU^n}\), we can assert that
\begin{align}
    \frac{1}{n}H_{Q}(X^n | M) & \geq \frac{1}{n}\sum_{i=1}^{n}H_{Q}(X_i | M, X^{i-1}, Y^{i-1}) \\
    &= \frac{1}{n} \sum_{i=1}^{n}H_{Q}(X_i | U_i)  \label{eq:chp_remote_log_loss/converse/replacement_U} \\
    & = H_{Q}(X_J | U_J, J).
\end{align}
Assume the chosen \(Q_{X^nY^nU^n}\) satisfies
\begin{equation}
    H_{Q}(X_J | U_J, J) \geq \Delta + \tau + \nu,
\end{equation}
where \(\nu > 0\) is a small constant.
Then, it is clear that we will have
\begin{equation}
    Q_{X^n Y^n} [ \mathcal{E}_n ] \geq \frac{\nu}{|\mathcal{X}| - \Delta}. \label{eq:chp_remote_log_loss/converse/lower_bound_decoding_error_auxiliary_source}
\end{equation}

%

\subsection{Optimizing over Test Sources}
We can optimize over the pmf \(Q_{Y}\) and the conditional distribution \(Q_{X_i | Y_i, U_i}\) for every \(i \in [n]\) in \eqref{eq:chp_remote_log_loss/converse/construction_under_specified_U} to obtain the tightest lower bound.
Hence, from \eqref{eq:chp_remote_log_loss/converse/sphere_packing_bound} and \eqref{eq:chp_remote_log_loss/converse/lower_bound_decoding_error_auxiliary_source}, we see that
\begin{align}
    & P_{X Y}^n [\mathcal{E}_n] \nonumber \\
    & \geq \hspace{-0.3cm} \max_{ \substack{  Q_{X^n Y^n U^n} : \\ H_{Q}(X_J | U_J, J) \geq \Delta + \tau + \nu }} \hspace{-0.3cm} \big( \frac{ \nu }{ |\mathcal{X}| - \Delta } - \alpha_n  \big) \times \nonumber \\
    & \hspace{3.5cm} \exp \{-n(  D(Q_{X_J Y_J U_J J} \| P_{X_J Y_J U_J J}) + \epsilon )\} \label{eq:chp_remote_log_loss/converse/optimizing_over_test_source} \\
    & \ndot{ = } \hspace{-0.3cm} \max_{ \substack{  Q_{X^n Y^n U^n} : \\ H_{Q}(X_J | U_J, J) \geq \Delta + \tau + \nu }} \hspace{-0.3cm} \exp \{-n(  D(Q_{X_J Y_J U_J J} \| P_{X_J Y_J U_J J}) + \epsilon )\} \label{eq:chp_remote_log_loss/converse/reason_sqrt_n} \\
    & = \hspace{-0.3cm} \max_{ \substack{  Q_{X_J Y_J U_J J} : \\ H_{Q}(X_J | U_J, J) \geq \Delta + \tau + \nu }} \hspace{-0.3cm} \exp \{-n(  D(Q_{X_J Y_J U_J J} \| P_{X_J Y_J U_J J}) + \epsilon )\} \label{eq:chp_remote_log_loss/converse/only_depends_on_time_sharing},
\end{align}
where in \eqref{eq:chp_remote_log_loss/converse/optimizing_over_test_source} we consider all auxiliary sources \(Q_{X^n Y^n U^n}\) constructed in the fashion stated in \eqref{eq:chp_remote_log_loss/converse/construction_under_specified_U} as well as under the condition \(H_{Q}(X_J | U_J, J) \geq \Delta + \tau + \nu\);
in \eqref{eq:chp_remote_log_loss/converse/reason_sqrt_n} we notice that for sufficiently large \(n\), we have
\begin{equation}
    \frac{ \nu }{ |\mathcal{X}| - \Delta } - \alpha_n > 0;
\end{equation}
in \eqref{eq:chp_remote_log_loss/converse/only_depends_on_time_sharing}, we notice that both the objective function and the constraint only depend on the time-sharing distribution \(Q_{X_J Y_J U_J J}\) of \(Q_{X^n Y^n U^n}\).

%

Recall the definitions of \(P_{X^nY^nU^n}\) and \(Q_{X^nY^nU^n}\), where we have \(P_{X_iY_i} = P_{XY}\) and \(Q_{Y_i} = Q_Y\).
Hence, we can write
\begin{align}
    P_{X_J Y_J U_J J} & = P_{XY}Q_{U_J, J | Y} \\
    Q_{X_J Y_J U_J J} & = Q_{Y} Q_{U_J, J | Y} Q_{X_J | Y, U_J, J}.
\end{align}
From \eqref{eq:chp_remote_log_loss/converse/only_depends_on_time_sharing}, we proceed with
\begin{align}
    & P_{X Y}^n [\mathcal{E}_n] \nonumber \\
    & \ndot{\geq} \hspace{-0.3cm} \max_{ \substack{  Q_{X_J Y_J U_J J} : \\ H_{Q}(X_J | U_J, J) \geq \Delta + \tau + \nu }} \hspace{-0.3cm} \exp \{-n(  D(Q_{X_J Y_J U_J J} \| P_{X_J Y_J U_J J}) + \epsilon )\} \\
    & = \hspace{-0.3cm} \max_{ \substack{  Q_{Y} Q_{U_J,J | Y} Q_{X_J | Y U_J,J}  : \\ H_{Q}(X_J | U_J,J) \geq \Delta + \tau + \nu }} \hspace{-0.3cm} \exp \{-n(  D( Q_{Y} Q_{U_J,J | Y} Q_{X_J | Y U_J,J} \| P_{XY} Q_{U_J,J | Y}) + \epsilon )\} \label{eq:chp_remote_log_loss/converse/independent_time_sharing} \\
    & \geq \max_{Q_Y} \hspace{-5pt} \min_{ \substack{Q_{U|Y} : \\ I(Q_Y, Q_{U|Y}) \leq R } }  \max_{ \substack{ Q_{X | Y U}  : \\ H_{Q}(X | U) \geq \Delta + \tau + \nu }} \hspace{-10pt} \exp \{-n(  D( Q_{Y} Q_{U|Y} Q_{X | Y U} \| P_{XY} Q_{U|Y} ) + \epsilon )\} \label{eq:chp_remote_log_loss/converse/optimization_over_P_U_Y} \\
    & = \max_{Q_Y} \hspace{-5pt} \min_{ \substack{Q_{U|Y} : \\ I(Q_Y, Q_{U|Y}) \leq R } }  \max_{ \substack{ Q_{X | Y U}  : \\ H_{Q}(X | U) \geq \Delta + \tau + \nu }} \hspace{-10pt} \exp \{-n(  D( Q_{XYU} \| P_{XY} Q_{U|Y} ) + \epsilon )\} \label{eq:chp_remote_log_loss/converse/sphere_packing_exponent},
\end{align}
where \eqref{eq:chp_remote_log_loss/converse/optimization_over_P_U_Y} is due to the following reasons: 
from the familiar converse technique, we have
\begin{align}
    R & \geq \frac{1}{n}H_{Q}(M)  \\
    & \geq \frac{1}{n}I_{Q}(M;Y^n) \\
    & = \frac{1}{n} \sum_{i=1}^{n} I_{Q}(Y_i;M  | Y^{i-1}) \\
    & = \frac{1}{n} \sum_{i=1}^{n} I_{Q}(Y_i; M, Y^{i-1}) \\
    & = \frac{1}{n}  \sum_{i=1}^{n} I_{Q}(Y_i; M, Y^{i-1}, X^{i-1}) \label{eq:chp_remote_log_loss/converse/weak_converse_technique} \\
    & =  \frac{1}{n} \sum_{i=1}^{n} I_{Q}(Y_i; U_i) \\
    & = I(Q_Y, Q_{U_J, J | Y}),
\end{align}
in which \eqref{eq:chp_remote_log_loss/converse/weak_converse_technique} is because the distribution of \(X_{i-1}\) only depends on \((M, X^{i-2}, Y^{i-1})\) in \eqref{eq:chp_remote_log_loss/converse/construction_under_specified_U} and hence from recursion we see that \(X^{i-1}\) only depends on \((M, Y^{i-1})\), leading to the Markov chain \(X^{i-1} \to (M, Y^{i-1}) \to Y_i\);
thus for every selected \(Q_{Y}\) we can lower bound \eqref{eq:chp_remote_log_loss/converse/independent_time_sharing} by minimizing over all \(Q_{U|Y}\) satisfying \(I(Q_Y, Q_{U|Y}) \leq R\).

Now, notice that \eqref{eq:chp_remote_log_loss/converse/sphere_packing_exponent} holds for any \(\tau, \nu, \epsilon > 0\) as \(n \to \infty\).
The proof of the cardinality bound \(|\mathcal{U}| \leq 2|\mathcal{X}| |\mathcal{Y}| + 2\) is built upon the idea of combining the KKT conditions with the support lemma, proposed by Kelly and Wagner in \cite{kellyReliabilitySourceCoding2012}. See Appendix \ref{apd:cardinality_bound_for_error_exponent_remote_log_loss} for the proof.
With this, the converse proof is completed.

\begin{remark}
As seen from the proof, \(Q_{X^n Y^n U^n}\) defined in \eqref{eq:chp_remote_log_loss/converse/construction_under_specified_U} is constructed based on the weak converse for a DMS \(Q_{XY}^n\) where for  every \(i \in [n]\) we have the Markov chain \(Q_{X_i|Y_i}Q_{Y_i}Q_{U_i | Y_i}\) with \(U_i = (M, X^{i-1}, Y^{i-1})\).
Hence, the idea behind \(Q_{X^n Y^n U^n}\) is that we construct a new product distribution by replacing \(Q_{X_i|Y_i}\) with an arbitrarily selected \(Q_{X_i | Y_i U_i}\) to overcome the Markov chain constraint.
\end{remark}

\begin{remark}
    \label{rem:chp_remote_log_loss/connection_to_guessing}
    Graczyk \emph{et al.} \cite{graczykGuessingBasedCompressed2022} studied a guessing variant of remote lossy source coding.
    As the IB source also falls under remote lossy source coding, our converse proof shares some similarities with that of \cite[Section IV]{graczykGuessingBasedCompressed2022}, e.g., we draw inspiration from their construction of the auxiliary source in \cite[Equation (93a)]{graczykGuessingBasedCompressed2022} where the Markov chain \(X^{i-1} \to (M, Y^{i-1}) \to Y_i\) also emerges.
    Yet, the distinct nature of our problem demands a different approach and also gives rise to new challenges.
    Here we highlight a limitation that arises if we attempt to follow their proof in the IB source.
    The converse proof for guessing problems relies on a different style of change of measure arguments, namely, for a positive function \(f(x)\), we have
    \begin{equation}
        \E_{P}[f(X)] = \E_{Q}[ f(X) e^{\log P(X)/Q(X)} ] \geq \exp \{ - D(Q\|P) + \E_Q[\log f(X)] \}.
    \end{equation}
    In the IB source, the objective function \(f(x)\) is an indicator function as we are interested in the excess distortion probability, which will pose difficulties under \(\log f(x)\).
    In our proof, we proceed from the conventional sphere packing approach, and use a second auxiliary source \(P_{X^nY^nU^n}\) defined in \eqref{eq:chp_remote_log_loss/converse_definition_second_source}, together with the marginalizing technique in \eqref{eq:chp_remote_log_loss/converse/marginalizing_technique} and \eqref{eq:chp_remote_log_loss/converse/sphere_packing_bound}, to bring an auxiliary random variable \(U\) into the picture.
    The source \(Q_{X^nY^nU^n}\) constructed in \eqref{eq:chp_remote_log_loss/converse/construction_under_specified_U} also differs from that in \cite[Equation (93a)]{graczykGuessingBasedCompressed2022}, which can be seen by comparing \(Q_{U_i |Y_i}\) in our case to the deterministic mapping \(Q_{U_i | M X^{i-1}Y^{i-1}}\) in theirs (see \cite[Equation (93c)]{graczykGuessingBasedCompressed2022}).
\end{remark}


\section{Achievability Proof of Theorem \ref{thm:strong_converse_exponent_remote_log_loss}}
\label{sec:chp_strong_converse_remote_log_loss/achievability}
Having established the error exponent, we move on to deriving the strong converse exponent for the IB source.
We prove the achievability of Theorem \ref{thm:strong_converse_exponent_remote_log_loss} in this section.
That is, we show that for a DMS pair \(P_{XY}^n\), there exists a sequence of coding schemes \((f_n, g_n)\) with \(||f_n|| \leq e^{nR}\) such that
\begin{equation}
    \P \{ \bar{d}(X^n, g_n(f_n(Y^n))) \leq \Delta \} \ndot{\geq} e^{-nF_{\mathrm{a}}(R, \Delta)},
\end{equation}
where we denote by \(F_{\mathrm{a}}(R, \Delta)\) the RHS of \eqref{eq:strong_converse_exponent_remote_log_loss}.

\subsection{Coding Scheme}
\label{sec:chp_strong_converse_remote_log_loss/achievability/coding_scheme}
The coding scheme is similar to the one presented in Section  \ref{sec:chp_remote_log_loss/achievability/coding_scheme}, with the main difference being that here we do not restrict the selected \(Q_{U|Y}\) to satisfy \(I(Q_Y, Q_{U|Y}) \leq R\).
The employed coding scheme \((f_n, g_n)\) is designed using the type covering lemma, i.e., Lemma \ref{lemma:type_covering}.
First, fix an arbitrary auxiliary alphabet \(\mathcal{U}\).
For every type \(Q_{Y} \in \mathcal{P}_n(\mathcal{Y})\), we select a conditional type \(Q_{U|Y} \in \mathcal{P}_n(\mathcal{U} | \mathcal{Y})\), which can vary for different \(Q_Y\).
Let \(Q_U = Q_Y \cdot Q_{U|Y}\).
Following Lemma \ref{lemma:type_covering}, for every type class \(\mathcal{T}_n(Q_Y)\), we find a set \(\mathcal{A}_n(Q_Y) = \{\bm{u}_1, \bm{u}_2, \ldots, \bm{u}_{K_n}\} \subseteq \mathcal{T}_n(Q_U) \) that covers \(\mathcal{T}_n(Q_Y)\) under \(Q_{U|Y}\), where
\begin{equation}
    K_n \ndot{\leq} e^{nI(Q_Y, Q_{U|Y})}.
\end{equation}
For every \(\bm{u}_i \in \mathcal{A}_n(Q_Y)\), we construct a soft estimate \(\hat{P}_{i}\) in the same fashion as Section \ref{sec:chp_remote_log_loss/achievability/coding_scheme}.
Fix a small constant \(\epsilon > 0\) and consider the set
\begin{equation*}
    \mathcal{L}_n(\bm{u}_i) \triangleq \hspace{-12pt} \bigcup_{  \substack{ Q_{X|U} : \\  H(Q_{X|U} | \hat{P}_{\bm{u}_i}) \leq \Delta - \epsilon} } \hspace{-12pt} \mathcal{T}_n( Q_{X|U} | \bm{u}_i),
\end{equation*}
i.e., \(\mathcal{L}_n(\bm{u}_i)\) is the union of all conditional type classes \( \mathcal{T}_n( Q_{X|U} | \bm{u}_i) \) satisfying \( H(Q_{X|U} | \hat{P}_{\bm{u}_i}) \leq \Delta - \epsilon \);
and \(\hat{P}_i\) is chosen to be uniformly distributed over \(\mathcal{L}_n(\bm{u}_i)\).

Since we may have \(K_n \ndot{>} e^{nR}\), we make the following preparations in order to satisfy the rate-limit \(R\).
According to \(\mathcal{A}_n(Q_Y) = \{\bm{u}_1, \bm{u}_2, \ldots, \bm{u}_{K_n}\}\), we partition \(\mathcal{T}_n(Q_Y)\) into \(K_n\) subregions \(\mathcal{W}_1, \mathcal{W}_2, \ldots, \mathcal{W}_{K_n}\) such that every pair \((\bm{u}_i, \mathcal{W}_i)\) for \(i \in [K_n] \) satisfies
\begin{equation}
    \hat{P}_{\bm{y}\bm{u}_i} = Q_{Y} \times Q_{U|Y}, \qquad \forall \bm{y} \in \mathcal{W}_i.
\end{equation}
The existence of such partition follows from the type covering lemma.
Further, we assume without loss of generality that
\begin{equation}
    |\mathcal{W}_1| \geq |\mathcal{W}_2| \geq \cdots \geq |\mathcal{W}_{K_n}|. \label{eq:chp_strong_converse_remote_log_loss/achievability/size_decreasing}
\end{equation}

The encoder and decoder operate on a type-per-type basis with respect to the observed sequence \(\bm{y}\), accomplished by sending an additional index indicating the type \(\hat{P}_{\bm{y}}\) of \(\bm{y}\).
Since there are at most \((n+1)^{|\mathcal{Y}|}\) types, adding this type index does not break the rate-limit \(R\) asymptotically.
Moreover, we allow the encoder to send one more index \(m=0\) besides \(m \in [e^{nR}]\) (whose purpose will become clear shortly).
In the same manner, this does not break the rate-limit \(R\) either.
Depending on whether the selected \(Q_{U|Y}\) satisfies \(I(Q_Y, Q_{U|Y}) \leq R\) or not, the coding scheme proceeds as follows:
\begin{enumerate}
    \item Suppose \(I(Q_Y, Q_{U|Y}) \leq R\), which corresponds to \(K_n \ndot{\leq} e^{nR}\).
    For every \(\bm{y} \in \mathcal{T}_n(Q_Y)\), if \(\bm{y} \in \mathcal{W}_{m}\), then the encoder sends \(m\); the receiver identifies \(\bm{u}_m\) from \(\mathcal{A}_n(Q_Y)\) and produces \(\hat{P}_m\).
    \item Suppose \(I(Q_Y, Q_{U|Y}) > R\), which corresponds to \(K_n \ndot{>} e^{nR}\).
    For every \(\bm{y} \in \mathcal{T}_n(Q_Y)\), if \(\bm{y} \in \mathcal{W}_{m}\) for a \(m \in [e^{nR}]\), then the encoder sends \(m\); the receiver identifies \(\bm{u}_m\) from \(\mathcal{A}_n(Q_Y)\) and produces \(\hat{P}_m\).
    If \(\bm{y} \in \mathcal{W}_{m}\) where \(m > e^{nR}\), then the encoder sends \(m = 0\) and the decoder produces an arbitrary soft estimate.
\end{enumerate}
The above procedure is repeated for every type \(Q_Y\).

\subsection{Exponent Analysis}
\label{sec:chp_strong_converse_remote_log_loss/achievability/exponent_analysis}
For a type \(Q_Y\), if the selected \(Q_{U | Y}\) is such that \(I(Q_{Y}, Q_{U | Y}) > R\), then there exist some \(\bm{y} \in \mathcal{T}_n(Q_Y)\) such that \(\bm{y} \in \mathcal{W}_m\) where \(m > e^{nR}\), and hence \(f_n(\bm{y}) = 0\).
Since we are looking at \(\P \{ \bar{d}(X^n, g_n(f_n(Y^n)) ) \leq \Delta \}\), these \(\bm{y}\) leading to \(f_n(\bm{y}) = 0\) are out of our interest, and we simply discard such \(\bm{y}\) and declare an error.
Hence, we need to investigate the set
\begin{equation}
    \mathcal{H}_n(Q_Y) \triangleq \{ \bm{y} \in \mathcal{T}_n (Q_Y) : f_n(\bm{y}) \in [e^{nR}]\},
\end{equation}
in particular the size of \(\mathcal{H}_n(Q_Y)\).
If the selected \(Q_{U | Y}\) satisfies \(I(Q_{Y}, Q_{U | Y}) \leq R\), then \(f_n(\bm{y}) = 0\) does not occur and we have
\begin{equation}
    |\mathcal{H}_n(Q_Y)| = |\mathcal{T}_n(Q_Y)|. \label{eq:chp_strong_converse_remote_log_loss/achievability/lower_bound_set_size_less_than_R}
\end{equation}
On the other hand, if \(I(Q_{Y}, Q_{U | Y}) > R\), then  we have
\begin{equation}
    \mathcal{H}_n(Q_Y) = \bigcup_{i=1}^{e^{nR}} \mathcal{W}_i,
\end{equation}
Due to the partitioning, it holds that
\begin{equation}
    | \mathcal{H}_n(Q_Y) | = \sum_{i=1}^{e^{nR}} |\mathcal{W}_i|. \label{eq:chp_strong_converse_remote_log_loss/achievability/sum_size}
\end{equation}
Our aim here is to lower bound \(\P \{ \bar{d}(X^n, g_n(f_n(Y^n)) ) \leq \Delta \}\), and for this it is desirable to find a lower bound for \eqref{eq:chp_strong_converse_remote_log_loss/achievability/sum_size}.
Observe that
\begin{equation}
    \frac{1}{e^{nR}} \sum_{i=1}^{e^{nR}} |\mathcal{W}_i| \geq \frac{1}{K_n}\sum_{i=1}^{K_n} |\mathcal{W}_i|, \label{eq:chp_strong_converse_remote_log_loss/achievability/lower_bound_size_through_average}
\end{equation}
since in the RHS of \eqref{eq:chp_strong_converse_remote_log_loss/achievability/lower_bound_size_through_average} we include smaller sets in the average (recall \eqref{eq:chp_strong_converse_remote_log_loss/achievability/size_decreasing}).
Thus, from \eqref{eq:chp_strong_converse_remote_log_loss/achievability/lower_bound_size_through_average}, we obtain
\begin{align}
    | \mathcal{H}_n(Q_Y) | & \geq e^{n(R - \frac{1}{n}\log K_n)} \sum_{i=1}^{K_n} |\mathcal{W}_i| \\
    & \ndot{ \geq }  e^{n(R - I_Q(Y;U) )} |\mathcal{T}_n(Q_Y)|, \label{eq:chp_strong_converse_remote_log_loss/achievability/lower_bound_set_size_greater_than_R}
\end{align}
where \eqref{eq:chp_strong_converse_remote_log_loss/achievability/lower_bound_set_size_greater_than_R} is due to \(K_n \ndot{\leq} e^{nI_Q(Y;U)}\) and the partitioning.
Combining \eqref{eq:chp_strong_converse_remote_log_loss/achievability/lower_bound_set_size_less_than_R} and \eqref{eq:chp_strong_converse_remote_log_loss/achievability/lower_bound_set_size_greater_than_R}, no matter whether \(I(Q_Y, Q_{U|Y}) > R\) or not, we see that
\begin{align}
    | \mathcal{H}_n(Q_Y) | & \ndot{\geq} e^{-n | I_Q(Y;U) -R |^{+}} |\mathcal{T}_n(Q_Y)| \\
    & \ndot{=} e^{-n( |I_Q(Y;U) -R |^{+} - H(Q_Y) )}. \label{eq:chp_strong_converse_remote_log_loss/achievability/unified_size_lower_bound}
\end{align}

With the above prelude, we now proceed to the main part of the proof. Under the described coding scheme, we have
\begin{align}
    & \P \{ \bar{d}(\bm{X}, g_n ( f_n ( \bm{Y})) ) \leq \Delta \} \nonumber \\
    & = \! \! \! \! \sum_{Q_Y \in \mathcal{P}_n(\mathcal{Y})} \sum_{\bm{y} \in \mathcal{T}_n(Q_Y)} \! \! \!  \P \{ \bm{Y} = \bm{y}\} \times \P \{ \bar{d}(\bm{X}, g_n ( f_n ( \bm{y})) ) \leq \Delta   \} \\
    & \geq \! \! \! \! \sum_{Q_Y \in \mathcal{P}_n(\mathcal{Y})} \sum_{\bm{y} \in \mathcal{H}_n(Q_Y)} \! \! \!  \P \{ \bm{Y} = \bm{y}\} \times \P \{ \bar{d}(\bm{X}, g_n ( f_n ( \bm{y})) ) \leq \Delta   \}.
\end{align}
For a \(\bm{y} \in \mathcal{H}_n(Q_Y)\), let \(f_n(\bm{y}) = m \in [e^{nR}]\).
Hence, the receiver identifies \(\bm{u}_m\) satisfying \(\hat{P}_{\bm{y}\bm{u}_m} = Q_Y \times Q_{U|Y}\); and produces \(\hat{P}_m\) as its soft estimate, which is uniformly distributed over the set
\begin{equation*}
    \mathcal{L}_n(\bm{u}_m) = \hspace{-12pt} \bigcup_{  \substack{ Q_{X|U} : \\  H(Q_{X|U} | \hat{P}_{\bm{u}_m}) \leq \Delta - \epsilon} } \hspace{-12pt} \mathcal{T}_n( Q_{X|U} | \bm{u}_m).
\end{equation*}
If \( Q_{X|U}\) satisfies \( H(Q_{X|U} | \hat{P}_{\bm{u}_m}) \leq \Delta - \epsilon \), then
\begin{align}
    | \mathcal{T}_n( Q_{X|U} | \bm{u}_m) | & \leq (n+1)^{|\mathcal{X}| \times |\mathcal{U}|} \exp \{n H(Q_{X|U} | \hat{P}_{\bm{u}_m}) \} \\
    & \leq (n+1)^{|\mathcal{X}| \times |\mathcal{U}|} \exp \{n(\Delta-\epsilon)\},
\end{align}
Hence, for sufficiently large \(n\), we have
\begin{align}
    |\mathcal{L}_n(\bm{u}_m)| & \leq \hspace{-12pt} \sum_{  \substack{ Q_{X|U} : \\  H(Q_{X|U} | \hat{P}_{\bm{u}_m}) \leq \Delta - \epsilon} } \hspace{-12pt} | \mathcal{T}_n( Q_{X|U} | \bm{u}_m)  | \\
    & \leq (n+1)^{2 \times |\mathcal{X}| \times |\mathcal{U}|} \exp \{n(\Delta-\epsilon)\} \\
    & \leq e^{n\Delta}.
\end{align}
As a result, when \(n\) is sufficiently large, we see that
\begin{equation}
    \bar{d}( \bm{x}, \hat{P}_m ) = - \frac{1}{n} \log \frac{1}{ |\mathcal{L}_n(\bm{u}_m)| } \leq \Delta, \qquad \forall \bm{x} \in \mathcal{L}_n(\bm{u}_m).
\end{equation}
Therefore, for a \(\bm{y} \in \mathcal{H}_n(Q_Y)\), we have \(\bar{d}(\bm{X}, g_n(f_n(\bm{y}))) \leq \Delta\) if \( \bm{X} = \bm{x} \in \mathcal{L}_n(\bm{u}_m)\).
Thus, the non-excess distortion probability for a \(\bm{y} \in \mathcal{H}_n(Q_Y)\) can be lower bounded through
\begin{align}
    & \P \{ \bar{d}(\bm{X}, g_n ( f_n ( \bm{y})) ) \leq \Delta  \} \nonumber \\
    & \geq \P\{ \bm{X} \in \mathcal{L}_n(\bm{u}_m) | \bm{Y} = \bm{y}\} \\
    & \geq \hspace{-12pt} \max_{  \substack{ Q_{X|U} : \\  H(Q_{X|U} | \hat{P}_{\bm{u}_m}) \leq \Delta - \epsilon} } \hspace{-12pt} \P \{  \mathcal{T}_n( Q_{X|U} | \bm{u}_m) | \bm{Y} = \bm{y} \}.  \label{eq:chp_strong_converse_remote_log_loss/achievability/max_over_Q_X_U}
\end{align}

Recall that the joint distribution of \((\bm{X}, \bm{Y})\) is \(P_{XY}^n\).
Hence, we can treat the connection from  \(\bm{Y}\) to \(\bm{X}\) as the DMC \(P_{X|Y}\).
In other words, conditioned on \(\bm{Y} = \bm{y} \in \mathcal{H}_n(Q_Y)\), we have
\begin{equation}
    \P \{ \bm{X} \in \mathcal{T}_n(Q_{X|U} | \bm{u}_m)  | \bm{Y} = \bm{y}\} = P_{X|Y}^n[ \mathcal{T}_n(Q_{X|U} | \bm{u}_m) | \bm{y} ].
\end{equation}
From Lemma \ref{lem:probability_intersection_conditional_type_class}, we see that
\begin{align}
    P_{X|Y}^n[ \mathcal{T}_n(Q_{X|U} | \bm{u}_m) | \bm{y} ] \ndot{=} \max_{Q_{XYU}} \exp \{ -n( D(Q_{X|Y} \| P_{X|Y} | \hat{P}_{\bm{y}}) + I_{Q}(X;U | Y) )  \},
\end{align}
where we consider all \(Q_{XYU}\) satisfying the two marginal distribution conditions
\begin{equation*}
    Q_{XU} = \hat{P}_{\bm{u}_m} \times Q_{X|U} \quad \text{and} \quad Q_{YU} = \hat{P}_{\bm{y} \bm{u}_m}.
\end{equation*}
Recall the assumption that \(\bm{y} \in \mathcal{T}_n(Q_Y)\) and that  we have \(\hat{P}_{\bm{u}_m | \bm{y}} = Q_{U|Y}\) under the described coding scheme, where \(Q_{U|Y}\) is the conditional type selected for \(Q_Y\).
Let \(Q_U = \hat{P}_{\bm{u}_m} = Q_Y \cdot Q_{U|Y}\).
Thus, conditioned on \(\bm{Y} = \bm{y} \in \mathcal{H}_n(Q_Y)\), we obtain
\begin{align}
    & \P \{ \bm{X} \in \mathcal{T}_n(Q_{X|U} | \bm{u}_m)  | \bm{Y} = \bm{y}\} \nonumber \\
    & \ndot{=} \max_{Q_{XYU}} \exp \{ -n( D(Q_{X|Y} \| P_{X|Y} | Q_Y) + I_{Q}(X;U | Y) )  \}, \label{eq:chp_strong_converse_remote_log_loss/achievability/upper_bound_Q_X_U}
\end{align}
where we consider all \(Q_{XYU}\) satisfying the two marginal distribution conditions
\begin{equation*}
    Q_{XU} = Q_U \times Q_{X|U} \quad \text{and} \quad Q_{YU} = Q_Y \times Q_{U|Y}.
\end{equation*}
Substituting \eqref{eq:chp_strong_converse_remote_log_loss/achievability/upper_bound_Q_X_U} into \eqref{eq:chp_strong_converse_remote_log_loss/achievability/max_over_Q_X_U} yields that for a \(\bm{Y} = \bm{y} \in \mathcal{H}_n(Q_Y)\), we have
\begin{align}
    &  \P \{ \bar{d}(\bm{X}, g_n(f_n(\bm{y}))) \leq \Delta \} \nonumber \\
    & \ndot{ \geq } \hspace{-12pt} \max_{  \substack{ Q_{X|U} : \\  H(Q_{X|U} | \hat{P}_{\bm{u}_m}) \leq \Delta - \epsilon} } \hspace{-12pt}  \max_{Q_{XYU}} \exp \{ -n( D(Q_{X|Y} \| P_{X|Y} | Q_Y) + I_{Q}(X;U | Y) )  \} \\
    & = \hspace{-12pt} \max_{  \substack{ Q_{X|YU} : \\  H_Q(X|U) \leq \Delta - \epsilon} } \hspace{-12pt}   \exp \{ -n( D(Q_{X|Y} \| P_{X|Y} | Q_Y) + I_{Q}(X;U | Y) )  \}, \label{eq:chp_strong_converse_remote_log_loss/achievability/combination_two_maximizations}
\end{align}
where in \eqref{eq:chp_strong_converse_remote_log_loss/achievability/combination_two_maximizations} we combine the two maximizations, i.e., now wo consider all \(Q_{X|YU}\) such that the joint distribution \(Q_{XYU} = Q_Y Q_{U|Y} Q_{X|YU}\) satisfies \(H_Q(X|U) \leq \Delta - \epsilon\).

Consequently, we see that
\begin{align}
    & \P \{ \bar{d}(\bm{X}, g_n ( f_n ( \bm{Y})) ) \leq \Delta \} \nonumber \\
    & \geq \! \! \! \! \sum_{Q_Y \in \mathcal{P}_n(\mathcal{Y})} \sum_{\bm{y} \in \mathcal{H}_n(Q_Y)} \! \! \!  \P \{ \bm{Y} = \bm{y}\} \times \P \{ \bar{d}(\bm{X}, g_n ( f_n ( \bm{y})) ) \leq \Delta   \} \\
    & \ndot{\geq} \! \! \! \! \sum_{Q_Y \in \mathcal{P}_n(\mathcal{Y})} \sum_{\bm{y} \in \mathcal{H}_n(Q_Y)} \P \{ \bm{Y} = \bm{y}\} \times \nonumber \\
    & \hspace{3cm} \max_{  \substack{ Q_{X|YU} : \\  H_Q(X|U) \leq \Delta - \epsilon} } \hspace{-12pt}   \exp \{ -n( D(Q_{X|Y} \| P_{X|Y} | Q_Y) + I_{Q}(X;U | Y) )  \} \\
    & \ndot{\geq} \! \! \! \! \sum_{Q_Y \in \mathcal{P}_n(\mathcal{Y})} \exp \{ -n ( D(Q_Y\|P_Y) + |I_Q(Y;U) - R|^{+} ) \} \times \nonumber \\
    & \hspace{3cm} \max_{  \substack{ Q_{X|YU} : \\  H_Q(X|U) \leq \Delta - \epsilon} } \hspace{-12pt}   \exp \{ -n( D(Q_{X|Y} \| P_{X|Y} | Q_Y) + I_{Q}(X;U | Y) )  \}  \label{eq:chp_strong_converse_remote_log_loss/achievability/plug_size_lower_bound} \\
    & = \! \! \! \! \sum_{Q_Y \in \mathcal{P}_n(\mathcal{Y})} \max_{  \substack{ Q_{X|YU} : \\  H_Q(X|U) \leq \Delta - \epsilon} } \hspace{-12pt}   \exp \{ -n( D(Q_{XYU} \| P_{XY}Q_{U|Y}) + |I_Q(Y;U) - R|^{+} )  \} \\
    & \geq \! \! \! \max_{Q_Y \in \mathcal{P}_n(\mathcal{Y})} \max_{  \substack{ Q_{X|YU} : \\  H_Q(X|U) \leq \Delta - \epsilon} } \hspace{-12pt}   \exp \{ -n( D(Q_{XYU} \| P_{XY}Q_{U|Y}) + |I_Q(Y;U) - R|^{+} )  \} \\
    & = \! \! \! \! \max_{Q_Y \in \mathcal{P}_n(\mathcal{Y})} \max_{Q_{U|Y}} \max_{  \substack{ Q_{X|YU} : \\  H_Q(X|U) \leq \Delta - \epsilon} } \hspace{-12pt}   \exp \{ -n( D(Q_{XYU} \| P_{XY}Q_{U|Y}) + |I_Q(Y;U) - R|^{+} )  \}    \label{eq:chp_strong_converse_remote_log_loss/achievability/optimize_over_Q_U_Y} \\
    & = \! \! \! \! \max_{  \substack{ Q_{XYU} : \\  H_Q(X|U) \leq \Delta - \epsilon} } \hspace{-12pt}   \exp \{ -n( D(Q_{XYU} \| P_{XY}Q_{U|Y}) + |I_Q(Y;U) - R|^{+} )  \}, \label{eq:chp_strong_converse_remote_log_loss/achievability/final_combination}
\end{align}
where \eqref{eq:chp_strong_converse_remote_log_loss/achievability/plug_size_lower_bound} is due to \eqref{eq:chp_strong_converse_remote_log_loss/achievability/unified_size_lower_bound} and
\begin{equation}
    \P \{ \bm{Y} = \bm{y} \} = \exp\{ - n(D(Q_Y \| P_Y) + H_Q(Y))\}, \qquad \forall \bm{y} \in \mathcal{T}_n(Q_Y);
\end{equation}
in \eqref{eq:chp_strong_converse_remote_log_loss/achievability/optimize_over_Q_U_Y} we assume the best \(Q_{U|Y}\) is selected when constructing the compression codebook (the largest correct reconstruction probability is of interest, so here we use maximization).
Achievability is established after noticing \eqref{eq:chp_strong_converse_remote_log_loss/achievability/final_combination} holds for any \(\epsilon > 0\) as \(n \to \infty\).

\section{Converse Proof of Theorem \ref{thm:strong_converse_exponent_remote_log_loss}}
\label{sec:chp_strong_converse_remote_log_loss/converse}
Now that the achievability of Theorem \ref{thm:strong_converse_exponent_remote_log_loss} is proved, we address the converse in this section.
That is, we show that for a DMS pair \(P_{XY}^n\), every sequence of coding schemes \((f_n, g_n)\) with \(||f_n|| \leq e^{nR}\) must satisfy
\begin{equation}
    \P \{ \bar{d}(X^n, g_n(f_n(Y^n))) \leq \Delta \} \leq e^{-nF_{\mathrm{a}}(R, \Delta)},
\end{equation}
where we recall that \(F_{\mathrm{a}}(R, \Delta)\) denotes the RHS of \eqref{eq:strong_converse_exponent_remote_log_loss}.
To this end, fix an arbitrary sequence of coding schemes \((f_n, g_n)\) with \(||f_n|| \leq e^{nR}\), and denote the non-excess distortion region under \((f_n, g_n)\) by
\begin{equation}
    \mathcal{E}_n^c \triangleq \{ (x^n, y^n) : \bar{d}(x^n, g_n(f_n(y^n))) \leq \Delta \}.
\end{equation}
Hence, our task here is to show that
\begin{equation}
    P_{XY}^n [\mathcal{E}_n^c] \leq e^{-nF_{\mathrm{a}}(R, \Delta)}.
\end{equation}
Consider the conditional distribution given the event \(\mathcal{E}_n^{c}\), defined as
\begin{equation}
    \tilde{P}_{X^nY^n}(x^n, y^n) \triangleq \frac{ P_{XY}^n(x^n, y^n) }{ P_{XY}^n [\mathcal{E}_n^c] }, \qquad \forall (x^n, y^n) \in \mathcal{E}_n^{c}. \label{eq:chp_strong_converse_remote_log_loss/converse/conditional_distribution}
\end{equation}
It immediately follows that we have the following identity
\begin{equation}
    - \log P_{XY}^n [\mathcal{E}_n^c] =  D( \tilde{P}_{X^n Y^n} \| P_{XY}^n). \label{eq:chp_strong_converse_remote_log_loss/converse/change_of_measurement}
\end{equation}
As a result, the task now is reduced to showing
\begin{equation}
    \frac{1}{n} D( \tilde{P}_{X^n Y^n} \| P_{XY}^n) \geq F_{\mathrm{a}}(R, \Delta).
\end{equation}
We achieve this through the following single-letter identity lemma.

\subsection{Single-letter Identity Lemma}
\begin{lemma}
    \label{lem:single_letterization_remote_log_loss}
    For every distribution \(Q_{X^nY^n}\), DMS \(P_{XY}^n\), and mapping \(f_n:\mathcal{Y}^n \to \mathbb{N}^{+}\), we have
    \begin{equation}
        \frac{1}{n}D(Q_{X^n Y^n} \| P_{XY}^n) = D(Q_{X_JY_JU_JJ} \| P_{X_JY_J}Q_{U_J J| Y_J}) + I_Q(Y_J;U_J, J) - \frac{1}{n} H_Q(M),
    \end{equation}
    where \(M = f_n(Y^n)\); \(J \sim Q_J\) is uniformly distributed over \([n]\); \(U_i = (M, X^{i-1}, Y^{i-1})\) for every \(i \in [n]\); and \(Q_{X_JY_JU_JJ} = Q_{X_J | Y_J U_J J}Q_{Y_J}Q_{U_J J| Y_J}\).
\end{lemma}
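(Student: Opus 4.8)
The plan is to prove this as an exact identity by writing both sides as $Q$-expectations of log-likelihood ratios and checking that they coincide. Since nothing here is an inequality, the whole argument is a careful but routine rearrangement, organized around the time-sharing variable $J$ and the single crucial structural fact: $M=f_n(Y^n)$, and therefore every $U_i=(M,X^{i-1},Y^{i-1})$, is a deterministic function of $Y^n$.

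First I would treat the left-hand side. Because $P_{XY}^n$ is memoryless, $\frac1n D(Q_{X^nY^n}\|P_{XY}^n)$ splits as $-\frac1n H_Q(X^nY^n)-\frac1n\sum_{i=1}^n\E_Q[\log P_{XY}(X_i,Y_i)]$. Since $M$ is a function of $Y^n$, the entropy chain rule gives $H_Q(X^nY^n)=H_Q(X^nY^nM)=H_Q(M)+\sum_{i=1}^n H_Q(X_iY_i\,|\,X^{i-1},Y^{i-1},M)$, and the $i$-th summand is exactly $H_Q(X_iY_i|U_i)$ by the definition of $U_i$. Introducing $J\sim Q_J$ uniform on $[n]$ (independent of $(X^n,Y^n)$) to average the per-coordinate terms, $\frac1n\sum_i H_Q(X_iY_i|U_i)=H_Q(X_JY_J|U_J,J)$ and $\frac1n\sum_i\E_Q[\log P_{XY}(X_i,Y_i)]=\E_Q[\log P_{XY}(X_J,Y_J)]$, so the left-hand side equals $-\frac1n H_Q(M)-H_Q(X_JY_J|U_J,J)-\E_Q[\log P_{XY}(X_J,Y_J)]$.

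Second I would expand the right-hand side. Writing the chain-rule factorization $Q_{X_JY_JU_JJ}=Q_{Y_J}\,Q_{U_JJ|Y_J}\,Q_{X_J|Y_JU_JJ}$ and using $P_{X_JY_J}=P_{XY}$ (picking a uniformly random coordinate of the memoryless source leaves $(X_J,Y_J)\sim P_{XY}$), the common $Q_{U_JJ|Y_J}$ factor cancels in the density defining $D(Q_{X_JY_JU_JJ}\|P_{X_JY_J}Q_{U_JJ|Y_J})$, reducing it to $\E_Q[\log\big(Q_{Y_J}(Y_J)\,Q_{X_J|Y_JU_JJ}(X_J|Y_J,U_J,J)/P_{XY}(X_J,Y_J)\big)]$. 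Adding $I_Q(Y_J;U_J,J)=\E_Q[\log\big(Q_{Y_J|U_JJ}(Y_J|U_J,J)/Q_{Y_J}(Y_J)\big)]$ cancels the $Q_{Y_J}$ factor and merges $Q_{Y_J|U_JJ}Q_{X_J|Y_JU_JJ}$ into $Q_{X_JY_J|U_JJ}$, so the sum of the two right-hand terms equals $\E_Q[\log\big(Q_{X_JY_J|U_JJ}(X_J,Y_J|U_J,J)/P_{XY}(X_J,Y_J)\big)]=-H_Q(X_JY_J|U_J,J)-\E_Q[\log P_{XY}(X_J,Y_J)]$. Comparing with the expression obtained for the left-hand side proves the lemma, the only discrepancy being the $-\frac1n H_Q(M)$ term, which appears explicitly.

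The computation is short, so the main obstacle is simply the bookkeeping in the second step: keeping the chain-rule factorizations of $Q_{X_JY_JU_JJ}$ and of the reference $P_{X_JY_J}Q_{U_JJ|Y_J}$ aligned so that the $Q_{U_JJ|Y_J}$ factors and then the $Q_{Y_J}$ factors telescope away, plus the usual conventions ($U_1=(M)$ with empty past, and $P_{XY}>0$ so every divergence is finite). It is worth recording how the identity is then used in the converse: the non-excess-distortion event forces $X^n$ into a list of size $\lfloor e^{n\Delta}\rfloor$ given $M$, which bounds $H_Q(X_JY_J|U_J,J)$ so that the pair $-H_Q(X_JY_J|U_J,J)-\E_Q[\log P_{XY}]$ becomes a divergence $D(Q_{XYU}\|P_{XY}Q_{U|Y})$ under the constraint $H_Q(X|U)\le\Delta$; and $\tfrac1n H_Q(M)\le R$ from the rate limit (together with nonnegativity of $\tfrac1n D$) turns $I_Q(Y_J;U_J,J)-\tfrac1n H_Q(M)$ into $|I_Q(Y;U)-R|^{+}$, yielding the claimed single-letter lower bound $\tfrac1n D(\tilde P_{X^nY^n}\|P_{XY}^n)\ge F_{\mathrm a}(R,\Delta)$ after a cardinality reduction on $U$.
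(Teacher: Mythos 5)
Your identity checks out and the proposal is a correct proof. The core mechanism is the same as the paper's---memorylessness of \(P_{XY}^n\), the fact that \(M=f_n(Y^n)\) lets you append \(M\) at no entropy cost, the coordinate-wise chain rule that makes \(U_i=(M,X^{i-1},Y^{i-1})\) appear, and time-sharing via \(J\)---but your decomposition of the computation is genuinely different and tidier. The paper first splits the divergence as \(D(Q_{Y^nM}\|P_{Y^nM})+D(Q_{X^n|Y^nM}\|P_{X^n|Y^n}\,|\,Q_{Y^nM})\), inserts two auxiliary product reference measures (\(\tilde Q_{Y^n}=\prod_iQ_{Y_i}\) and \(Q_{X^n|Y^nU^n}=\prod_iQ_{X_i|Y_iU_i}\)) to single-letterize each piece, and then needs a separate entropy-bookkeeping step (via \(H_Q(M)=I_Q(Y^n;M)\) and the chain rule for \(H_Q(X^n,Y^n|M)\)) to see that the leftover terms assemble into \(I_Q(Y_J;U_J,J)-\tfrac1nH_Q(M)\). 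You instead expand both sides directly as \(Q\)-expectations of log-likelihood ratios: the left side becomes \(-\tfrac1nH_Q(M)-H_Q(X_JY_J|U_J,J)-\E_Q[\log P_{XY}(X_J,Y_J)]\) in one chain-rule step, and the right side telescopes to the same expression once the \(Q_{U_JJ|Y_J}\) and then the \(Q_{Y_J}\) factors cancel. This buys a shorter derivation with no auxiliary measures and no final reconciliation; the paper's version, in exchange, isolates \(D(Q_{Y^nM}\|P_{Y^nM})\) and the conditional divergence as separately interpretable quantities. One cosmetic slip in your closing remarks on the application (not part of the lemma): the list-size argument in the converse bounds \(H_{\tilde P}(X_J\,|\,U_J,J)\le\Delta\), i.e.\ the conditional entropy of \(X_J\) alone, not of the pair \((X_J,Y_J)\).
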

\begin{proof}
    Let \(P_{X^n Y^n} \triangleq P_{XY}^n\), and to simplify notation we will write
    \begin{equation}
        \sum_{x} g(Q_X) \triangleq \sum_{x \in \mathcal{X}} g(Q_X(x)),
    \end{equation}
    for any alphabet \(\mathcal{X}\), pmf \(Q_X\), and function \(g\).
    Similar notation will also apply to, e.g.,
    \begin{equation}
        \sum_{x , y} g(Q_X, Q_Y) \triangleq \sum_{x \in \mathcal{X} , y \in \mathcal{Y}} g(Q_X(x), Q_Y(y)).
    \end{equation}
    With the above in mind, we now start with
    \begin{align}
        & \frac{1}{n} D(Q_{X^n Y^n} \| P_{XY}^n) \nonumber \\
        & = \frac{1}{n} D(Q_{X^n Y^n} Q_{M | Y^n} \| P_{XY}^n Q_{M|Y^n}) \label{eq:chp_strong_converse_remote_log_loss/converse/split_lemma} \\
        & = \frac{1}{n} D(Q_{X^n |  Y^n M} Q_{Y^n M} \| P_{X^n | Y^n} P_{Y^n M}) \\
        & = \frac{1}{n} D(Q_{Y^nM} \| P_{Y^n M}) + \frac{1}{n} D(Q_{X^n| Y^n M} \| P_{X^n|Y^n} | Q_{Y^n M}  ), \label{eq:chp_strong_converse_remote_log_loss/converse/two_divergence}
    \end{align}
    where we have \(Q_{M|Y^n} = \idc{M = f_n(Y^n)}\) in \eqref{eq:chp_strong_converse_remote_log_loss/converse/split_lemma}.
    Next, we notice that
    \begin{align}
        & \frac{1}{n} D(Q_{Y^nM} \| P_{Y^n M}) \nonumber \\
        & = \frac{1}{n} D(Q_{Y^n} \| P_{Y}^n) \\
        & = \frac{1}{n} \sum_{y^n  } Q_{Y^n} \log \frac{ Q_{Y^n} }{ P_{Y}^n } \\
        & = - \frac{1}{n} H_Q(Y^n) + \frac{1}{n} \sum_{y^n  } Q_{Y^n} \log \frac{ 1 }{ P_{Y}^n } \\
        & = - \frac{1}{n} H_Q(Y^n) - \frac{1}{n} \sum_{y^n  } Q_{Y^n} \log \tilde{Q}_{Y^n} +  \frac{1}{n} \sum_{y^n  } Q_{Y^n} \log \frac{ \tilde{Q}_{Y^n} }{ P_{Y}^n } \label{eq:chp_strong_converse_remote_log_loss/converse/product_Q_Y} \\
        & =  - \frac{1}{n} H_Q(Y^n) - \frac{1}{n} \sum_{i=1}^n \sum_{y_i  } Q_{Y_i} \log Q_{Y_i} +  \frac{1}{n}  \sum_{i=1}^n \sum_{y_i  } Q_{Y_i} \log \frac{ Q_{Y_i} }{ P_{Y_i} } \\
        & =  - \frac{1}{n} H_Q(Y^n) + H_Q(Y_J|J) + D(Q_{Y_J} \| P_{Y_J} | Q_J), \label{eq:chp_strong_converse_remote_log_loss/converse/two_divergence_1}
    \end{align}
    where in \eqref{eq:chp_strong_converse_remote_log_loss/converse/product_Q_Y} we have \(\tilde{Q}_{Y^n} \triangleq \prod_{i=1}^{n}Q_{Y_i}\), in which \(Q_{Y_i}\) is the marginal distribution of \(Q_{Y^n}\).
    Now, we turn our attention to the conditional divergence term.
    Observe that
    \begin{align}
        & \frac{1}{n} D(Q_{X^n| Y^n M} \| P_{X^n|Y^n} | Q_{Y^n M}  )  \nonumber \\
        & = \frac{1}{n} \sum_{(x^n, y^n, m)} Q_{X^n Y^n M} \log \frac{ Q_{X^n| Y^n M} }{ P_{X^n|Y^n}  } \\
        & = \frac{1}{n} \sum_{(x^n, y^n, m)} Q_{X^n Y^n M} \log \frac{ Q_{X^n| Y^n M} }{ Q_{X^n | Y^nU^n}  } + \frac{1}{n} \sum_{(x^n, y^n, m)} Q_{X^n Y^n M} \log \frac{ Q_{X^n | Y^nU^n} }{ P_{X^n | Y^n }  }, \label{eq:chp_strong_converse_remote_log_loss/converse/conditional_divergence}
    \end{align}
    where we have \(U_i = (M, X^{i-1}, Y^{i-1})\) and
    \begin{equation}
        Q_{X^n | Y^n U^n} \triangleq \prod_{i=1}^{n} Q_{X_i | Y_i U_i} = \prod_{i=1}^{n} Q_{X_i | M, X^{i-1}, Y^{i}}.
    \end{equation}
    Next, we see that
    \begin{align}
        & \frac{1}{n} \sum_{(x^n, y^n, m)} Q_{X^n Y^n M} \log \frac{ Q_{X^n| Y^n M} }{ Q_{X^n | Y^nU^n}  }  \nonumber \\
        & = - \frac{1}{n}H_Q(X^n | Y^n, M) + \frac{1}{n} \sum_{(x^n, y^n, m)} Q_{X^n Y^n M} \log \frac{ 1 }{ Q_{X^n | Y^nU^n}  } \\
        & = - \frac{1}{n}H_Q(X^n | Y^n, M) + \frac{1}{n} \sum_{(x^n, y^n, m)} Q_{X^n Y^n M} \log \frac{ 1 }{  \prod_{i=1}^{n} Q_{X_i | M, X^{i-1}, Y^{i}} } \\
        & = - \frac{1}{n}H_Q(X^n | Y^n, M) + \frac{1}{n} \sum_{i=1}^{n} \sum_{(x^{i}, y^{i}, m)} Q_{X^{i} Y^{i} M} \log \frac{ 1 }{  Q_{X_i | M, X^{i-1}, Y^{i}} } \\
        & = - \frac{1}{n}H_Q(X^n | Y^n, M) + \frac{1}{n} \sum_{i=1}^{n} \sum_{(x_{i}, y_{i}, u_i)} Q_{X_{i} Y_{i} U_i} \log \frac{ 1 }{  Q_{X_i | Y_i, U_i} } \\
        & =  - \frac{1}{n}H_Q(X^n | Y^n, M)  + H_Q(X_J | Y_J, U_J, J). \label{eq:chp_strong_converse_remote_log_loss/converse/conditional_divergence_1}
    \end{align}
    In the same fashion, we also have
    \begin{align}
        & \frac{1}{n} \sum_{(x^n, y^n, m)} Q_{X^n Y^n M} \log \frac{ Q_{X^n | Y^nU^n} }{ P_{X^n | Y^n }  } \nonumber \\
        & = \frac{1}{n} \sum_{(x^n, y^n, m)} Q_{X^n Y^n M} \log \frac{ \prod_{i=1}^{n} Q_{X_i | M, X^{i-1}, Y^{i}} }{ \prod_{i=1}^{n}P_{X_i | Y_i }  } \\
        & = \frac{1}{n} \sum_{i=1}^{n}  \sum_{ (x^i, y^i,m) } Q_{X^{i} Y^{i} M} \log \frac{Q_{X_i | M, X^{i-1}, Y^{i}} }{ P_{X_i | Y_i } } \\
        & = \frac{1}{n} \sum_{i=1}^{n} \sum_{ (x_i, y_i, u_i) } Q_{X_i Y_i U_i } \log \frac{Q_{X_i | Y_{i}, U_i} }{ P_{X_i | Y_i } }  \\
        & = \frac{1}{n} \sum_{i=1}^{n} \sum_{ (x_i, y_i, u_i) } Q_{X_i Y_i U_i } \log \frac{Q_{X_i | Y_{i}, U_i} Q_{Y_i U_i} }{ P_{X_i | Y_i } Q_{Y_i U_i} } \\
        & = D( Q_{X_J Y_J U_J} \| P_{X_J | Y_J} Q_{Y_J U_J} | Q_J ). \label{eq:chp_strong_converse_remote_log_loss/converse/conditional_divergence_2}
    \end{align}
    Substituting  \eqref{eq:chp_strong_converse_remote_log_loss/converse/conditional_divergence_1} and \eqref{eq:chp_strong_converse_remote_log_loss/converse/conditional_divergence_2} into \eqref{eq:chp_strong_converse_remote_log_loss/converse/conditional_divergence}, we obtain
    \begin{align}
        & \frac{1}{n} D(Q_{X^n| Y^n M} \| P_{X^n|Y^n} | Q_{Y^n M}  ) \nonumber \\
        & =  - \frac{1}{n}H_Q(X^n | Y^n, M)  + H_Q(X_J | Y_J, U_J, J) + D( Q_{X_J Y_J U_J} \| P_{X_J | Y_J} Q_{Y_J U_J} | Q_J ). \label{eq:chp_strong_converse_remote_log_loss/converse/two_divergence_2}
    \end{align}

    We now return to \eqref{eq:chp_strong_converse_remote_log_loss/converse/two_divergence}.
    From \eqref{eq:chp_strong_converse_remote_log_loss/converse/two_divergence_1} and \eqref{eq:chp_strong_converse_remote_log_loss/converse/two_divergence_2}, we get
    \begin{align}
        & \frac{1}{n} D(Q_{X^n Y^n} \| P_{XY}^n) \nonumber \\
        & = \frac{1}{n} D(Q_{Y^nM} \| P_{Y^n M}) + \frac{1}{n} D(Q_{X^n| Y^n M} \| P_{X^n|Y^n} | Q_{Y^n M}  ) \\
        & = - \frac{1}{n} H_Q(Y^n) + H_Q(Y_J|J) + D(Q_{Y_J} \| P_{Y_J}|Q_J) - \frac{1}{n}H_Q(X^n | Y^n, M)   \nonumber \\
        & \hspace{3cm}  + H_Q(X_J | Y_J, U_J, J) + D( Q_{X_J Y_J U_J} \| P_{X_J | Y_J} Q_{Y_J U_J} |Q_J  ).
    \end{align}
    Notice that
    \begin{align}
        & D(Q_{Y_J} \| P_{Y_J} | Q_J) + D( Q_{X_J Y_J U_J} \| P_{X_J | Y_J} Q_{Y_J U_J} | Q_J ) \nonumber \\
        & = D( Q_{X_J Y_J U_J} \| P_{X_J Y_J} Q_{U_J | Y_J} |Q_J ) \\
        & = D( Q_{X_J Y_J U_J J} \| P_{X_J Y_J} Q_{U_J J| Y_J}).
    \end{align}
    Hence, the remaining task is to show that
    \begin{align}
        & H_Q(Y_J |J) + H_Q(X_J | Y_J, U_J,J) - \frac{1}{n} H_Q(Y^n) - \frac{1}{n}H_Q(X^n | Y^n, M) \nonumber \\
        & = I_Q(Y_J;U_J,J) - \frac{1}{n} H_Q(M).
    \end{align}
    Since \(H_Q(M) = I_Q(Y^n;M)\), we have
    \begin{align}
        & H_Q(Y_J|J) + H_Q(X_J | Y_J, U_J,J) - \frac{1}{n} H_Q(Y^n) - \frac{1}{n}H_Q(X^n | Y^n, M) + \frac{1}{n} I_Q(Y^n;M) \nonumber \\
        & = H_Q(Y_J|J) + H_Q(X_J | Y_J, U_J,J) - \frac{1}{n} H_Q(Y^n |M) - \frac{1}{n}H_Q(X^n | Y^n, M) \\
        & = H_Q(Y_J|J) + H_Q(X_J | Y_J, U_J,J) - \frac{1}{n}H_Q(X^n, Y^n | M)\\
        & =  H_Q(Y_J|J) + H_Q(X_J | Y_J, U_J,J) - \frac{1}{n} \sum_{i=1}^{n}H_Q(X_i, Y_i | M, X^{i-1}, Y^{i-1}) \\
        & = H_Q(Y_J|J) + H_Q(X_J | Y_J, U_J,J) - H_Q(X_J, Y_J | U_J,J) \\
        & = H_Q(Y_J|J) - H_Q(Y_J| U_J,J) \\
        & = I_Q(Y_J;U_J|J) \\
        & = I_Q(Y_J;U_J, J),
    \end{align}
    which completes the proof.
\end{proof}

We are now ready to derive the following single-letter lower bound from Lemma \ref{lem:single_letterization_remote_log_loss}.
\begin{corollary}
    \label{cor:single_letterization_lower_bound_remote_log_loss}
    For every distribution \(Q_{X^nY^n}\), DMS \(P_{XY}^n\), and mapping \(f_n:\mathcal{Y}^n \to [e^{nR}]\), we have
    \begin{equation}
        \frac{1}{n}D(Q_{X^n Y^n} \| P_{XY}^n) \geq D(Q_{X_JY_JU_JJ} \| P_{X_JY_J}Q_{U_J J| Y_J}) + | I_Q(Y_J;U_J,J) - R |^{+},
    \end{equation}
    where \(M = f_n(Y_n)\); \(J \sim Q_J\) is uniformly distributed over \([n]\); \(U_i = (M, X^{i-1}, Y^{i-1})\) for every \(i \in [n]\); and \(Q_{X_JY_JU_JJ} = Q_{X_J | Y_J U_JJ}Q_{Y_J}Q_{U_J J| Y_J}\).
\end{corollary}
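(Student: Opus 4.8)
The plan is to deduce the corollary directly from the single-letter identity of Lemma~\ref{lem:single_letterization_remote_log_loss}. That identity writes $\frac1n D(Q_{X^nY^n}\|P_{XY}^n)$ as the sum of the nonnegative divergence term $D(Q_{X_JY_JU_JJ}\|P_{X_JY_J}Q_{U_JJ|Y_J})$, the mutual information $I_Q(Y_J;U_J,J)$, and the single ``negative'' contribution $-\frac1n H_Q(M)$. Hence everything reduces to upper bounding $\frac1n H_Q(M)$, and I would do this in two complementary ways and then combine them.

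First, since in the present setting $M=f_n(Y^n)$ takes values in $[e^{nR}]$, the elementary cardinality bound on entropy gives $H_Q(M)\le\log e^{nR}=nR$, hence $\frac1n H_Q(M)\le R$. Substituting this into the identity already yields $\frac1n D(Q_{X^nY^n}\|P_{XY}^n)\ge D(Q_{X_JY_JU_JJ}\|P_{X_JY_J}Q_{U_JJ|Y_J})+I_Q(Y_J;U_J,J)-R$.

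Second---the only step with real content---I would show $\frac1n H_Q(M)\le I_Q(Y_J;U_J,J)$. Since $M$ is a deterministic function of $Y^n$, write $H_Q(M)=I_Q(M;Y^n)=\sum_{i=1}^n I_Q(M;Y_i\mid Y^{i-1})$ by the chain rule, and then enlarge each summand using $U_i=(M,X^{i-1},Y^{i-1})$: discarding the nonnegative terms $I_Q(Y^{i-1};Y_i)$ and $I_Q(X^{i-1};Y_i\mid M,Y^{i-1})$ gives $I_Q(M;Y_i\mid Y^{i-1})\le I_Q(M,X^{i-1},Y^{i-1};Y_i)=I_Q(U_i;Y_i)$. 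Averaging over $i$ and introducing the time-sharing variable $J$ gives $\frac1n H_Q(M)\le\frac1n\sum_{i=1}^n I_Q(U_i;Y_i)=I_Q(U_J;Y_J\mid J)\le I_Q(U_J,J;Y_J)=I_Q(Y_J;U_J,J)$; plugging this into the identity shows $\frac1n D(Q_{X^nY^n}\|P_{XY}^n)\ge D(Q_{X_JY_JU_JJ}\|P_{X_JY_J}Q_{U_JJ|Y_J})$.

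Combining the two bounds gives $\frac1n H_Q(M)\le\min\{R,\,I_Q(Y_J;U_J,J)\}$, so $I_Q(Y_J;U_J,J)-\frac1n H_Q(M)\ge|I_Q(Y_J;U_J,J)-R|^{+}$, and inserting this back into the identity of Lemma~\ref{lem:single_letterization_remote_log_loss} completes the argument. I do not foresee a genuine obstacle here: the corollary is a short consequence of the identity, and the only point needing a little care is the bookkeeping with the time-sharing variable $J$ when passing from the per-letter quantities $I_Q(U_i;Y_i)$ to the single-letter $I_Q(Y_J;U_J,J)$, together with the observation that $U_i$ contains $M$, which is exactly what makes the second bound go through.
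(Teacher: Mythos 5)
Your proposal is correct and follows essentially the same route as the paper: both bounds on $\frac{1}{n}H_Q(M)$ (by $R$ via the cardinality of $[e^{nR}]$, and by $I_Q(Y_J;U_J,J)$ via the chain rule and the fact that $U_i$ contains $(M,X^{i-1},Y^{i-1})$) are exactly the two inequalities the paper combines. Your handling of the time-sharing step, $\frac1n\sum_i I_Q(U_i;Y_i)=I_Q(U_J;Y_J\mid J)\le I_Q(U_J,J;Y_J)$, is if anything slightly more careful than the paper's, which writes this as an equality.
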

\begin{proof}
Due to \(R \geq \frac{1}{n} H_Q(M)\), from Lemma \ref{lem:single_letterization_remote_log_loss}, we see that
\begin{equation}
    \frac{1}{n}D(Q_{X^n Y^n} \| P_{XY}^n) \geq D(Q_{X_JY_JU_JJ} \| P_{X_JY_J}Q_{U_J J| Y_J}) + I_Q(Y_J;U_J,J) - R \label{eq:single_letterization_lower_bound_remote_log_loss_1}
\end{equation}
On other hand, since
\begin{align}
    \frac{1}{n} H_Q(M) & = \frac{1}{n} I_Q(Y^n;M) \\
    & = \frac{1}{n} \sum_{i=1}^{n} I_Q(Y_i; M | Y^{i-1}) \\
    & \leq \frac{1}{n} \sum_{i=1}^{n} I_Q(Y_i; M , Y^{i-1}) \\
    & \leq \frac{1}{n} \sum_{i=1}^{n} I_Q(Y_i; M , Y^{i-1}, X^{i-1}) \\
    & = \frac{1}{n} \sum_{i=1}^{n} I_Q(Y_i; U_i) \\
    & = I_Q(Y_J; U_J,J),
\end{align}
from Lemma \ref{lem:single_letterization_remote_log_loss}, we also see that
\begin{equation}
    \frac{1}{n}D(Q_{X^n Y^n} \| P_{XY}^n) \geq D(Q_{X_JY_JU_JJ} \| P_{X_JY_J}Q_{U_J J| Y_J}). \label{eq:single_letterization_lower_bound_remote_log_loss_2}
\end{equation}
The proof is completed after combining \eqref{eq:single_letterization_lower_bound_remote_log_loss_1} and \eqref{eq:single_letterization_lower_bound_remote_log_loss_2}.
\end{proof}

\begin{remark}
    \label{rem:chp_strong_converse_remote_log_loss/difference}
    Both the converse proofs of Takeuchi and Watanabe \cite{takeuchiTightExponentialStrong2025} and ours proceed from the change of measure argument in \eqref{eq:chp_strong_converse_remote_log_loss/converse/change_of_measurement}.
    Our proofs diverge in the derivation of the lower bound.
    A special case of Corollary \ref{cor:single_letterization_lower_bound_remote_log_loss}, where \(Q_{X^nY^n} = \tilde{P}_{X^nY^n}\) with \(\tilde{P}_{X^nY^n}\) given by \eqref{eq:chp_strong_converse_remote_log_loss/converse/conditional_distribution}, has appeared in \cite[Equation (93)]{takeuchiTightExponentialStrong2025}.
    The derivation of equation (93) in \cite{takeuchiTightExponentialStrong2025} is built upon the method developed in \cite{tyagiStrongConverseUsing2020}; and most importantly the trick in \cite[Proposition 1]{tyagiStrongConverseUsing2020}.
    Applying this trick requires insights to identify the (conditional) mutual information between suitable random variables, which is not straightforward---see (84) and (89) of \cite{takeuchiTightExponentialStrong2025}.
    The identified conditional mutual information in  \cite[Equation (84)]{takeuchiTightExponentialStrong2025} also makes use of the special property of \(\tilde{P}_{X^nY^n}\), meaning the derivation there cannot be easily generalized to hold for any \(Q_{X^nY^n}\).
    Moreover, the two lower bounds in (87) and (92) of  \cite{takeuchiTightExponentialStrong2025} do not appear related to each other immediately.
    In contrast, in our proof, Lemma \ref{lem:single_letterization_remote_log_loss} provides an exact single-letter identity for every \(Q_{X^nY^n}\), leading to Corollary \ref{cor:single_letterization_lower_bound_remote_log_loss} that shares the same generality; and our derivation does not rely on the method from \cite{tyagiStrongConverseUsing2020}.
    Further, both of the lower bounds in \eqref{eq:single_letterization_lower_bound_remote_log_loss_1} and \eqref{eq:single_letterization_lower_bound_remote_log_loss_2} follow immediately from Lemma \ref{lem:single_letterization_remote_log_loss}, revealing an underlying connection between them.
\end{remark}

\subsection{Main Proof}
Let \(M = f_n(Y^n)\), and for every \(m \in [e^{nR}]\) we consider the set
\begin{equation}
    \mathcal{E}_n^c(m) \triangleq \{ x^n : \bar{d}(x^n, g_n(m)) \leq \Delta \}.
\end{equation}
Since \(g_n(m)\) is a soft estimate, from Lemma \ref{lem:connection_log_loss_list_decoding}, we see that
\begin{equation}
    |\mathcal{E}_n^c(m) | \leq \floor{ \exp \{ n\Delta\} }, \qquad \forall m \in [e^{nR}].
\end{equation}
Recall that \(\tilde{P}_{X^n Y^n}\) is supported on the set
\begin{equation}
    \mathcal{E}_n^c = \{ (x^n, y^n) : \bar{d}(x^n, g_n(f_n(y^n))) \leq \Delta \}.
\end{equation}
Therefore, under \(\tilde{P}_{X^n Y^n}\), we must have
\begin{equation}
    H_{\tilde{P}}(X^n | M = m ) \leq \log |\mathcal{E}_n^c(m) | \leq \log  \floor{ \exp \{ n\Delta\} } \leq n \Delta, \qquad \forall m \in [e^{nR}].
\end{equation}
This implies \( H_{\tilde{P}}(X^n | M) \leq n\Delta\), which leads to
\begin{align}
    \Delta & \geq \frac{1}{n} H_{\tilde{P}}(X^n | M) \\
    & = \frac{1}{n} \sum_{i=1}^{n} H_{\tilde{P}}(X_i | M, X^{i-1}) \\
    & \geq \frac{1}{n} \sum_{i=1}^{n} H_{\tilde{P}}(X_i | M, X^{i-1} , Y^{i-1}) \\
    &  = \frac{1}{n} \sum_{i=1}^{n} H_{\tilde{P}}(X_i | U_i)  \\
    & = H_{\tilde{P}}(X_J | U_J,J).
\end{align}
Since \(\tilde{P}_{X^n Y^n}\) satisfies \(H_{\tilde{P}}(X_J | U_J,J) \leq \Delta\), from Corollary \ref{cor:single_letterization_lower_bound_remote_log_loss}, we can assert that
\begin{align}
    \frac{1}{n} D( \tilde{P}_{X^n Y^n} \| P_{XY}^n) \geq \min_{ \substack{ Q_{XYU} : \\ H_{Q}(X|U) \leq \Delta } } D(Q_{XYU} \| P_{XY}Q_{U|Y}) + | I_{Q}(Y;U) - R |^{+},
\end{align}
where we notice \(P_{X_JY_J} = P_{XY}\) and consider the minimization over all \(Q_{XYU}\) satisfying \(H_{Q}(X|U) \leq \Delta\).
The cardinality bound follows from a standard application of the support lemma and is provided in Appendix \ref{apd:cardinality_bound_for_strong_converse_exponent_remote_log_loss}, which completes the converse proof.


\section{Proof of Theorem \ref{thm:chp_sphere_packing_exponent_WAK/source_coding_helper_both_sides_equivalent_IB_source}}
\label{sec:chp_sphere_packing_exponent_WAK/source_coding_helper_both_sides_equivalent_IB_source}
In this section, we establish the equivalence between the IB source and source coding with a helper at both sides.
We first prove the equivalence between the two problems on a code level; then show that the two problems have the exact same minimum decoding error probability.
First, recall Lemma \ref{lem:connection_log_loss_list_decoding}, i.e., for every set \(\mathcal{L}_n \subseteq \mathcal{X}^n\), there exists a soft reconstruction \(\hat{P} \in \mathcal{P}(\mathcal{X}^n)\) that \(\Delta\)-covers \(\mathcal{L}_n\) if and only if \(| \mathcal{L}_n | \leq \floor{\exp{n\Delta}}\).

\subsection{Equivalence Between Codes}
For source coding with a helper at both sides, let \(L = \varphi_n(Y^n)\) and \(M = \tilde{f}_n(X^n, L)\).
Conditioned on \(L = l \), the decoder \(\phi_n(M, l)\) reconstructs a sequence from \(\lfloor e^{nR} \rfloor\) choices according to the forwarded index \(M = m \in \{1,2, \ldots, \lfloor e^{nR} \rfloor \}\).
Denote by \(\mathcal{L}_n(l)\) the set of the \(\lfloor e^{nR} \rfloor\) choices under \(L = l\).
From Lemma \ref{lem:connection_log_loss_list_decoding}, there is a soft estimate \(\hat{P}_l\) that \(R\)-covers \(\mathcal{L}_n(l)\), e.g., we can choose \(\hat{P}_l\) to be uniformly distributed over \(\mathcal{L}_n(l)\).
Therefore, we now can obtain \(e^{nB}\) soft estimates \(\hat{P}_l\) for \(l \in [e^{nB}]\).
It is clear that these soft estimates together with the encoder \( \varphi_n(Y^n)\) form a code for the IB source under rate-distortion pair \((B, R)\).

As for the other direction, consider a code for the IB source under rate-distortion pair \((B, R)\).
Denote by  \( \varphi_n(Y^n)\) the encoder of the code, and \(\hat{P}_l\) for \(l \in [e^{nB}]\) the soft estimates.
Let \(\mathcal{L}_n(l)\) be the set \(R\)-covered by \(\hat{P}_l\) and hence \(|\mathcal{L}_n(l)| \leq \lfloor e^{nR} \rfloor\).
Now, we return our attention to source coding with a helper at both sides; notice that we can construct an encoder-decoder pair \((\tilde{f}_n, \phi_n)\) by choosing a one-to-one mapping between \(\mathcal{L}_n(l) \) and \(\lfloor e^{nR} \rfloor\) for every \(l \in [e^{nB}]\).
\((\tilde{f}_n, \phi_n)\) together with \(\varphi_n\) clearly form an \((n, R, B)\) code for the latter problem.

\subsection{Error Probability}
Recall that \(G: \mathcal{X}^n \to \{1, 2, \ldots, |\mathcal{X}^n|\}\) is the probability rank function that ranks sequences in decreasing order of their probabilities.
For source coding with a helper at both sides, consider the joint distribution \(P_{X^nL}\) induced by \(P_{XY}^n\) and the helper's encoder \( \varphi_n(Y^n)\).
Then, we see that
\begin{align}
    \P \{ X^n \neq \hat{X}^n \} & = \sum_{l = 1}^{2^{nB}} P_L(l) \P \{ X^n \neq \hat{X}^n | L = l \} \\
    & = \sum_{l = 1}^{2^{nB}} P_L(l) \P \{ X^n \neq \phi_n(M, l) | L = l \} \\
    & = \sum_{l = 1}^{2^{nB}} P_L(l) \P \{ X^n \neq \phi_n( \tilde{f}_n(X^n, l) , l) | L = l \} \\
    & \geq \sum_{l = 1}^{2^{nB}} P_L(l) \P \{  G(X^n) > \lfloor e^{nR} \rfloor | L = l \}, \label{chp:connection_WAK_IB_source/sourc_coding_helper_double_sides}
\end{align}
where  in \eqref{chp:connection_WAK_IB_source/sourc_coding_helper_double_sides}, \(G(x^n)\) is the probability rank function under \(P_{X^n|L = l}\).
\eqref{chp:connection_WAK_IB_source/sourc_coding_helper_double_sides} holds because the noiseless link between the transmitter and receiver has rate-limit \(R\); thus the coding scheme minimizing the decoding error probability is the one where both the transmitter \(\tilde{f}_n(X^n, l)\) and receiver \(\phi_n(M, l)\) perform a one-to-one mapping over the top \( \lfloor e^{nR} \rfloor\) sequences with the highest probabilities under \(P_{X^n | L = l}\).

From the converse proof of Theorem \ref{thm:exponent_remote_log_loss} in Section \ref{sec:chp_remote_log_loss/converse/lower_bound_excess_distortion_probability}, we see that \eqref{chp:connection_WAK_IB_source/sourc_coding_helper_double_sides} is the same as the minimum excess-distortion probability of the IB source under rate-distortion pair \((B, R)\), since the optimal soft estimate is also the one supported on the top \( \lfloor e^{nR} \rfloor\) sequences with the highest probabilities under \(P_{X^n | L = l}\).
In conclusion, source coding with a helper at both sides is equivalent to the IB source, as each code for one problem can be transformed into a code for the other, leading to the same error probability for both problems.

\section{Concluding Remarks}
\label{sec:conclusions}

We studied the excess distortion probability of the IB source, and established the exact error exponent and strong converse exponent by deriving matching upper and lower exponential bound.
The achievable schemes for both cases were designed using the type covering lemma, with the excess distortion probabilities analyzed by investigating the intersection of conditional type classes.
In particular, for the strong converse exponent, we handled the rate-limit \(R\) by choosing the best \(e^{nR}\) codewords from the covering set.
When deriving the converse bound for the error exponent, we further developed the conventional sphere packing approach by introducing a conditional mutual information term, also known as the soft Markov chain.
In establishing the converse bound for the strong converse exponent, we extended a known single-letterization technique to multiterminal settings by introducing an auxiliary random variable.

We next established a code-level connection between the IB source and the WAK problem.
We showed that the IB source under the excess-distortion criterion is equivalent to source coding with a helper at both sides, which enables us to uncover the link to the WAK problem.
From this, we re-drived the best known sphere packing exponent of the WAK problem and provided it with an operational interpretation.

We now conclude by discussing potential future work.
The CEO problem \cite{bergerCEOProblem1996} is a multiterminal variant of remote lossy source coding, where the receiver now has access to rate-limited descriptions from multiple helpers, each observing a noisy version of the source.
As the exact error exponent and strong converse exponent for the IB source have been derived in this paper, it follows naturally to study the corresponding exponents for its multiterminal variant, the CEO problem under logarithmic loss.
The rate region of the latter problem has been fully characterized by Courtade and Weissman in \cite[Section III]{courtadeMultiterminalSourceCoding2014}, suggesting that tight exponential error bounds may also be derivable.
Thus, it is worth exploring the parallel exponents for the CEO problem under logarithmic loss.
Besides the CEO problem, it is also of interest to apply the converse techniques developed in this paper to other multiterminal coding problems, and seek to derive improved exponential error bounds for them.

\appendices

\section{Proofs of Propositions and Corollaries}
\label{sec:proofs_prop_cor}

We prove the propositions and corollaries in this appendix.

\subsection{Proof of Proposition \ref{prop:remote_log_loss_exponent_rate_distortion_function}}
\label{sec:chp_remote_log_loss/exponent_rate_distortion_function}
We show that \(E(R, \Delta) = 0\) if and only if \(R \leq R(\Delta)\).
For fixed \(Q_Y\), define
\begin{equation}
    E(Q_Y, R, \Delta) \triangleq \max_{\substack{  Q_{U|Y}:  \\ I(Q_Y, Q_{U|Y}) \leq R }   }  \min_{ \substack{  Q_{X|YU} : \\ H_Q(X|U) \geq \Delta }   } D(Q_{XYU} \| P_{XY}Q_{U|Y}).
\end{equation}
Hence, we have
\begin{equation}
    E(R, \Delta) = \min_{Q_Y} E(Q_Y, R, \Delta).
\end{equation}
For a fixed \(\Delta\), it is clear that \(E(R, \Delta) = 0\) if and only if \(E(P_Y, R, \Delta) = 0\), since \(E(Q_Y, R, \Delta) > 0\) if \(Q_Y \neq P_{Y}\).
Due to the maximization over \(Q_{U|Y}\), \(E(P_Y, R, \Delta) = 0\) is true if and only if for every \(Q_{U|Y}\) such that \(I(P_Y, Q_{U|Y}) \leq R\), we can choose \(Q_{XYU} = P_{XY}Q_{U|Y}\), i.e., for every \(Q_{U|Y}\) such that \(I(P_Y, Q_{U|Y}) \leq R\), we have \(H(X|U) \geq \Delta\) under the joint distribution \(P_{XYU} = P_{XY}Q_{U|Y}\).
This happens when
\begin{equation}
    R \leq  \min_{Q_{U|Y}: H(X|U) \leq \Delta} I(P_Y, Q_{U|Y}) = R(\Delta),
\end{equation}
which completes the proof.

\subsection{Proof of Corollary \ref{cor:exponent_lossy_log_loss}}
\label{sec:chp_remote_log_loss/lossy_speical_case}
Recall the method of Lagrangian multipliers, from which we have
\begin{align}
    \min_{x: g(x) \leq t} f(x) & = \min_{x} \max_{\rho \geq 0}f(x) + \rho(g(x) - t), \label{eq:chp_remote_log_loss/lagrangian_multipliers_min}
\end{align}
if \(f(x) > - \infty\).
For example, \eqref{eq:chp_remote_log_loss/lagrangian_multipliers_min} can be verified by noticing \( \max_{\rho \geq 0} f(x) + \rho(g(x) - t) = \infty\) for \(x\) such that \(g(x) > t\).
First, we write
\begin{align}
    & E(R, \Delta) \nonumber \\
    & = \min_{Q_Y} \! \! \! \max_{\substack{  Q_{U|Y}:  \\ I(Q_Y, Q_{U|Y}) \leq R }   }  \min_{ \substack{  Q_{X|YU} : \\ H_Q(X|U) \geq \Delta }   } D(Q_{XYU} \| P_{XY}Q_{U|Y}) \\
    & = \min_{Q_Y} \! \! \! \max_{\substack{  Q_{U|Y}:  \\ I(Q_Y, Q_{U|Y}) \leq R }   }  \min_{ \substack{  Q_{X|YU} : \\ H_Q(X|U) \geq \Delta }   } D(Q_{X|YU} Q_Y Q_{U|Y}  \| P_{X|Y} P_{Y} Q_{U|Y}) \\
    & = \min_{Q_Y} \! \! \! \max_{\substack{  Q_{U|Y}:  \\ I(Q_Y, Q_{U|Y}) \leq R }   }  \min_{ \substack{  Q_{X|YU} : \\ H_Q(X|U) \geq \Delta }   } D(Q_Y \| P_{Y}) + D(Q_{X|YU}Q_Y Q_{U|Y}  \| P_{X|Y} Q_{Y} Q_{U|Y}) \nonumber \\
    & = \min_{Q_Y} D(Q_Y \| P_{Y}) + f(Q_Y),
\end{align}
where
\begin{equation}
    f(Q_Y) \triangleq \hspace{-10pt} \max_{\substack{  Q_{U|Y}:  \\ I(Q_Y, Q_{U|Y}) \leq R }   }  \min_{ \substack{  Q_{X|YU} : \\ H_Q(X|U) \geq \Delta }   } \hspace{-5pt} D(Q_{X|YU} Q_Y Q_{U|Y}  \| P_{X|Y} Q_{Y} Q_{U|Y}).
\end{equation}
Since \(\P\{X=Y\} = 1\), i.e., \(P_{X|Y}\) is an identity mapping, we see that
\begin{align}
    & f(Q_Y) \nonumber \\
    & = \hspace{-14pt} \max_{\substack{  Q_{U|Y}:  \\ I(Q_Y, Q_{U|Y}) \leq R }   } \hspace{-10pt} \min_{ \substack{  Q_{X|YU} }   } \max_{\rho \geq 0} D(Q_{X|YU} Q_Y Q_{U|Y} \| P_{X|Y} Q_{Y} Q_{U|Y}) + \rho( \Delta - H_Q(X|U) ) \label{eq:chp_remote_log_loss/lossy_speical_case/lagrangian}\\
    & = \hspace{-14pt} \max_{\substack{  Q_{U|Y}:  \\ I(Q_Y, Q_{U|Y}) \leq R }   } \hspace{-10pt} \max_{\rho \geq 0} \min_{ \substack{  Q_{X|YU} }   }  D(Q_{X|YU} Q_Y Q_{U|Y} \| P_{X|Y} Q_{Y} Q_{U|Y}) + \rho( \Delta - H_Q(X|U) ) \label{eq:chp_remote_log_loss/lossy_speical_case/minimax} \\
    & = \max_{\substack{  Q_{U|Y}:  \\ I(Q_Y, Q_{U|Y}) \leq R }   } \max_{\rho \geq 0}  \rho( \Delta - H_Q(Y|U) ) \label{eq:chp_remote_log_loss/lossy_speical_case/identity_mapping} \\
    & = \max_{\substack{  Q_{U|Y}:  \\ I(Q_Y, Q_{U|Y}) \leq R }   } \max_{\rho \geq 0}  \rho( \Delta + I(Q_Y, Q_{U|Y}) - H(Q_Y) ) \\
    & = \max_{\rho \geq 0} \max_{\substack{  Q_{U|Y}:  \\ I(Q_Y, Q_{U|Y}) \leq R }   }  \rho( \Delta + I(Q_Y, Q_{U|Y}) - H(Q_Y) ) \\
    & = \max_{\rho \geq 0} \rho ( \Delta + R - H(Q_Y) ).
\end{align}
where \eqref{eq:chp_remote_log_loss/lossy_speical_case/lagrangian} follows from \eqref{eq:chp_remote_log_loss/lagrangian_multipliers_min};
\eqref{eq:chp_remote_log_loss/lossy_speical_case/minimax} results from the minimax theorem, as it can be verified that both the divergence term and \(-H_Q(X|U)\) are convex in \(Q_{X|YU}\);
\eqref{eq:chp_remote_log_loss/lossy_speical_case/identity_mapping} is because the minimization over \(Q_{XYU}\) in \eqref{eq:chp_remote_log_loss/lossy_speical_case/minimax} is achieved if and only if \(Q_{X|YU} = P_{X|Y}\) due to the identity mapping (otherwise the value of the divergence will be infinity); and we have \(Q_{YU} = Q_Y Q_{U|Y}\) in \eqref{eq:chp_remote_log_loss/lossy_speical_case/identity_mapping}.
Hence, if \(\P\{X=Y\} = 1\), then we can assert that
\begin{align}
    E(R, \Delta) & = \min_{Q_Y} D(Q_Y \| P_{Y}) + f(Q_Y) \\
    & = \min_{ Q_Y } D(Q_Y \| P_{Y}) + \max_{\rho \geq 0} \rho ( \Delta + R - H(Q_Y) ) \\
    & = \min_{Q_Y: H(Q_Y) \geq R + \Delta} D(Q_Y \| P_Y),
\end{align}
which completes the proof.

\subsection{Proof of Proposition \ref{prop:chp_strong_converse_remote_log_loss_rate_regime}}
\label{sec:chp_strong_converse_remote_log_loss/rate_regime}
The proof is reminiscent of a similar proof in \cite{merhavGeneralizedStochasticLikelihood2017}.
Consider the identity that for every constant \(a\), we have \(|a|^{+} = \max_{\rho \in [0,1]} \rho a\).
From this, we can write
\begin{align}
    F(R, \Delta) & = \min_{ \substack{ Q_{XYU} : \\ H_{Q}(X|U) \leq \Delta } } D(Q_{XYU} \| P_{XY}Q_{U|Y}) + | I_{Q}(Y;U) - R |^{+} \\
    & = \min_{ \substack{ Q_{XYU} : \\ H_{Q}(X|U) \leq \Delta } } \max_{\rho \in [0,1] } D(Q_{XYU} \| P_{XY}Q_{U|Y}) + \rho (I_{Q}(Y;U) - R).
\end{align}
Hence, \(F(R, \Delta) > 0\) means that for every \(Q_{XYU}\) satisfying \(H_Q(X|U) \leq \Delta\), there exists a \(\rho \in [0,1]\) such that
\begin{equation}
    D(Q_{XYU} \| P_{XY}Q_{U|Y}) + \rho (I_{Q}(Y;U) - R) > 0,
\end{equation}
i.e.,
\begin{equation}
    R < \frac{1}{\rho} D(Q_{XYU} \| P_{XY}Q_{U|Y}) + I_Q(Y;U). \label{eq:chp_stong_converse_remote_log_loss/rate_regime/R_bound_1}
\end{equation}
Therefore, \(F(R, \Delta) > 0\) is equivalent to
\begin{align}
    R & < \min_{ \substack{ Q_{XYU} : \\ H_{Q}(X|U) \leq \Delta } } \max_{\rho \in [0,1] }  \frac{1}{\rho} D(Q_{XYU} \| P_{XY}Q_{U|Y}) + I_Q(Y;U) \label{eq:chp_stong_converse_remote_log_loss/rate_regime/R_bound_2} \\
    & = \min_{ \substack{ P_{U|Y} : \\ H(X|U) \leq \Delta } } I(Y;U) \label{eq:chp_stong_converse_remote_log_loss/rate_regime/minimization_solution} \\
    & = R(\Delta),
\end{align}
where in \eqref{eq:chp_stong_converse_remote_log_loss/rate_regime/R_bound_2} the minimization over \(Q_{XYU}\) is because \eqref{eq:chp_stong_converse_remote_log_loss/rate_regime/R_bound_1} holds for every \(Q_{XYU}\) and the maximization over \(\rho\) is due to the existence of such \(\rho\); in \eqref{eq:chp_stong_converse_remote_log_loss/rate_regime/minimization_solution} we notice that the minimum in \eqref{eq:chp_stong_converse_remote_log_loss/rate_regime/R_bound_2} is achieved if and only if \(D(Q_{XYU} \| P_{XY}Q_{U|Y}) = 0\) due to the maximization over \(\rho \in [0,1]\) (in particular \(\rho = 0\)), which gives rise to \(Q_{XY} = P_{XY}\) and the Markov chain \(X \to Y \to U\).

\subsection{Proof of Corollary \ref{cor:strong_converse_exponent_lossy_log_loss}}
\label{sec:chp_strong_converse_remote_log_loss/lossy_speical_case}
Recall \eqref{eq:chp_remote_log_loss/lagrangian_multipliers_min}, i.e.,
\begin{align}
    \min_{x: g(x) \leq t} f(x) & = \min_{x} \max_{\rho \geq 0}f(x) + \rho(g(x) - t), \label{eq:chp_strong_converse_remote_log_loss/lagrangian_multipliers_min}
\end{align}
if \(f(x) > - \infty\).
First, we write
\begin{align}
    & F(R, \Delta) \nonumber \\
    & = \min_{ \substack{ Q_{XYU} : \\ H_{Q}(X|U) \leq \Delta } } D(Q_{XYU} \| P_{XY}Q_{U|Y}) + | I_{Q}(Y;U) - R |^{+}\\
    & = \min_{ \substack{ Q_{XYU} : \\ H_{Q}(X|U) \leq \Delta } } D(Q_{X|YU} Q_Y Q_{U|Y}  \| P_{X|Y} P_{Y} Q_{U|Y})+ | I_{Q}(Y;U) - R |^{+} \\
    & = \min_{ \substack{ Q_{XYU} : \\ H_{Q}(X|U) \leq \Delta } } D(Q_Y \| P_{Y}) + D(Q_{X|YU}Q_Y Q_{U|Y}  \| P_{X|Y} Q_{Y} Q_{U|Y}) + | I_{Q}(Y;U) - R |^{+}\nonumber \\
    & = \min_{Q_Y, Q_{U|Y}} D(Q_Y \| P_{Y}) + | I_{Q}(Y;U) - R |^{+} + f(Q_Y, Q_{U|Y}),
\end{align}
where
\begin{equation}
    f(Q_Y, Q_{U|Y}) \triangleq   \min_{ \substack{  Q_{X|YU} : \\ H_Q(X|U) \leq \Delta }   } \hspace{-5pt} D(Q_{X|YU} Q_Y Q_{U|Y}  \| P_{X|Y} Q_{Y} Q_{U|Y}).
\end{equation}
Since \(\P\{X=Y\} = 1\), i.e., \(P_{X|Y}\) is an identity mapping, we see that
\begin{align}
    & f(Q_Y, Q_{U|Y}) \nonumber \\
    & =   \min_{ \substack{  Q_{X|YU} }   } \max_{\rho \geq 0} D(Q_{X|YU} Q_Y Q_{U|Y} \| P_{X|Y} Q_{Y} Q_{U|Y}) + \rho( H_Q(X|U) - \Delta ) \label{eq:chp_strong_converse_remote_log_loss/lossy_speical_case/lagrangian}\\
    & =   \max_{\rho \geq 0} \min_{ \substack{  Q_{X|YU} }   }  D(Q_{X|YU} Q_Y Q_{U|Y} \| P_{X|Y} Q_{Y} Q_{U|Y}) + \rho( H_Q(X|U) - \Delta ) \label{eq:chp_strong_converse_remote_log_loss/lossy_speical_case/minimax} \\
    & = \max_{\rho \geq 0}  \rho( H_Q(Y|U) - \Delta ), \label{eq:chp_strong_converse_remote_log_loss/lossy_speical_case/identity_mapping}
\end{align}
where \eqref{eq:chp_strong_converse_remote_log_loss/lossy_speical_case/lagrangian} follows from \eqref{eq:chp_strong_converse_remote_log_loss/lagrangian_multipliers_min};
\eqref{eq:chp_strong_converse_remote_log_loss/lossy_speical_case/minimax} results from the minimax theorem;
\eqref{eq:chp_strong_converse_remote_log_loss/lossy_speical_case/identity_mapping} is because the minimization over \(Q_{XYU}\) in \eqref{eq:chp_strong_converse_remote_log_loss/lossy_speical_case/minimax} is achieved if and only if \(Q_{X|YU} = P_{X|Y}\) due to the identity mapping (otherwise the value of the divergence will be infinity); and we have \(Q_{YU} = Q_Y Q_{U|Y}\) in \eqref{eq:chp_strong_converse_remote_log_loss/lossy_speical_case/identity_mapping}.
Hence, if \(\P\{X=Y\} = 1\), then we can assert that
\begin{align}
    F(R, \Delta) & = \min_{Q_Y, Q_{U|Y}} D(Q_Y \| P_{Y}) + | I_{Q}(Y;U) - R |^{+} + f(Q_Y, Q_{U|Y}) \\
    & = \min_{Q_Y, Q_{U|Y}} D(Q_Y \| P_{Y}) + | I_{Q}(Y;U) - R |^{+} + \max_{\rho \geq 0} \rho ( H_Q(Y|U) - \Delta ) \\
    & = \min_{ \substack{ Q_Y , Q_{U|Y}: \\ H_Q(U|Y) \leq \Delta } } D(Q_Y \| P_Y) + | I_{Q}(Y;U) - R |^{+} \\
    & = \min_{Q_Y} D(Q_Y \| P_Y) +  \min_{ \substack{  Q_{U|Y}: \\ H_Q(Y|U) \leq \Delta } } | I_{Q}(Y;U) - R |^{+} \\
    & = \min_{Q_Y} D(Q_Y \| P_Y) +  \min_{ \substack{  Q_{U|Y}: \\ H_Q(Y|U) \leq \Delta } } | H_Q(Y) - H_Q(Y|U) - R |^{+} \\
    & = \min_{Q_Y} D(Q_Y \| P_Y) + | H_Q(Y)  - R - \Delta |^{+},
\end{align}
which completes the proof.

\section{Proof of Lemma \ref{lem:probability_intersection_conditional_type_class}}
\label{apd:proof_probability_intersection_conditional_type_class}
Notice that
\begin{equation}
    \mathcal{T}_n(Q_{Y|U} | \bm{u}) = \bigcup_{Q_{Y|X}} \mathcal{T}_n(Q_{Y|X} | \bm{x}) \cap \mathcal{T}_n(Q_{Y|U} | \bm{u}).
\end{equation}
Thus, it follows that
\begin{align}
    P_{Y|X}^n[  \mathcal{T}_n(Q_{Y|U} | \bm{u})   | \bm{x} ] & \leq \sum_{Q_{Y|X}} P_{Y|X}^n[ \mathcal{T}_n(Q_{Y|X} | \bm{x}) \cap \mathcal{T}_n(Q_{Y|U} | \bm{u})   | \bm{x} ].
\end{align}
Consider a fixed conditional type \(Q_{Y|X}\).
For every \(\bm{y} \in \mathcal{T}_n(Q_{Y|X} | \bm{x})\), we have
\begin{equation}
    P_{Y|X}^n(\bm{y} | \bm{x}) = \exp\{-n( D(Q_{Y|X} \| P_{Y|X} | \hat{P}_{\bm{x}}) + H(Q_{Y|X} | \hat{P}_{\bm{x}}) )\}.
\end{equation}
Then, from Lemma \ref{lem:intersection_conditional_type_class}, we see that
\begin{align}
    & P_{Y|X}^n[ \mathcal{T}_n(Q_{Y|X} | \bm{x}) \cap \mathcal{T}_n(Q_{Y|U} | \bm{u})   | \bm{x} ] \nonumber \\
    & = \exp\{-n( D(Q_{Y|X} \| P_{Y|X} | \hat{P}_{\bm{x}}) + H(Q_{Y|X} | \hat{P}_{\bm{x}}) )\} \times | \mathcal{T}_n(Q_{Y|X} | \bm{x}) \cap \mathcal{T}_n(Q_{Y|U} | \bm{u})  | \nonumber \\
    & \ndot{=} \max_{Q_{XYU}} \exp\{-n( D(Q_{Y|X} \| P_{Y|X} | \hat{P}_{\bm{x}}) + I_{Q}(Y;U|X) )\},
\end{align}
where the maximization is over all \(Q_{XYU}\) satisfying the three marginal distribution conditions
\begin{equation*}
    Q_{XY} = \hat{P}_{\bm{x}} \times Q_{Y|X}, \ Q_{YU} = \hat{P}_{\bm{u}} \times Q_{Y|U},\ \text{and} \ Q_{XU} = \hat{P}_{\bm{x} \bm{u}}.
\end{equation*}
Hence, we obtain
\begin{align}
    & P_{Y|X}^n[  \mathcal{T}_n(Q_{Y|U} | \bm{u})   | \bm{x} ] \nonumber \\
    & \leq \sum_{Q_{Y|X}} P_{Y|X}^n[ \mathcal{T}_n(Q_{Y|X} | \bm{x}) \cap \mathcal{T}_n(Q_{Y|U} | \bm{u})   | \bm{x} ] \\
    & \ndot{=} \max_{Q_{Y|X}} \max_{Q_{XYU}} \exp\{-n( D(Q_{Y|X} \| P_{Y|X} | \hat{P}_{\bm{x}}) + I_{Q}(Y;U|X) )\} \\
    & = \max_{Q_{XYU}} \exp\{-n( D(Q_{Y|X} \| P_{Y|X} | \hat{P}_{\bm{x}}) + I_{Q}(Y;U|X) )\}, \label{eq:chp_preliminaries/combination_maximizations}
\end{align}
where in \eqref{eq:chp_preliminaries/combination_maximizations} we combine the two maximizations, i.e., now we consider all \(Q_{XYU}\) satisfying the two marginal distribution conditions
\begin{equation*}
    Q_{YU} = \hat{P}_{\bm{u}} \times Q_{Y|U} \quad \text{and} \quad Q_{XU} = \hat{P}_{\bm{x} \bm{u}}.
\end{equation*}
In the same fashion, we also have
\begin{align}
    & P_{Y|X}^n[  \mathcal{T}_n(Q_{Y|U} | \bm{u})   | \bm{x} ] \nonumber \\
    & \geq \max_{Q_{Y|X}} P_{Y|X}^n[ \mathcal{T}_n(Q_{Y|X} | \bm{x}) \cap \mathcal{T}_n(Q_{Y|U} | \bm{u})   | \bm{x} ] \\
    & \ndot{=} \max_{Q_{XYU}} \exp\{-n( D(Q_{Y|X} \| P_{Y|X} | \hat{P}_{\bm{x}}) + I_{Q}(Y;U|X) )\},
\end{align}
which completes the proof.

\begin{remark}
\label{rem:apd_proof_intersection_conditional_type_class}
There is a different proof for Lemma \ref{lem:probability_intersection_conditional_type_class}.
For a fixed \(\bm{x}, \bm{u}, \) and \(Q_{Y|U}\), we have
\begin{equation}
    \mathcal{T}_n(Q_{Y|U} | \bm{u} ) = \bigcup_{Q_{Y|XU}} \mathcal{T}_n(Q_{Y|XU} | \bm{x}, \bm{u}),
\end{equation}
where we consider all conditional type \(Q_{Y|XU}\) such that the joint distribution \(Q_{XYU} = \hat{P}_{\bm{x} \bm{u}}Q_{Y|XU}\) satisfies the marginal distribution condition \(Q_{YU} = \hat{P}_{\bm{u}} \times Q_{Y|U}\).
It follows that
\begin{align}
    P_{Y|X}^n[ \mathcal{T}_n(Q_{Y|U} | \bm{u}) | \bm{x} ] & \ndot{=} \max_{Q_{Y|XU}} \exp\{-nD(Q_{Y|XU} \| P_{Y|X} | \hat{P}_{\bm{x} \bm{u}})\} \label{eq:apd_probability_intersection_conditional_type_class_alternative_proof} \\
    & = \max_{Q_{Y|XU}} \exp\{-n( D(Q_{Y|X} \| P_{Y|X} | \hat{P}_{\bm{x}}) + I_{Q}(Y;U|X) )\}.
\end{align}
where in \eqref{eq:apd_probability_intersection_conditional_type_class_alternative_proof} we can think of \(P_{Y|X}\) as a channel \(\tilde{P}_{Y|XU} \triangleq P_{Y|X}\).
While this second proof is perhaps more concise, we prefer the first one as it better highlights the meaning of the term \(I_{Q}(Y;U|X)\) in the exponent.
\end{remark}

\section{Proof of Lemma \ref{lem:reverse_markov_inequality} and Lemma \ref{lem:reverse_wyner_inequality}}
\label{apd:proofs_reverse_markov_wyner_inequality}

\subsection{Proof of Lemma \ref{lem:reverse_markov_inequality}}
Consider the random variable \(\tilde{X} \triangleq d- X\).
From Markov's inequality, we have
\begin{equation}
    \P\{X \leq a\} = \P \{ \tilde{X} \geq d- a \} \leq \frac{\E[\tilde{X}]}{d-a} = \frac{d- \E [X]}{d - a}.
\end{equation}
It follows that
\begin{equation}
    \P\{X > a\} = 1- \P\{X \leq a\} \geq \frac{ \E[X] - a }{ d - a}.
\end{equation}


\subsection{Proof of Lemma \ref{lem:reverse_wyner_inequality}}
Notice that
\begin{align}
    H(X) - \E[\log f(X)] & = \sum_{x \in \mathcal{X}} P_X(x) \log \frac{1}{P_X(x) f(x)} \\
    & \leq \log \sum_{x \in \mathcal{X}} \frac{1}{f(x)} \\
    & \leq \log (1 + \log |\mathcal{X}|),
\end{align}
where the last step follows from \(\sum_{i=1}^{n} 1/i \leq 1 + \log n\).
This short proof follows \cite{dunhamOptimalNoiselessCoding1980}.
For the special case where \(f(x)\) is the probability rank function, i.e., \(f(x) = G(x)\),  Wyner \cite{wynerUpperBoundEntropy1972} showed that \(\E[\log G(X)] \leq H(X)\).
Rényi entropy extensions of the two inequalities appeared in, e.g., \cite[Theorem 1]{courtadeCumulantGeneratingFunction2014}.

\section{Proofs of Cardinality Bounds}

\subsection{Proof of the Cardinality Bound in Theorem \ref{thm:exponent_remote_log_loss}}
\label{apd:cardinality_bound_for_error_exponent_remote_log_loss}
For any auxiliary alphabet \(\mathcal{U}\) with \(|\mathcal{U}| > 2 |\mathcal{X}| |\mathcal{Y}| + 2\),  let
\begin{equation}
    E(Q_Y, R, \Delta) \triangleq \max_{\substack{  Q_{U|Y}:  \\ I(Q_Y, Q_{U|Y}) \leq R }   }  \min_{ \substack{  Q_{X|YU} : \\ H_Q(X|U) \geq \Delta }   } D(Q_{XYU} \| P_{XY}Q_{U|Y}). \label{eq:apd_cardinality_bound_E_IB_source_error_exponent}
\end{equation}
For an auxiliary alphabet \(\mathcal{U}^{\prime}\) with \(|\mathcal{U}^{\prime}| \leq 2 |\mathcal{X}| |\mathcal{Y}| + 2\), let
\begin{equation}
    E^{\prime}(Q_Y, R, \Delta) \triangleq \max_{\substack{  Q_{U^{\prime}|Y}:  \\ I(Q_Y, Q_{U^{\prime}|Y}) \leq R }   }  \min_{ \substack{  Q_{X|YU^{\prime}} : \\ H_Q(X|U^{\prime}) \geq \Delta }   } D(Q_{XYU^{\prime}} \| P_{XY}Q_{U^{\prime}|Y}).
\end{equation}
The task here is to show that
\begin{equation}
    \min_{Q_Y} E(Q_Y, R, \Delta)= \min_{Q_Y} E^{\prime}(Q_Y, R, \Delta).
\end{equation}
We accomplish this by showing that \(E(Q_Y, R, \Delta) = E^{\prime}(Q_Y, R, \Delta)\) holds for every \(Q_Y\).
It is obvious that \(E(Q_Y, R, \Delta) \geq E^{\prime}(Q_Y, R, \Delta)\), due to \(|\mathcal{U}| > |\mathcal{U}^{\prime}|\).
Hence, what remains is to show that \(E(Q_Y, R, \Delta) \leq E^{\prime}(Q_Y, R, \Delta)\).

To this end, assume \(Q_{U|Y}^{\ast}\) is a solution to the RHS of \eqref{eq:apd_cardinality_bound_E_IB_source_error_exponent}, i.e., \(I(Q_Y, Q_{U|Y}^{\ast}) \leq R\) and
\begin{equation}
    E(Q_Y, R, \Delta) =  \min_{ \substack{  Q_{X|YU} : \\ H_Q(X|U) \geq \Delta }   } D( Q_Y Q_{U|Y}^{\ast} Q_{X|YU} \| P_{XY}Q_{U|Y}^{\ast}).
\end{equation}
Let \(Q_{U}^{\ast} = Q_Y \cdot Q_{U|Y}^{\ast}\) and denote by \(Q_{Y|U}^{\ast}\) the reverse channel induced by \(Q_Y\) and \(Q_{U|Y}^{\ast}\).
Hence, we have
\begin{equation}
    E(Q_Y, R, \Delta) =  \min_{ \substack{  Q_{X|YU} : \\ H_Q(X|U^{\ast}) \geq \Delta }   } D( Q_U^{\ast}Q_{Y|U}^{\ast} Q_{X|YU} \| Q_U^{\ast}Q_{Y|U}^{\ast} P_{X|Y}). \label{eq:apd_cardinality_bound_remote_log_loss}
\end{equation}
Since the object function \(D( Q_U^{\ast}Q_{Y|U}^{\ast} Q_{X|YU} \| Q_U^{\ast}Q_{Y|U}^{\ast} P_{X|Y})\) is strictly convex in \(Q_{X|YU}\) and  \(\{ Q_{X|YU} : H_Q(X|U^{\ast}) \geq \Delta\}\) is also a convex set, we can use Lagrangian under the KKT conditions to solve the right hand side of \eqref{eq:apd_cardinality_bound_remote_log_loss}.
Denote by \(\mathcal{L}(Q_U^{\ast} Q_{Y|U}^{\ast} Q_{X|YU})\) the corresponding Lagrangian  dual function, and define
\begin{equation}
    \varphi(Q_U^{\ast} Q_{Y|U}^{\ast} Q_{X|YU} ) \triangleq \pdv{\mathcal{L}(Q_U^{\ast} Q_{Y|U}^{\ast} Q_{X|YU})}{Q_{X|YU}}.
\end{equation}
Let \(Q_{X|YU}^{\ast}\) be the solution to the convex optimization problem, i.e.,
\begin{equation}
    \varphi(Q_U^{\ast} Q_{Y|U}^{\ast} Q_{X|YU}^{\ast} ) = 0.
\end{equation}
Since both the objection function \(D( Q_U^{\ast}Q_{Y|U}^{\ast} Q_{X|YU} \| Q_U^{\ast}Q_{Y|U}^{\ast} P_{X|Y})\) and the constraint \(H_Q(X|U^{\ast})\) are linear functions of \(Q_{U}^{\ast}\), we see that \(\varphi(Q_U^{\ast}Q_{Y|U}^{\ast} Q_{X|YU} )\) is also a linear function of \(Q_{U}^{\ast}\).
In other words, there exits some function \(\tilde{\varphi}( Q_{XY|U} )\) such that
\begin{equation}
    \varphi(Q_U^{\ast} Q_{Y|U}^{\ast} Q_{X|YU}^{\ast} ) = \sum_{u}Q_{U}^{\ast}(u) \tilde{\varphi}( Q_{XY|U}^{\ast}(\cdot | u) ) = 0,
\end{equation}
where \(Q_{XY|U}^{\ast} = Q_{Y|U}^{\ast} Q_{X|YU}^{\ast}\).

As \(Q_{U|Y}^{\ast}\) and \( Q_{X|YU}^{\ast}\) are the solutions to the optimizations, we have
\begin{align}
    E(Q_Y, R, \Delta) & = D( Q_Y Q_{U|Y}^{\ast} Q_{X|YU}^{\ast} \| P_{XY}Q_{U|Y}^{\ast}) \\
    & = D(Q_{XY}^{\ast} \| P_{XY}) + I_{Q^{\ast}} (X;U|Y) \\
    & = D(Q_{XY}^{\ast} \| P_{XY}) + H_{Q^{\ast}}(X|Y) - H_{Q^{\ast}} (X|YU).
\end{align}
Now, define
\begin{equation}
    f_{xy}(Q_{XY}) = Q_{XY}(x, y) \qquad \forall x, y
\end{equation}
as well as
\begin{align}
    f_1( Q_{XY} ) & = H_Q(X|Y), \\
    f_2 ( Q_{XY} ) & = H(Q_X), \\
    f_3( Q_{XY} ) & = H(Q_Y).
\end{align}
It follows that
\begin{align}
    \sum_{u}Q_{U}^{\ast}(u) \tilde{\varphi}( Q_{XY|U}^{\ast}(\cdot | u) ) & = 0, \label{eq:apd_cardinality_bound_remote_log_loss_Lagrangian} \\
    \sum_{u}Q_{U}^{\ast}(u) f_{xy}(Q_{XY|U}^{\ast}(\cdot | u)) & = Q_{XY}^{\ast}(x,y), \label{eq:apd_cardinality_bound_remote_log_loss_xy} \\
    \sum_{u}Q_{U}^{\ast}(u) f_{1}(Q_{XY|U}^{\ast}(\cdot | u)) & = H_{Q^{\ast}}(X|YU), \\
    \sum_{u}Q_{U}^{\ast}(u) f_{2}(Q_{XY|U}^{\ast}(\cdot | u)) & = H_{Q^{\ast}}(X|U), \\
    \sum_{u}Q_{U}^{\ast}(u) f_{3}(Q_{XY|U}^{\ast}(\cdot | u)) & = H_{Q^{\ast}}(Y|U).
\end{align}
Notice that \eqref{eq:apd_cardinality_bound_remote_log_loss_Lagrangian} includes \(|\mathcal{X}||\mathcal{Y}|\) functions and \eqref{eq:apd_cardinality_bound_remote_log_loss_xy} includes \(|\mathcal{X}||\mathcal{Y}| -1\) functions.\footnote{We can ignore one element from \(\mathcal{X} \times \mathcal{Y}\) because the sum of probabilities must be \(1\).}
From the support lemma \cite[Appendix C]{gamalNetworkInformationTheory2011}, we see that there exists a pair \((Q_{U^{\prime}}^{\ast}, Q_{XY|U^{\prime}}^{\ast})\) such that
\begin{align}
    \sum_{u^{\prime}}Q_{U^{\prime}}^{\ast}(u^{\prime}) \tilde{\varphi}( Q_{XY|U^{\prime}}^{\ast} (\cdot | u^{\prime}) ) & = 0, \label{eq:apd_cardinality_bound_remote_log_loss_Lagrangian_support_lemma} \\
    \sum_{u^{\prime}}Q_{U^{\prime}}^{\ast}(u^{\prime}) f_{xy}(Q_{XY|U^{\prime}}^{\ast}(\cdot | u^{\prime})) & = Q_{XY}^{\ast}(x,y) \\
    \sum_{u^{\prime}}Q_{U^{\prime}}^{\ast}(u^{\prime}) f_{1}(Q_{XY|U^{\prime}}^{\ast}(\cdot | u^{\prime})) & = H_{Q^{\ast}}(X|YU), \\
    \sum_{u^{\prime}}Q_{U^{\prime}}^{\ast}(u^{\prime}) f_{2}(Q_{XY|U^{\prime}}^{\ast}(\cdot | u^{\prime})) & = H_{Q^{\ast}}(X|U),\label{eq:apd_cardinality_bound_remote_log_loss_Delta_support_lemma} \\
    \sum_{u^{\prime}}Q_{U^{\prime}}^{\ast}(u^{\prime}) f_{3}(Q_{XY|U^{\prime}}^{\ast}(\cdot | u^{\prime})) & = H_{Q^{\ast}}(Y|U). \label{eq:apd_cardinality_bound_remote_log_loss_Y_U_support_lemma}
\end{align}
Let \(Q_{XYU^{\prime}}^{\ast} = Q_{U^{\prime}}^{\ast}Q_{XY|U^{\prime}}^{\ast} = Q_Y Q_{U^{\prime} | Y}^{\ast} Q_{X | Y U^{\prime}}^{\ast}\).
By \eqref{eq:apd_cardinality_bound_remote_log_loss_Lagrangian_support_lemma} and \eqref{eq:apd_cardinality_bound_remote_log_loss_Delta_support_lemma}, we see that \(Q_{U^{\prime} | Y}^{\ast}\) and \(Q_{X|YU^{\prime}}^{\ast}\) must satisfy
\begin{equation}
    Q_{X|YU^{\prime}}^{\ast} = \argmin_{ \substack{  Q_{X|YU^{\prime}} : \\ H_Q(X|U^{\prime}) \geq \Delta }   } D(Q_Y Q_{U^{\prime} | Y}^{\ast} Q_{X|YU^{\prime}} \| P_{XY}Q_{U^{\prime}|Y}^{\ast}).
\end{equation}
Since we consider the maximization over all \(Q_{U^{\prime} | Y}\) satisfying \(I(Q_Y, Q_{U^{\prime} | Y}) \leq R\) in \(E^{\prime}(Q_Y, R, \Delta)\) while we also have \(I(Q_Y, Q_{U^{\prime} | Y}^{\ast}) \leq R\), it follows that
\begin{equation}
    E^{\prime}(Q_Y, R, \Delta) \geq D(Q_Y Q_{U^{\prime} | Y}^{\ast} Q_{X|YU^{\prime}}^{\ast} \| P_{XY}Q_{U^{\prime}|Y}^{\ast}).
\end{equation}
Because
\begin{equation}
    D(Q_Y Q_{U^{\prime} | Y}^{\ast} Q_{X|YU^{\prime}}^{\ast} \| P_{XY}Q_{U^{\prime}|Y}^{\ast}) = E(Q_Y, R, \Delta),
\end{equation}
we conclude that
\begin{equation}
    E^{\prime}(Q_Y, R, \Delta) \geq E(Q_Y, R, \Delta),
\end{equation}
which completes the proof.

\subsection{Proof of the Cardinality Bound in Theorem \ref{thm:strong_converse_exponent_remote_log_loss}}
\label{apd:cardinality_bound_for_strong_converse_exponent_remote_log_loss}
For an arbitrary auxiliary alphabet \(\mathcal{U}\) satisfying \(|\mathcal{U}| > |\mathcal{X}||\mathcal{Y}| + 2\), we write
\begin{equation}
        F(R, \Delta) = \min_{ \substack{ Q_{XYU} : \\ H_{Q}(X|U) \leq \Delta } } D(Q_{XYU} \| P_{XY}Q_{U|Y}) + | I_{Q}(Y;U) - R |^{+}. \label{eq:apd_cardinality_bound_F_IB_source_strong_converse_exponent}
\end{equation}
For \(\mathcal{U}^{\prime}\) satisfying \(|\mathcal{U}^{\prime}| \leq |\mathcal{X}||\mathcal{Y}| + 2\), we write
\begin{equation}
        F^{\prime}(R, \Delta) = \min_{ \substack{ Q_{XYU^{\prime}} : \\ H_{Q}(X|U^{\prime}) \leq \Delta } } D(Q_{XYU^{\prime}} \| P_{XY}Q_{U^{\prime}|Y}) + | I_{Q}(Y;U^{\prime}) - R |^{+}.
\end{equation}
The task is to show that \(F(R, \Delta) = F^{\prime}(R, \Delta)\).
We clearly have \(F(R, \Delta) \leq F^{\prime}(R, \Delta)\) due to \(|\mathcal{U}| > |\mathcal{U}^{\prime}|\) and the minimization.
Hence, what remains is to show \(F(R, \Delta) \geq F^{\prime}(R, \Delta)\).

Assume \(Q_{XYU}^{\ast}\) is a solution to the RHS of  \eqref{eq:apd_cardinality_bound_F_IB_source_strong_converse_exponent}, i.e., \(H_{Q^{\ast}}(X|U) \leq \Delta\) and
\begin{align}
    F(R, \Delta) & = D(Q_{XYU}^{\ast} \| P_{XY}Q_{U|Y}^{\ast}) + | I_{Q^{\ast}}(Y;U) - R |^{+} \\
    & = D(Q_{XY}^{\ast} \| P_{XY}) + I_{Q^{\ast}}(X;U|Y) + | I_{Q^{\ast}}(Y;U) - R |^{+} \\
    & = D(Q_{XY}^{\ast} \| P_{XY}) + H_{Q^{\ast}}(X|Y) - H_{Q^{\ast}}(X|YU) + | I_{Q^{\ast}}(Y;U) - R |^{+}
\end{align}
Now, define
\begin{equation}
    f_{xy}(Q_{XY}) = Q_{XY}(x, y) \qquad \forall x, y
\end{equation}
as well as
\begin{align}
    f_1( Q_{XY} ) & = H_Q(X|Y), \\
    f_2 ( Q_{XY} ) & = H(Q_X), \\
    f_3( Q_{XY} ) & = H(Q_Y).
\end{align}
It follows that
\begin{align}
    \sum_{u}Q_{U}^{\ast}(u) f_{xy}(Q_{XY|U}^{\ast}(\cdot | u)) & = Q_{XY}^{\ast}(x,y), \label{eq:apd_cardinality_bound_strong_converse_remote_log_loss_xy} \\
    \sum_{u}Q_{U}^{\ast}(u) f_{1}(Q_{XY|U}^{\ast}(\cdot | u)) & = H_{Q^{\ast}}(X|YU), \\
    \sum_{u}Q_{U}^{\ast}(u) f_{2}(Q_{XY|U}^{\ast}(\cdot | u)) & = H_{Q^{\ast}}(X|U), \\
    \sum_{u}Q_{U}^{\ast}(u) f_{3}(Q_{XY|U}^{\ast}(\cdot | u)) & = H_{Q^{\ast}}(Y|U).
\end{align}
where \eqref{eq:apd_cardinality_bound_strong_converse_remote_log_loss_xy} includes \(|\mathcal{X}||\mathcal{Y}| - 1\) functions.
From the support lemma \cite[Appendix C]{gamalNetworkInformationTheory2011}, we see that there exists a pair \((Q_{U^{\prime}}^{\ast}, Q_{XY|U^{\prime}}^{\ast})\) with \(|\mathcal{U}^{\prime}| \leq |\mathcal{X}||\mathcal{Y}| + 2\) such that
\begin{align}
    \sum_{u^{\prime}}Q_{U^{\prime}}^{\ast}(u^{\prime}) f_{xy}(Q_{XY|U^{\prime}}^{\ast}(\cdot | u^{\prime})) & = Q_{XY}^{\ast}(x,y) \\
    \sum_{u^{\prime}}Q_{U^{\prime}}^{\ast}(u^{\prime}) f_{1}(Q_{XY|U^{\prime}}^{\ast}(\cdot | u^{\prime})) & = H_{Q^{\ast}}(X|YU), \\
    \sum_{u^{\prime}}Q_{U^{\prime}}^{\ast}(u^{\prime}) f_{2}(Q_{XY|U^{\prime}}^{\ast}(\cdot | u^{\prime})) & = H_{Q^{\ast}}(X|U),\label{eq:apd_cardinality_bound_strong_converse_remote_log_loss_Delta_support_lemma} \\
    \sum_{u^{\prime}}Q_{U^{\prime}}^{\ast}(u^{\prime}) f_{3}(Q_{XY|U^{\prime}}^{\ast}(\cdot | u^{\prime})) & = H_{Q^{\ast}}(Y|U). \label{eq:apd_cardinality_bound_strong_converse_remote_log_loss_Y_U_support_lemma}
\end{align}
It is clear that under \(Q_{XYU^{\prime}}^{\ast} = Q_{U^{\prime}}^{\ast} Q_{XY|U^{\prime}}^{\ast}\), we have \(H_{Q^{\ast}}(X|U^{\prime}) = H_{Q^{\ast}}(X|U) \leq \Delta\) and
\begin{align}
    F(R, \Delta) & = D(Q_{XYU}^{\ast} \| P_{XY}Q_{U|Y}^{\ast}) + | I_{Q^{\ast}}(Y;U) - R |^{+} \\
    & =  D(Q_{XYU^{\prime}}^{\ast} \| P_{XY}Q_{U^{\prime}|Y}^{\ast}) + | I_{Q^{\ast}}(Y;U^{\prime}) - R |^{+} \\
    & \geq \min_{ \substack{ Q_{XYU^{\prime}} : \\ H_{Q}(X|U^{\prime}) \leq \Delta } } D(Q_{XYU^{\prime}} \| P_{XY}Q_{U^{\prime}|Y}) + | I_{Q}(Y;U^{\prime}) - R |^{+} \\
    & = F^{\prime}(R, \Delta),
\end{align}
which completes the proof.

\bibliography{ref}
\bibliographystyle{IEEEtran}

\end{document}